\newsavebox{\ieeealgbox}
\newcolumntype{L}[1]{>{\raggedright\let\newline\\\arraybackslash\hspace{0pt}}m{#1}}
\newcolumntype{C}[1]{>{\centering\let\newline\\\arraybackslash\hspace{0pt}}m{#1}}
\newcolumntype{R}[1]{>{\raggedleft\let\newline\\\arraybackslash\hspace{0pt}}m{#1}}
 \theoremstyle{plain}
\newtheorem{theorem}{Proposition}
\theoremstyle{definition}
\newtheorem{definition}{Definition}
\newcommand{\Prob}{{\mathsf{p}}}   
\newcommand{\Probm}{{\mathsf{P}}} 
\newcommand{\Normal}{{\mathsf{N}}}
\newcommand{\Expect}{{\mathsf{E}}}
\DeclareMathOperator*{\minimize}{\mathrm{minimize}}
\DeclareMathOperator{\subjto}{\mathrm{subject \, to}}
 \DeclareMathOperator*{\argmin}{\mathrm{argmin}}
\newcommand{\tran}{^{\scriptscriptstyle{\sf T}}} 
\newcommand{\inv}{^{\scriptscriptstyle{-1}}}
\newcommand{\diag}{\mathrm{diag}}
\newcommand{\Graph}{\mathcal{G}}
\newcommand{\iset}{\mathcal{S}}
\newcommand{\nvert}{n}
\newcommand{\nsamp}{k}
\newcommand{\Adj}{{\mathbf{W}}} 
\newcommand{\Conn}{\mathbf{A}} 
\newcommand{\SLoop}{\mathbf{V}} 
\newcommand{\lap}{{\mathbf{L}}}
\newcommand{\eigM}{{\mathbf{\Lambda}}}
\newcommand{\eye}{{\mathbf{I}}} 
\newcommand{\Covar}{{\mathbf{\Sigma}}} 
\newcommand{\Precision}{{\mathbf{\Omega}}} 
\newcommand{\OD}{\boldsymbol{\Theta}} 
\newcommand{\vo}{\boldsymbol{\theta}}
\newcommand{\CS}{\mathbf{S}} 
\newcommand{\vx}{{\mathbf{x}}} 
\newcommand{\vy}{{\mathbf{y}}} 
\newcommand{\vr}{{\mathbf{r}}} 
\newcommand{\vp}{{\mathbf{p}}} 
\newcommand{\vones}{{\mathbf{1}}}
\newcommand{\vzeros}{{\mathbf{0}}}
\newcommand{\bigO}{O} 
\newcommand{\gbt}{{\mathbf{U}}} 
\begin{document}

%
\title{Graph-based Transforms for Video Coding}
%
%
%

\author{Hilmi~E.~Egilmez,~\IEEEmembership{Member,~IEEE}, Yung-Hsuan~Chao,~\IEEEmembership{Member,~IEEE}
        and~Antonio~Ortega,~\IEEEmembership{Fellow,~IEEE}
\thanks{H.E.~Egilmez and Y.-H.~Chao are with Qualcomm Technologies Inc., San Diego, CA, 92121 USA. A.~Ortega is with the Department of Electrical Computer Engineering,
University of Southern California, Los Angeles,
CA, 90084 USA. Contact author e-mail: hegilmez@qti.qualcomm.com.}
\thanks{This work has been done while H.E.~Egilmez and Y.-H.~Chao were with University of Southern California. Earlier versions of the work were presented at the IEEE International Conference on Image Processing in \cite{egilmez:15:gbt_inter,egilmez:16:gbst} and in the PhD dissertations \cite{hegilmez_phd_thesis,jessie_phd_thesis}.}
}

\maketitle

\begin{abstract}
In many state-of-the-art compression systems, signal transformation is an integral part of the encoding and decoding process, where transforms provide compact representations for the signals of interest. This paper introduces a class of transforms called graph-based transforms (GBTs) for video compression, and proposes two different techniques to design GBTs. In the first technique, we formulate an optimization problem to learn graphs from data and provide solutions for optimal separable and nonseparable GBT designs, called GL-GBTs. The optimality of the proposed GL-GBTs is also theoretically analyzed based on Gaussian-Markov random field (GMRF) models for intra and inter predicted block signals. The second technique develops edge-adaptive GBTs (EA-GBTs) in order to flexibly adapt transforms to block signals with image edges (discontinuities). The advantages of EA-GBTs are both theoretically and empirically demonstrated. Our experimental results show that the proposed transforms can significantly outperform the traditional Karhunen-Loeve transform (KLT).
\end{abstract}

\begin{IEEEkeywords}
Transform coding, predictive coding, graph-based transforms, video coding, compression, optimization, statistical modeling.
\end{IEEEkeywords}

%
\IEEEpeerreviewmaketitle

\let\oldemptyset\emptyset
\let\emptyset\varnothing

 \setlength{\parskip}{0mm plus0mm minus0mm}
\setitemize[0]{leftmargin=15pt,itemindent=0pt}

\addtolength{\textfloatsep}{-0.5cm}
\setlength{\abovecaptionskip}{2pt} 
\setlength\abovedisplayskip{1.5pt}
\setlength\belowdisplayskip{1.5pt}

\section{Introduction}
\label{sec:intro}
Predictive transform coding is a fundamental compression technique 
adopted in many block-based image and video compression systems, 
where block signals are initially predicted from a set of available (already coded) reference pixels, 
and then the resulting residual block signals are transformed (generally by a linear transformation)
to decorrelate residual pixel values for effective compression. After prediction and transformation steps, a typical image/video compression system applies quantization and entropy coding to convert transform coefficients into a stream of bits. 
Fig.~\ref{fig:encoder_decoder} illustrates a representative encoder-decoder architecture comprising three basic components, (i) prediction, (ii) transformation, (iii) quantization and entropy coding, which are implemented in state-of-the-art compression standards such as JPEG \cite{JPEG:1992:Pennebaker}, HEVC \cite{Sullivan:12:hevc}, VP9 \cite{Mukherjee:13:vp9}, {AV1 \cite{av1_standard} and VVC \cite{Bross:20:vvc10}.}
This paper focuses mainly on the transformation component of video coding and develops techniques to design orthogonal transforms, called graph-based transforms (GBTs), adapting diverse characteristics of video signals.  

\begin{figure}[!t]
        \centering \includegraphics[width=0.48\textwidth]{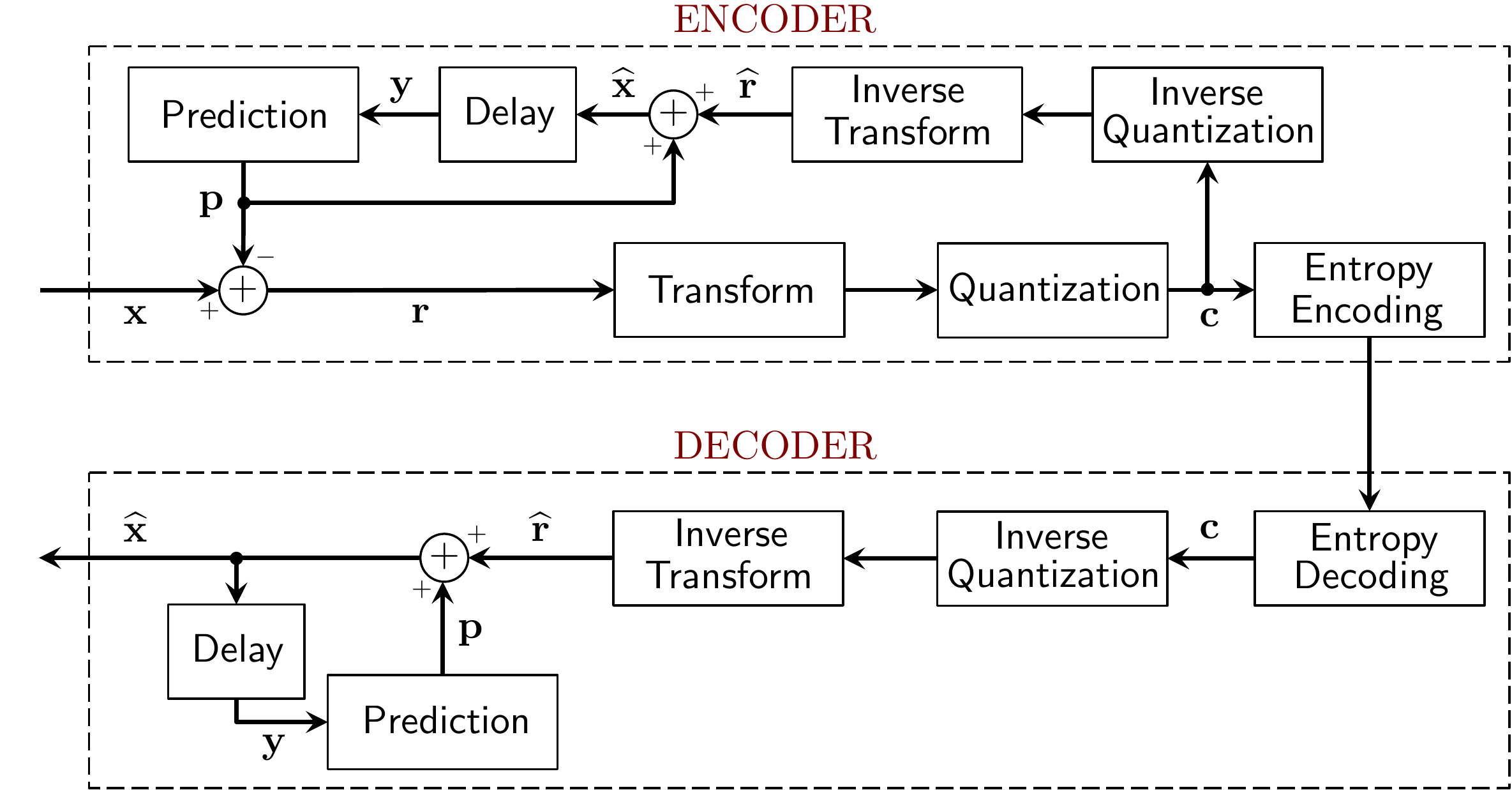}
                \caption{Building blocks of a typical video encoder and decoder consisting of three main steps, which are (i) prediction, (ii) transformation, (iii) quantization and entropy coding. } 
           	\label{fig:encoder_decoder}
\end{figure}

In predictive transform coding of video, the prediction is generally carried out by choosing one among multiple intra and inter prediction modes in order to exploit spatial and temporal redundancies between block signals. 
On the other hand, for the transformation, typically a single transform such as the discrete cosine transform (DCT-2) is applied in a separable manner to rows and columns of each residual block. 
The main problem of using fixed block transforms is the implicit assumption 
that all residual blocks share the same statistical properties. 
However, residual blocks can have very diverse statistical characteristics depending on the video content and the prediction mode (as will be demonstrated by some of the experiments in Section \ref{sec:results}). 
The HEVC \cite{Sullivan:12:hevc} standard, partially addresses this problem by  allowing the use of the asymmetric discrete sine transform (ADST or DST-7) in addition to the DCT-2 for small (i.e., $4\times4$) intra predicted blocks \cite{Han:12:hybrid}. 
Yet, it has been shown that better compression can be achieved by using data-driven transform designs that adapt to statistical properties of residual blocks \cite{Karczewicz:08:mddt,Takamura:13:intra_mddt,Arrufat:2014:intra_mddt,oscar_au:13:rdo_loyd-type,xin:2016:nnst,said:2016:hygt,Zhao:2018:joint_primary_secondary,Pierrick:2018:transform_inter,chao:16:eagbt_step_ramp,egilmez:2019:ce6_report}.

The majority of prior studies about transforms for video coding focus on developing transforms for intra predicted residuals. 
In \cite{Karczewicz:08:mddt}, a mode-dependent transform (MDT) scheme is proposed by designing a Karhunen-Loeve transform (KLT) for each intra prediction mode. In \cite{Takamura:13:intra_mddt,Arrufat:2014:intra_mddt}, the MDT scheme is similarly implemented for the HEVC standard, where a single KLT is trained for each intra prediction mode offered in HEVC. Moreover in \cite{xin:2016:emt,xin:2016:nnst,said:2016:hygt,Zhao:2018:joint_primary_secondary,Pierrick:2018:transform_inter}, the authors demonstrate considerable coding gains over the MDT method by using the rate-distortion optimized transformation (RDOT) scheme, which suggests designing multiple transforms for each prediction mode so that the encoder can select a transform (from the predetermined set of transforms) by minimizing a rate-distortion (RD) cost. {More recently, the VVC standard \cite{Bross:20:vvc10} has adopted simplified variants of the  RDOT-based methods in \cite{xin:2016:emt,xin:2016:nnst}.} 
Since the RDOT scheme allows the flexibility of selecting 
a transform on a per-block basis at the encoder side, it provides better adaptation to residual blocks with different statistical characteristics as compared to the MDT scheme. However, all of these methods rely on KLTs derived from sample covariance matrices, which may not be good estimators for the true covariances of models, especially when the number of data samples is small \cite{johnstone2009consistency,ravikumar:2011:bounds_inverse_cov_estimation}. Indeed, it has been shown that more accurate model estimates can be obtained using inverse covariance estimation methods \cite{friedman:2007:glasso,ravikumar:2011:bounds_inverse_cov_estimation} or graph learning methods, such as those introduced in our prior work \cite{egilmez:2017:gl_from_data_jstsp,egilmez:2019:gsi_tsipn}.

This paper proposes a novel graph-based modeling framework to design GBTs, where the models of interest are based on Gaussian-Markov random fields (GMRFs) whose inverse covariances are graph Laplacian matrices. 
The proposed framework consists of two distinct techniques to develop GBTs for video coding, called GL-GBTs and EA-GBTs:
\begin{itemize}
\item \emph{Graph learning for GBT {(GL-GBT)} design:} 
A graph learning problem with a maximum-likelihood (ML) criterion is formulated and solved to estimate a graph Laplacian matrix from training data. In order to construct separable and nonseparable GBTs, two instances of the proposed problem with different connectivity constraints are solved by applying the graph learning algorithm introduced 
in our prior work \cite{egilmez:2017:gl_from_data_jstsp}.
Then, the GBTs are constructed by the eigendecomposition of the graph Laplacians of the learned graphs. As the KLT, a {GL-GBT} is learned from a sample covariance, but in addition, it incorporates Laplacian and structural constraints reflecting the inherent model assumptions about the video signal. {From a statistical learning perspective, the main advantage of the proposed {GL-GBT} over the KLT is that the {GL-GBT} requires learning fewer model parameters from training data, and thus can lead to a more robust transform allowing better compression for the block signals outside of the training dataset. GL-GBTs can be adopted to improve existing MDT or RDOT schemes using KLTs.}
\item \emph{Edge-adaptive GBT {(EA-GBT)} design:} To adapt transforms for block signals with image edges\footnote{We use the term \emph{image edge} to distinguish edges in image/video signals with edges in graphs.}, 
we develop edge-adaptive GBTs (EA-GBTs) which are designed on a per-block basis. These lead to a block-adaptive transform scheme, where transforms are derived from graph Laplacians whose weights are modified based on image edges detected in a residual block. 
\end{itemize}
 
In the literature, there are a few studies on model-based transform designs for image and video coding. In \cite{Florencio:2013:PTC,microsoft:15:gsp_prob_framework}, the authors present a graph-based probabilistic framework for predictive video coding and use that to justify the optimality of DCT, yet optimal graph/transform design is out of their scope. In our previous work \cite{pavez:2015:gtt_paper}, we present a comparison of various instances of different graph learning problems for nonseparable image modeling. The present paper theoretically and empirically validates one of the conclusions in \cite{pavez:2015:gtt_paper}, which suggests the use of GBTs derived from graph Laplacian matrices for image compression. 
In \cite{Fracastoro:2018:apprx_graph}, a block-adaptive scheme is proposed for image compression with GBTs, where a graph is constructed for each block signal by minimizing a regularized Laplacian quadratic term used as the proxy for actual RD cost. 
On the other hand, in our present work, graphs are 
estimated from aggregate data statistics 
or constructed using image edge information. 
Then, the encoder selects the best transform based on exact RD measures.
Moreover, Shen \emph{et.~al.~}\cite{Shen:10:eat} propose edge adaptive transforms (EAT) specifically for depth-map compression, and Hu \emph{et.~al.~}\cite{hu:14:PWS} extend EATs for piecewise-smooth image compression. Although our proposed EA-GBTs adopt some basic concepts originally introduced in \cite{Shen:10:eat}, our graph construction method is different (in terms of image edge detection) and provides better compression for residual signals. 
 
The main contributions of this paper can be summarized as follows:
\begin{itemize}
\item We propose graph learning techniques   
for separable and nonseparable {GBTs} (i.e., GBSTs and GBNTs, respectively) and present a theoretical justification of their use for coding residual block signals modeled using GMRFs.
\item As an extension of the 1-D GMRF models used to design GBSTs for intra and inter predicted signals in our previous work \cite{egilmez:16:gbst}, we present a general 2-D GMRF model for GBNTs and analyze its optimality.
\item We apply EA-GBTs to intra and inter predicted blocks, 
while our prior work in \cite{egilmez:15:gbt_inter} focused only on inter predicted blocks. 
In addition to the experimental results, we further derive some theoretical results and discuss the cases in which EA-GBTs are useful. 
\item We present comprehensive experimental results comparing the compression performances obtained using KLTs, GL-GBTs and EA-GBTs. 
\end{itemize}

The rest of the paper is organized as follows. Section \ref{sec:notation_prelim} presents the basic notation and definitions used throughout the paper. Section \ref{sec:models} introduces 2-D GMRFs used for modeling the video signals and discusses graph-based interpretations. In Section \ref{sec:graph_learning_video}, the GBT design problem is formulated as a graph Laplacian estimation problem, and solutions for optimal GBT construction are proposed. Section \ref{sec:eagbt} presents EA-GBTs. Graph-based interpretations of residual block signal characteristics are discussed in Section \ref{sec:graph_res_charac} by empirically validating the theoretical observations. Experimental results are presented in Section \ref{sec:results}, and Section \ref{sec:conclusion} draws concluding remarks.

\section{Notation and Preliminaries}
\label{sec:notation_prelim}
Throughout the paper, lowercase normal (e.g., $a$ and $\theta$), lowercase bold (e.g., $\mathbf{a}$ and $\vo$) and uppercase bold (e.g., $\mathbf{A}$ and $\OD$) letters denote scalars, vectors and matrices, respectively. Unless otherwise stated, calligraphic capital letters (e.g., $\mathcal{E}$ and $\mathcal{S}$) represent sets. 
Notation is summarized in Table \ref{table:notation}.
\begin{table}[!ht]
\caption{List of Symbols and Their Meaning}
\label{table:notation}
\centering
\begin{tabular}{|c||c|}
 \hline 
   Symbols &  \multicolumn{1}{c|}{Meaning}  \\
 \hline  \hline
$\Graph$ $\mid$ $\lap$ &  weighted graph $\mid$ graph Laplacian matrix  \\ \hline
$\nvert$ $\mid$ $N$ &  number of vertices $\mid$ block size ($N\times N$)  \\ \hline
$\eye$ $\mid$ $\vones$ & identity matrix $\mid$ vector of all ones \\ \hline
$\OD\tran$ $\mid$ $\vo\tran$ &  transpose of $\OD$ $\mid$ transpose of $\vo$  \\ \hline
$(\OD)_{ij}$ & entry of $\OD$ at $i$-th row and $j$-th column \\ \hline
$(\vo)_{i}$ & $i$-th entry of $\vo$ \\ \hline 
$(\vo)_{\iset}$ & subvector of $\vo$ formed by selecting indexes in $\iset$ \\ \hline
$\geq $ ($\leq$) & element-wise greater (less) than or equal to operator  \\ \hline 
$\OD \succeq 0$ & $\OD$ is a positive semidefinite matrix  \\ \hline 
$\OD\inv$ $\mid$ $\mathrm{det}(\OD)$  &  inverse of $\OD$ $\mid$  determinant of $\OD$   \\ \hline
$\mathrm{Tr}$ $\mid$ $\mathrm{logdet}(\OD)$ & trace operator  $\mid$ natural logarithm of $\mathrm{det}(\OD)$   \\ \hline 
$\mathrm{diag}(\vo)$ & diagonal matrix formed by elements of $\vo$    \\ \hline 
$\mathrm{ddiag}(\OD)$ & diagonal matrix formed by diagonal elements of $\OD$    \\ \hline 
$\smash{\vx \sim \Normal(\vzeros,\Covar)}$ & zero-mean multivariate Gaussian with covariance $\Covar$ \\ \hline
\end{tabular}
\end{table}

In this paper, we focus on connected, undirected, weighted simple graphs with nonnegative edge weights \cite{Chung:1997:SGT}. We next present basic definitions related to graphs. 
\begin{definition}[Weighted Graph] 
The graph $\Graph \! = \! (\mathcal{V},\mathcal{E},f_w,f_v)$ is a weighted graph with $\nvert$ vertices in the set $\mathcal{V} \! = \! \{ v_1,\dotsc,v_\nvert \} $. The edge set $\mathcal{E}\! = \! \{ \, e  \, | \, f_w(e) \! \neq \! 0  , \, \forall \, e \! \in \! \mathcal{P}_u \}$ is a subset of $\mathcal{P}_u$, the set of 
all unordered pairs of vertices, where $f_w((v_i,v_j)) \! \geq \! 0 \text{ for } i \! \neq \! j$ is a real-valued edge weight function, and $f_v(v_i) \text{ for } i\!=\!1,\dotsc,\nvert$ is a real-valued vertex (self-loop) weight function. 
\label{def:weighted_graph}
\end{definition}
\begin{definition}[Algebraic representations of graphs]
\label{def:useful_matrices}
For a given weighted graph $\Graph\!=\!(\mathcal{V},\mathcal{E},f_w)$ with $\nvert$ vertices, $v_1,\dotsc,v_\nvert$: \\
The \emph{adjacency matrix} of $\Graph$ is an $\nvert  \times \nvert$ symmetric matrix, $\Adj$, such that $(\Adj)_{ij} = (\Adj)_{ji} = f_w((v_i,v_j))$ for $(v_i,v_j) \in \mathcal{P}_u$. \\
The \emph{degree matrix} of $\Graph$ is an $\nvert\times \nvert$ diagonal matrix, $\mathbf{D}$, with entries $(\mathbf{D})_{ii}= \sum_{j=1}^{\nvert} (\Adj)_{ij}$ and $(\mathbf{D})_{ij} = 0$ for $i \neq j$. 
\\
The \emph{self-loop matrix} of $\Graph$ is an $\nvert\times \nvert$ diagonal matrix, $\SLoop$, with entries $(\SLoop)_{ii}=f_v(v_i)$ for $i=1,\dotsc,\nvert$ and $(\SLoop)_{ij} = 0$ for $i \neq j$. If $\Graph$ is a simple weighted graph, then $\mathbf{V} = \mathbf{0} $. 
\\
The \emph{connectivity matrix} of $\Graph$ is an $\nvert \times \nvert$ matrix, $\Conn$, such that $(\Conn)_{ij}=1$ if $(\Adj)_{ij} \neq 0$, and  $(\Conn)_{ij}=0$ if $(\Adj)_{ij} = 0$ for $i,j=1,\dotsc,\nvert$, where  $\Adj$ is the adjacency matrix of $\Graph$.
\\
The \emph{combinatorial graph Laplacian (CGL)} of graph $\Graph$ is defined as $\lap = \mathbf{D}  - \Adj$.
\\
The \emph{generalized graph Laplacian (GGL)} of graph $\Graph$ is defined as $\lap = \mathbf{D}  - \Adj + \SLoop$, which reduces to the combinatorial graph Laplacian when there are no self-loops ($\SLoop \!=\! {\mathbf{0}}$). 
\end{definition}


\begin{definition}[Graph-based Transform (GBT)] 
Let $\lap$ be a graph Laplacian of a graph $\Graph$. The graph-based transform 
is the orthogonal matrix $\gbt$, satisfying $\smash{\gbt\tran\gbt=\eye}$, obtained 
by eigendecomposition of $\smash{\lap \! = \! \gbt \eigM \gbt\tran} $, 
where $\eigM$ is the diagonal matrix consisting of eigenvalues of $\lap$ (graph frequencies). 
\label{def:gbt_video} 
\end{definition}

\noindent As formally stated in the following proposition, GBTs are invariant under (i) constant scaling of graph weights and (ii) addition of a constant self-loop weight to all vertices.

\begin{theorem}
Let $\gbt$ be a GBT diagonalizing graph Laplacian $\lap$. The same $\gbt$ also diagonalizes Laplacians of the form $\widetilde{\lap} =c_1 \lap + c_2 \eye$, where $c_1$ and $c_2$ are  real-valued scalars. 
\end{theorem}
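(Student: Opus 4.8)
The plan is to proceed by direct substitution, exploiting the two facts already in hand: that $\lap$ admits the eigendecomposition $\lap = \gbt \eigM \gbt\tran$ from Definition~\ref{def:gbt_video}, and that $\gbt$ is orthogonal, so that $\gbt\tran\gbt = \eye$ and equivalently $\gbt\gbt\tran = \eye$. First I would write out $\widetilde{\lap} = c_1 \lap + c_2 \eye$ and replace $\lap$ by its eigendecomposition, giving $\widetilde{\lap} = c_1 \gbt \eigM \gbt\tran + c_2 \eye$. The key manoeuvre is then to express the identity term using orthogonality as $c_2 \eye = c_2 \gbt \gbt\tran$, so that both summands share the same left and right factors.

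Next I would factor $\gbt$ out on the left and $\gbt\tran$ out on the right to obtain the single expression
\begin{equation}
\widetilde{\lap} = \gbt \left( c_1 \eigM + c_2 \eye \right) \gbt\tran .
\end{equation}
The final step is to observe that $c_1 \eigM + c_2 \eye$ is again a diagonal matrix, since $\eigM$ is diagonal by construction and adding a scalar multiple of $\eye$ only shifts the diagonal entries. Hence this last display is precisely an eigendecomposition of $\widetilde{\lap}$ with the \emph{same} eigenvector matrix $\gbt$, which establishes that $\gbt$ diagonalizes $\widetilde{\lap}$; as a by-product, the eigenvalues of $\widetilde{\lap}$ are the affinely transformed eigenvalues $c_1 \lambda + c_2$ of $\lap$.

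There is no genuine obstacle here: the argument is elementary linear algebra and relies on nothing beyond the orthogonality of $\gbt$ and the diagonality of $\eigM$. The only point requiring a moment's care is to insist on the orthogonality identity $\gbt\gbt\tran = \eye$ rather than merely $\gbt\tran\gbt = \eye$, since it is the former that lets me absorb $c_2\eye$ into the common conjugation by $\gbt$; this is legitimate because $\gbt$ is a square orthogonal matrix, for which the two coincide.
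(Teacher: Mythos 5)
Your argument is correct and is exactly the elementary computation the paper is invoking when it states that the proof is straightforward from Definition~\ref{def:gbt_video}: substituting $\lap = \gbt\eigM\gbt\tran$ and writing $c_2\eye = c_2\gbt\gbt\tran$ yields $\widetilde{\lap} = \gbt(c_1\eigM + c_2\eye)\gbt\tran$ with a diagonal middle factor. Your explicit note that one needs $\gbt\gbt\tran=\eye$ (valid since $\gbt$ is square orthogonal) is a correct and worthwhile point of care, but the route is the same as the paper's.
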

\begin{proof}
The proof is straightforward from Definition \ref{def:gbt_video}. \end{proof}

\section{Graph-based Models and Transforms for Video Block Signals}
\label{sec:models}
For modeling video block signals, we use Gaussian Markov random fields (GMRFs), which provide a probabilistic interpretation for our graph-based framework. Assuming that the random vector of interest $\mathbf{x} \in \mathbb{R}^n$ has zero mean\footnote{The zero mean assumption is made to simplify the notation. The models can be trivially extended to GMRFs with nonzero mean.}, a GMRF model for $\mathbf{x}$ is defined based on a precision matrix $\Precision$, so that $\mathbf{x}$ has a multivariate Gaussian distribution, ${\mathbf{x}} \sim \Normal(\mathbf{0}, \Precision\inv)$,
\begin{equation}
\label{eqn:GMRF_general_video}
\Prob({\mathbf{x}}| \Precision) = \frac{1}{{(2\pi)}^{n/2} {\mathrm{det}({\Precision})}^{-1/2} } \mathrm{exp} \hspace{-0.1cm} \left( {- \frac{1}{2} {\mathbf{x}}\tran {\Precision}{\mathbf{x}}} \right).
\end{equation} 
with covariance matrix $\smash{\Covar \! = \! \Precision\inv}$. 
The entries of the precision matrix $\Precision$ can be interpreted in terms of conditional dependence relations among variables,
\begin{align}
\Expect \left[ { x_i \! \mid \! (\vx)_{\mathcal{S}\backslash\{ i \} } } \right] & = - \frac{1}{(\Precision)_{ii}} \sum_{j\in{\mathcal{S}\backslash\{ i \} }} (\Precision)_{ij} x_j  \label{eqn:prec_expectation}  \\
\; \mathsf{Prec} \left[ { x_i \! \mid \! (\vx)_{\mathcal{S}\backslash\{ i \} } } \right] & = (\Precision)_{ii}  \label{eqn:prec_element}  \\
 \mathsf{Corr} \left[ { x_i x_j \! \mid \! (\vx)_{\mathcal{S}\backslash\{ i,j \}} } \right]   & =  - \frac{(\Precision)_{ij}}{\sqrt{(\Precision)_{ii} (\Precision)_{jj} }} \quad i \neq j, \label{eqn:partial_corr}
\end{align}
where $\mathcal{S} \! = \! \{1,\dotsc,\nvert\}$ is the index set for $\vx = {[ x_1,\dotsc,x_\nvert ]}\tran$. The conditional expectation in (\ref{eqn:prec_expectation}) gives the best minimum mean square error (MMSE) estimate of $x_i$ using all other random variables. The relation in (\ref{eqn:prec_element}) corresponds to the \emph{precision} of $x_i$ and (\ref{eqn:partial_corr}) to  the \emph{partial correlation} between $x_i$ and $x_j$ (i.e., correlation between random variables $x_i$ and $x_j$ given all other variables in $\vx$). For example, if $x_i$ and $x_j$ are conditionally independent (i.e., $\smash{(\Precision)_{ij} \! = \!0}$), there is no edge between corresponding vertices $v_i$ and $v_j$. 
If all partial correlations are nonnegative (i.e., off-diagonal elements of ${\Precision}$ are nonpositive), then the model in (\ref{eqn:GMRF_general_video}) is classified as \emph{attractive GMRF}  \cite{Koller:2009:PGM,rue:2005:GMRF,egilmez:2017:gl_from_data_jstsp}, whose precision matrix satisfies the following proposition \cite{egilmez:2017:gl_from_data_jstsp,egilmez:16:gbst}. 
\begin{theorem}
A GMRF model is attractive if and only if its precision matrix $\Precision$ is a generalized graph Laplacian matrix. 
\label{proposition:attractive}
\end{theorem}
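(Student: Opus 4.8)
The plan is to prove both directions by reducing the probabilistic notion of attractiveness to the sign pattern of the off-diagonal entries of $\Precision$, and then showing that this very sign pattern characterizes GGL matrices. Throughout I would keep in mind that a GMRF is well-defined precisely when $\Precision$ is symmetric and positive definite (so that the density (\ref{eqn:GMRF_general_video}) has $\mathrm{det}(\Precision) > 0$). Hence both ``the GMRF is attractive'' and ``$\Precision$ is a GGL'' are statements about one fixed symmetric positive definite $\Precision$, and the entire content of the proposition is the structure of its off-diagonal entries.

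First I would establish the pivotal equivalence: the GMRF is attractive if and only if $(\Precision)_{ij} \le 0$ for all $i \ne j$. This is immediate from the partial-correlation identity (\ref{eqn:partial_corr}). Since the diagonal entries of a positive definite matrix are strictly positive, $(\Precision)_{ii},(\Precision)_{jj} > 0$, so the sign of $\mathsf{Corr}[x_i x_j \mid (\vx)_{\mathcal{S}\backslash\{i,j\}}]$ is opposite to the sign of $(\Precision)_{ij}$. Thus all partial correlations are nonnegative (the definition of attractive) exactly when all off-diagonal entries of $\Precision$ are nonpositive.

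Next, for the direction ``$\Precision$ is a GGL $\Rightarrow$ attractive'', I would invoke Definition \ref{def:useful_matrices}: writing $\Precision = \mathbf{D} - \Adj + \SLoop$, the off-diagonal entries equal $-(\Adj)_{ij} = -f_w((v_i,v_j)) \le 0$, since edge weights are nonnegative by Definition \ref{def:weighted_graph}; combined with the equivalence above, attractiveness follows. For the converse, ``attractive $\Rightarrow$ $\Precision$ is a GGL'', I would exhibit a weighted graph whose GGL is exactly $\Precision$: set edge weights $f_w((v_i,v_j)) = -(\Precision)_{ij} \ge 0$ for $i \ne j$ (nonnegative precisely because of attractiveness, hence admissible), and self-loop weights $f_v(v_i) = \sum_{j} (\Precision)_{ij}$, the $i$-th row sum. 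A short entrywise check that $\mathbf{D} - \Adj + \SLoop$ reproduces $\Precision$ — the off-diagonals return $(\Precision)_{ij}$ and the diagonal returns $-\sum_{j \ne i}(\Precision)_{ij} + f_v(v_i) = (\Precision)_{ii}$ — completes the argument.

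The step I expect to require the most care is this converse construction, specifically the self-loop weights. Because $f_v(v_i)$ is a row sum of a positive definite matrix with nonpositive off-diagonals, it need \emph{not} be nonnegative; I would stress that Definition \ref{def:weighted_graph} permits real-valued vertex (self-loop) weights, so the construction stays inside the admissible class of GGLs even when some row sums are negative. This is the only place where the extra flexibility of the GGL (versus the combinatorial Laplacian $\lap = \mathbf{D} - \Adj$, which carries no self-loop term) is essential, and it is exactly what makes the equivalence hold in both directions rather than in one only.
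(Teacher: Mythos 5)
Your proof is correct and takes essentially the same route as the paper, whose proof is simply ``straightforward by the definitions'': you unpack the definition of attractiveness (nonpositive off-diagonal entries of $\Precision$, via the partial-correlation identity) and the definition of a GGL, and verify the sign-pattern equivalence in both directions. Your added care about the self-loop weights possibly being negative --- and the observation that Definition~\ref{def:weighted_graph} permits real-valued vertex weights --- is a worthwhile clarification of a point the paper leaves implicit, but it does not change the argument.
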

\begin{proof}
The proof is straightforward by the definitions. 
\end{proof}

\begin{figure}[!t]
\centering
    \subfloat[ 
    \label{fig:intra_model_1d}]{%
      \hspace{-0.9cm}\includegraphics[width=5cm]{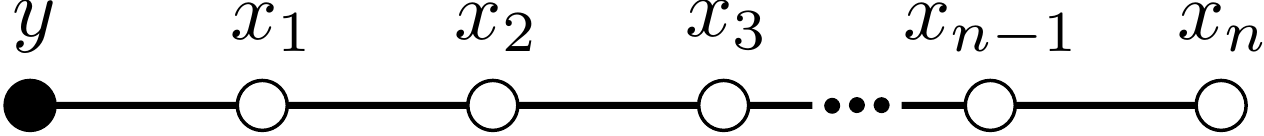}}
      \\
    \subfloat[ 
    \label{fig:inter_model_1d}]{%
     \includegraphics[width=4.1cm]{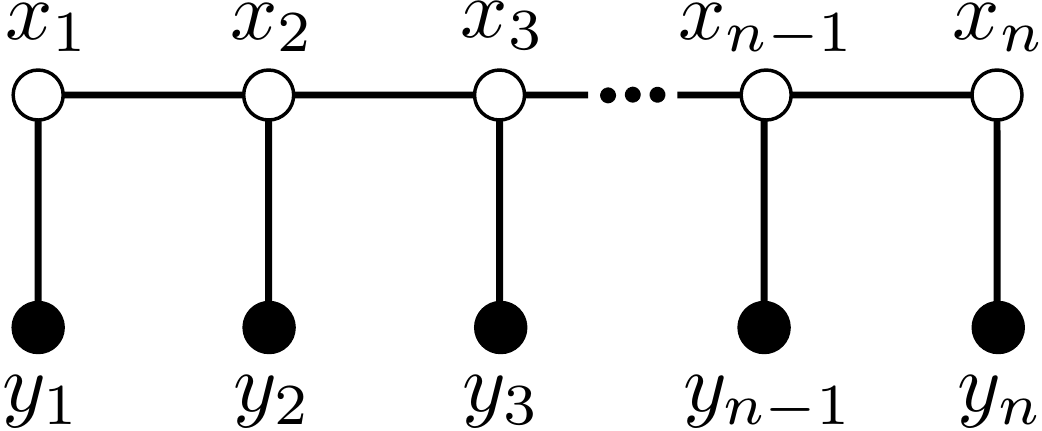}}
\caption{1-D GMRF models for (a) intra and (b) inter predicted signals.
Black filled vertices represent the reference pixels and unfilled vertices denote pixels to be predicted and then transform coded.}
\vspace{-0.25cm} 
\label{fig:intra_inter_models_1D}
\centering
    \subfloat[ \label{fig:intra_model_2d}]{%
    \includegraphics[height=3.8cm]{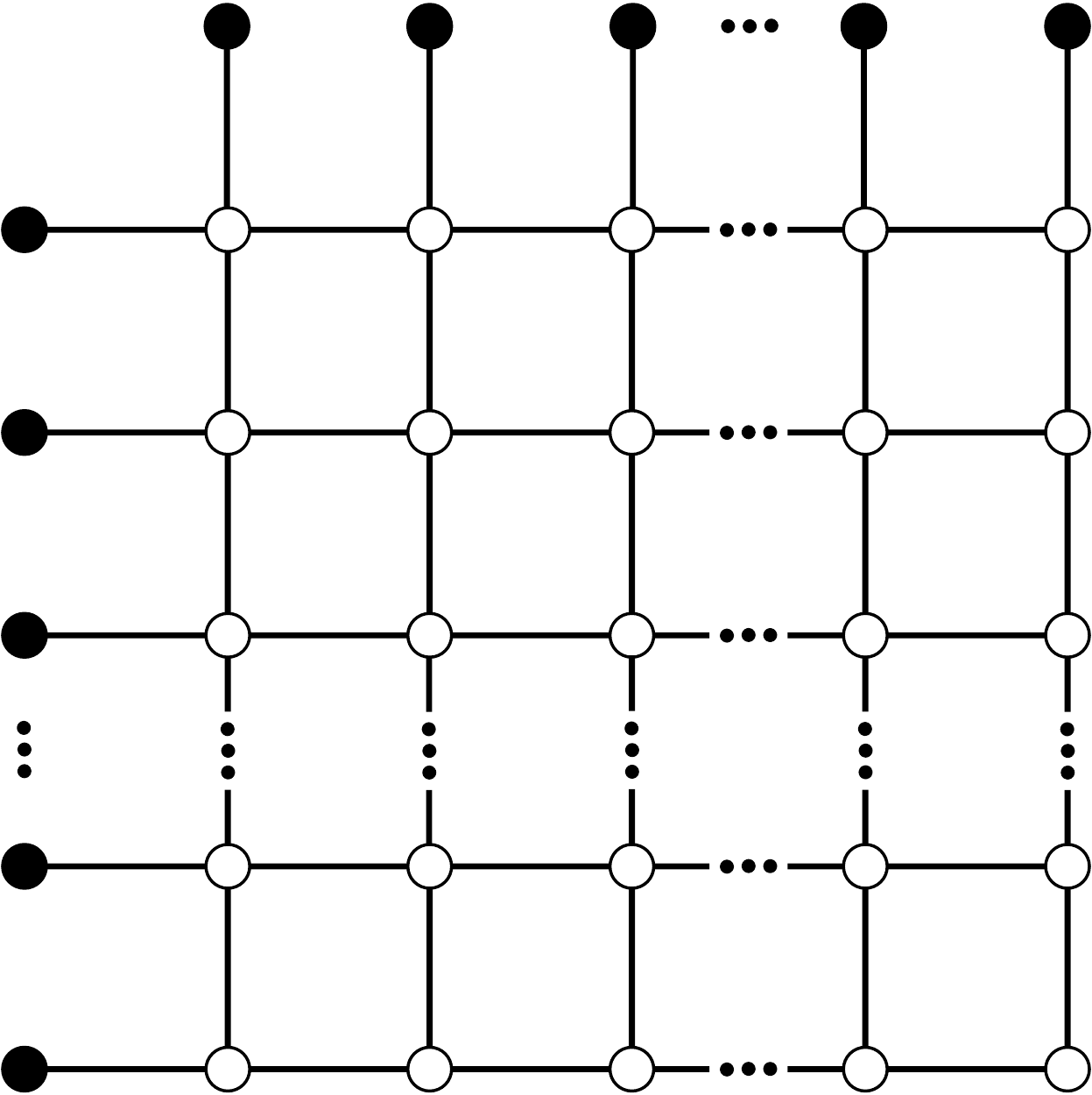}}
    \qquad 
    \subfloat[ \label{fig:inter_model_2d}]{%
    \includegraphics[height=4.8cm]{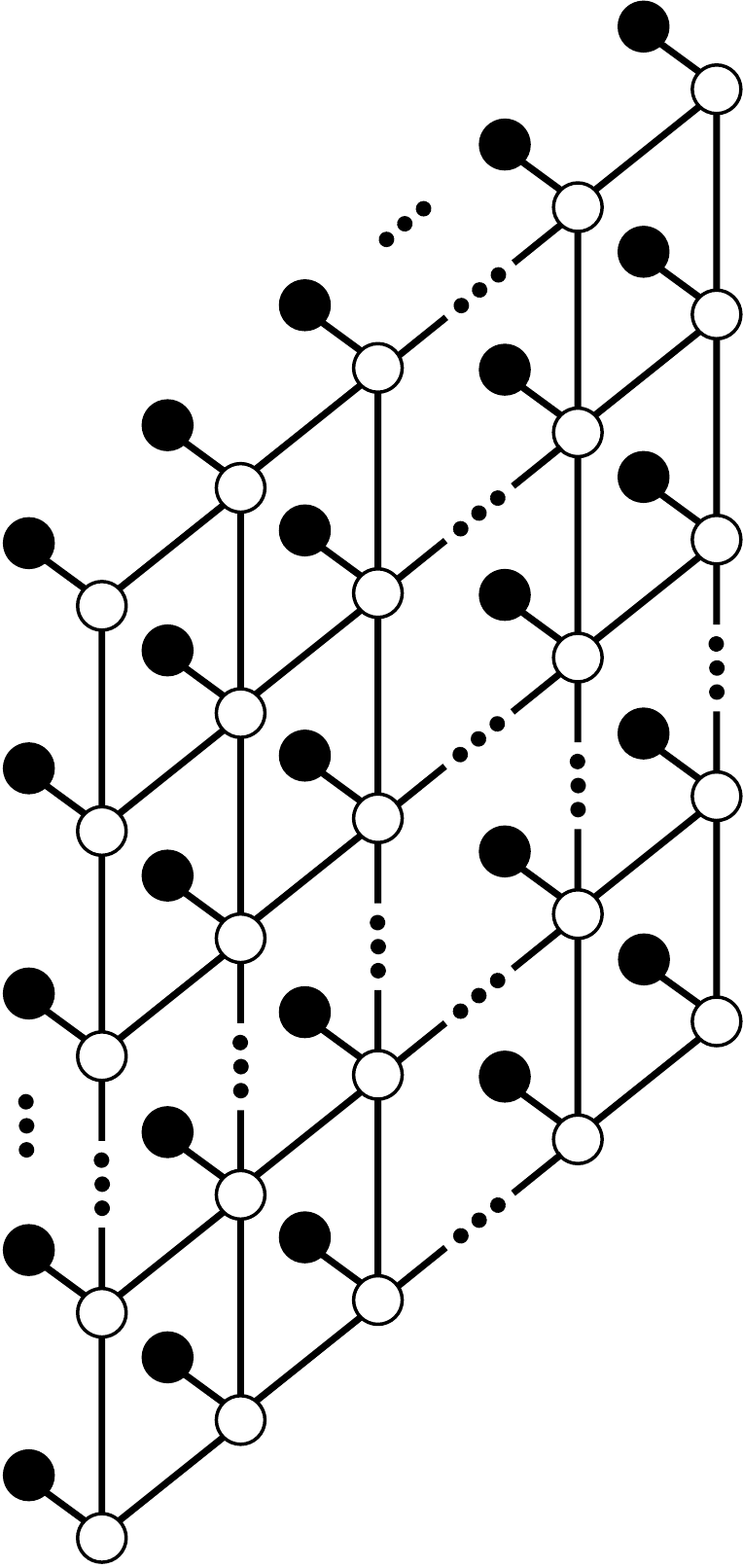}}
\caption{2-D GMRF models for (a) intra and (b) inter predicted signals. Black filled vertices correspond to reference pixels obtained (a) from neighboring blocks and (b) from other frames via motion compensation. Unfilled vertices denote the pixels to be predicted and then transform coded.}   
\label{fig:intra_inter_models_2D}
\end{figure}

In statistical modeling of image/video signals, it is generally assumed that adjacent pixel values are positively correlated  \cite{Jain:1989:FDI,Tekalp:2015:DVP}. The assumption is intuitively reasonable for video signals, since neighboring pixel values are often similar to each other due to spatial and temporal redundancy. With this general assumption, we propose attractive GMRFs to model intra/inter predicted video block signals. 
 Figs.~\ref{fig:intra_inter_models_1D} and \ref{fig:intra_inter_models_2D} illustrate the 
1-D and 2-D GMRFs {defining \emph{line} and \emph{grid} graphs that are} used to design separable and nonseparable GBTs, respectively. 
 
We formally define GBST and GBNT as follows. 
\begin{definition} [Graph-based Separable Transform--GBST] Let $\mathbf{U}_{\textnormal{row}}$ and $\mathbf{U}_{\textnormal{col}}$ be $N \times N$ GBTs associated with two line graphs with $N$ vertices, then the GBST of an $N \! \times \! N$ matrix $\mathbf{X}$ is  
\begin{equation}
\widehat{\mathbf{X}} = \mathbf{U}_{\textnormal{col}}\tran \mathbf{X} \mathbf{U}_{\textnormal{row}}, 
\label{eqn:gbst_video}
\end{equation}
where $\mathbf{U}_{\textnormal{row}}$ and $\mathbf{U}_{\textnormal{col}}$ 
are {the transforms applied to} rows and columns of the block signal $\mathbf{X}$, respectively.
\end{definition}
\begin{definition}[Graph-based Nonseparable Transform--GBNT] 
Let $\gbt$ be an $N^2 \! \times \! N^2$ GBT associated with a graph with $N^2$ vertices, then the GBNT of an $N \! \times \! N$ matrix $\mathbf{X}$ is 
\begin{equation}
 \widehat{\mathbf{X}} = \mathrm{block}(\gbt\tran \mathrm{vec}(\mathbf{X})),
\label{eqn:gbnt_video}
\end{equation}
 where $\gbt$ is applied on vectorized signal $\vx = \mathrm{vec}(\mathbf{X})$, and the $\mathrm{block}$ operator restructures the signal back in block form. 
\label{def:gbnt_video} 
\end{definition}

In our previous work \cite{egilmez:16:gbst}, we introduced the 1-D GMRFs illustrated in Fig.~\ref{fig:intra_inter_models_1D} for intra/inter predicted signals and also derived closed-form expressions of their precision matrices (i.e., $\Precision$). The following section presents a unified 2-D extension of those models. While the work in \cite{microsoft:15:gsp_prob_framework} has noted the relation between graph Laplacians and GMRFs in the context of predictive transform coding, the following section further shows that residual signals obtained from MMSE prediction follow an attractive GMRF model. Hence, the optimal linear transform decorrelating such residual signals is a GBT derived from a GGL.

\subsection{2-D GMRF Model for Residual Signals} 
We introduce a general 2-D GMRF model for intra/inter predicted  $N \! \times \! N$ block signals depicted in Fig.~\ref{fig:intra_inter_models_2D} by deriving the precision matrix of the residual signal $\vr$, obtained after predicting the signal $\smash{\vx = [x_1 \; x_2 \; \cdots \; x_n]\tran}$ with $\nvert\!=\!N^2$
from $\nvert_p$ reference samples in $\smash{\vy = [y_1 \; y_2 \; \cdots \; y_{n_p}]\tran}$ (i.e.,  predicting  unfilled vertices from black filled vertices in Fig.~\ref{fig:intra_inter_models_2D}), where $\vx$ and $\vy$ are zero-mean and jointly Gaussian with respect to the following attractive 2-D GMRF:
\begin{equation}
\label{eqn:GMRF_model_xy}
\Prob(\left[{\begin{smallmatrix}\vx \\ \vy \end{smallmatrix}}\right] | \Precision) = \frac{1}{{(2\pi)}^{n/2} {\mathrm{det}({\Precision})}^{-1/2} } \mathrm{exp} \hspace{-0.1cm} \left( {- \frac{1}{2} \left[{\begin{smallmatrix}\vx \\ \vy \end{smallmatrix}}\right]\tran  {\Precision}\left[{\begin{smallmatrix}\vx \\ \vy \end{smallmatrix}}\right]} \right).
\end{equation}
The precision matrix $\Precision$ and the covariance matrix $\Covar = \Precision\inv$ can be partitioned as follows \cite{rue:2005:GMRF,microsoft:15:gsp_prob_framework}:
\begin{equation}
\label{eqn:precision_partition_xy}
\Precision = \begin{bmatrix}
\Precision_{\vx} & \Precision_{\vx \vy} \\
\Precision_{\vy \vx} & \Precision_{\vy} \\
\end{bmatrix} = 
\begin{bmatrix}
\Covar_{\vx} & \Covar_{\vx \vy} \\
\Covar_{\vy \vx} & \Covar_{\vy} \\
\end{bmatrix}\inv = \Covar\inv.
\end{equation}
Irrespective of the type of prediction (intra/inter), the MMSE prediction of $\vx$ from the reference samples in $\vy$ is
\begin{equation}
\vp=\Expect[{\vx|\vy}] = \Covar_{\vx\vy} \Covar\inv_{\vy}\vy = -\Precision_{\vx}\inv \Precision_{\vx\vy}\vy,
\end{equation}
and the resulting residual vector is $\vr = \vx - \vp$ with covariance 
\begin{equation}
\begin{aligned}
\Covar_{\vr} &  = \Expect[{\vr\vr\tran}] = \Expect[{(\vx-\vp) (\vx-\vp)\tran}] \\
&  = \Expect[{\vx\vx\tran + \vp\vp\tran - 2\vx\vp\tran  }] \\
& = \Covar_{\vx} +  \Covar_{\vx\vy} \Covar\inv_{\vy} \Covar_{\vy\vx} -  2 \Covar_{\vx\vy} \Covar\inv_{\vy} \Covar_{\vy\vx} \\
& = \Covar_{\vx} - \Covar_{\vx\vy} \Covar_{\vy}\inv  \Covar_{\vy\vx}.
\end{aligned}
\end{equation}
By the matrix inversion lemma  \cite{Woodbury:1950:MIL}, the precision matrix of the residual $\vr$ is shown to be equal to $\Precision_{\vx}$, that is the submatrix in (\ref{eqn:precision_partition_xy}), 
\begin{equation}
\Covar_{\vr}\inv =  (\Covar_{\vx} - \Covar_{\vx\vy} \Covar_{\vy}\inv  \Covar_{\vy\vx})\inv = 
\Precision_{\vx}.
\end{equation} 
Since we also have $\smash{\Covar_{\vx} = (\Precision_{\vx}-\Precision_{\vx\vy}\Precision_{\vy}\inv\Precision_{\vy\vx})\inv}$ by  \cite{Woodbury:1950:MIL}, the desired precision matrix can also be written as 
\begin{equation}
\Precision_{\text{residual}} = \Covar_{\vr}\inv = \Precision_{\vx} = \Covar_{\vx}\inv + \Precision_{\vx\vy}\Precision_{\vy}\inv\Precision_{\vy\vx}.
\label{eqn:residual_GMRF_general_form}
\end{equation}
This construction leads us to the following proposition formally stating the conditions for a residual signal (i.e., $\vr$) to be modeled by an attractive GMRF.
\begin{theorem}
Let the signals $\vx\!\in\!\mathbb{R}^\nvert$ and $\vy\!\in\!\mathbb{R}^{\nvert_p}$ be distributed 
based on the attractive GMRF model in (\ref{eqn:GMRF_model_xy}) with precision matrix $\Precision$.
If the residual signal $\vr$ is estimated by minimum mean square error (MMSE) 
prediction of $\vx$ from $\vy$ (i.e., $\vr\!=\!\vx \!-\! \Expect[{\vx|\vy}]$), 
then the residual signal $\vr$ is distributed as an attractive GMRF whose precision matrix is a generalized graph Laplacian (i.e., $\Precision_{\text{\normalfont residual}}$ in (\ref{eqn:residual_GMRF_general_form})). 
\label{proposition:ggl_residual}
\end{theorem}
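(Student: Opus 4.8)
The plan is to lean on Proposition~\ref{proposition:attractive}, which says that a GMRF is attractive exactly when its precision matrix is a generalized graph Laplacian (GGL). This turns the statement into two tasks: (i)~verify that $\vr$ is a zero-mean GMRF whose precision is $\Precision_{\text{residual}}$, and (ii)~verify that $\Precision_{\text{residual}}$ is a GGL. Task (i) is almost immediate: $\vr=\vx-\Expect[\vx|\vy]$ is a linear function of the jointly Gaussian pair $(\vx,\vy)$, hence Gaussian, and $\Expect[\vr]=\Expect[\vx]-\Expect[\Expect[\vx|\vy]]=\vzeros$ by the tower property, so $\vr$ is zero-mean. Its covariance $\Covar_{\vr}=\Covar_{\vx}-\Covar_{\vx\vy}\Covar_{\vy}\inv\Covar_{\vy\vx}$ was computed above and is positive definite as a Schur complement of the positive definite $\Covar$; thus $\vr\sim\Normal(\vzeros,\Covar_{\vr})$ is a nondegenerate GMRF of the form~(\ref{eqn:GMRF_general_video}) with precision $\Covar_{\vr}\inv=\Precision_{\text{residual}}$, exactly the quantity identified in~(\ref{eqn:residual_GMRF_general_form}).

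The heart of the argument is task~(ii), and I would carry it out through the identity $\Precision_{\text{residual}}=\Precision_{\vx}$ established just above via the matrix inversion lemma. By hypothesis the joint model~(\ref{eqn:GMRF_model_xy}) is attractive, so Proposition~\ref{proposition:attractive} guarantees that the full precision $\Precision$ is a GGL: symmetric, positive definite, with all off-diagonal entries nonpositive. The key observation is that, in the block partition~(\ref{eqn:precision_partition_xy}), $\Precision_{\vx}$ is exactly the top-left principal submatrix of $\Precision$. Passing to a principal submatrix preserves all three defining properties: $\Precision_{\vx}$ is symmetric; it is positive definite, being a principal submatrix of a positive definite matrix; and its off-diagonal entries satisfy $(\Precision_{\vx})_{ij}=(\Precision)_{ij}\le 0$ for $i\neq j$, since they are literally a subset of the off-diagonal entries of $\Precision$. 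A symmetric positive definite matrix with nonpositive off-diagonals is a GGL in the sense of Definition~\ref{def:useful_matrices}---one recovers nonnegative edge weights from $\Adj_{ij}=-(\Precision_{\vx})_{ij}$ and absorbs the remaining diagonal mass into self-loops. Invoking Proposition~\ref{proposition:attractive} in the reverse direction then certifies that $\vr$ is an attractive GMRF.

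I expect the proof to be short, the only delicate point being the sign pattern of $\Precision_{\text{residual}}$, which the principal-submatrix viewpoint disposes of in one line. I would explicitly warn against reasoning through the additive decomposition $\Precision_{\vx}=\Covar_{\vx}\inv+\Precision_{\vx\vy}\Precision_{\vy}\inv\Precision_{\vy\vx}$ in~(\ref{eqn:residual_GMRF_general_form}): the correction term $\Precision_{\vx\vy}\Precision_{\vy}\inv\Precision_{\vy\vx}$ is entrywise nonnegative---$\Precision_{\vx\vy}$ and $\Precision_{\vy\vx}$ have nonpositive entries while $\Precision_{\vy}\inv$, the inverse of a symmetric positive definite matrix with nonpositive off-diagonals, has nonnegative entries---so it pushes the off-diagonals toward positivity and obscures rather than reveals the required nonpositivity. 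Hence the submatrix identity, not the additive form, is the right vehicle for the proof.
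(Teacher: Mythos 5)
Your proposal is correct and follows essentially the same route as the paper: identify $\Covar_{\vr}\inv=\Precision_{\vx}$ via the matrix inversion lemma and then observe that a principal submatrix of a GGL is itself a GGL. You simply make explicit the details the paper leaves implicit (Gaussianity and zero mean of $\vr$, positive definiteness, and the sign pattern of the submatrix), and your caution against arguing through the additive form in~(\ref{eqn:residual_GMRF_general_form}) is well taken.
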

\begin{proof}
The proof follows from (\ref{eqn:GMRF_model_xy})--(\ref{eqn:residual_GMRF_general_form}) where the inverse covariance of residual signal $\vr$, $\Covar_{\vr}\inv$, is shown to be equal to $\Precision_{\vx}$. Since $\Precision_{\vx}$ is a submatrix of $\Precision$ in (\ref{eqn:precision_partition_xy}) and $\Precision$ is a GGL, $\Precision_{\text{\normalfont residual}}=\Precision_{\vx}$ is also a GGL. Hence, $\vr$ is distributed as an attractive GMRF whose precision is $\Precision_{\vx}$.
\end{proof}
 Note that Proposition \ref{proposition:ggl_residual} also applies to the 1-D signal models presented in \cite{egilmez:16:gbst} which are special cases of (\ref{eqn:GMRF_model_xy}).

\subsection{Interpretation of Graph Weights for Predictive Transform Coding}
\label{subsec:interpret_graph_weights}
 
Based on Proposition \ref{proposition:ggl_residual}, the distribution of residual signals, denoted as $\vr = {[ r_1 \cdots r_\nvert ]}\tran$, is defined by the following GMRF whose precision matrix is a GGL matrix $\lap$ (i.e., $\lap$ = $\Precision_{\text{\normalfont residual}}$), 
\begin{equation}
\label{eqn:GMR_residual_laplacian}
\Prob(\vr | \lap) =  \frac{1}{{(2\pi)}^{n/2} {\mathrm{det}({\lap})}^{-1/2}} \mathrm{exp} \hspace{-0.1cm} \left( {- \frac{1}{2} \vr\tran {\lap}\vr} \right),
\end{equation} 
where the quadratic term in the exponent can be decomposed in terms of graph weights {(i.e., $\SLoop$ and $\Adj$)} as 
\begin{equation}
\vr\tran \lap \vr  =  \sum_{i=1}^{\nvert}(\SLoop)_{ii} \, r_i^2  + \sum_{ (i,j) \in \mathcal{I} } (\Adj)_{ij} \, {(r_i - r_j)}^2
\label{eqn:GMRF_laplacian_quadratic_term}
\end{equation} 
such that $(\Adj)_{ij} \! = \! -(\lap)_{ij} $, $(\SLoop)_{ii} \! = \! \sum_{j=1}^\nvert (\lap)_{ij} $, and $\mathcal{I}  \! = \! \{ (i,j) \, | \, (v_i,v_j) \! \in \! \mathcal{E} \}$ is the set of index pairs of all vertices associated with the edge set $\mathcal{E}$. 

Based on (\ref{eqn:GMR_residual_laplacian}) and (\ref{eqn:GMRF_laplacian_quadratic_term}), it is clear that the distribution of the residual signal $\vr$ depends on edge weights ($\Adj$) and vertex weights ($\SLoop$) where  
\begin{itemize}
\item a model with larger (resp. smaller) edge weights (e.g., $\smash{(\Adj)_{ij}}$) increases the probability of having a smaller (resp. larger) squared difference between corresponding residual pixel values (e.g., $r_i$ and $r_j$),
\item a model with larger (resp. smaller) vertex weights (e.g., $\smash{(\SLoop)_{ii}}$) increases the probability of pixel values (e.g., $r_i$) with smaller (resp. larger) magnitude.
\end{itemize}

In practice, a characterization of the edge and vertex weights ($\Adj$ and $\SLoop$) can be made by estimating $\lap$ from data, which depend on inherent signal statistics and the type of prediction used for predictive coding. We empirically investigate the graph weights associated with residual signals in Section \ref{sec:graph_res_charac}.

\subsection{DCTs/DSTs as GBTs Derived from Line Graphs}
{Some types of DCTs and DSTs, including DCT-2 
and DST-7, are in fact special cases of GBTs derived from Laplacians of specific line graphs.} The relation between different types of DCTs and graph Laplacian matrices is originally discussed in \cite{Strang:1999:DCT} where DCT-2 is shown to be equal to {the GBT uniquely obtained from graph Laplacians of the following form:}
\begin{equation}
\lap_u = \begin{bmatrix}
c & -c &  &   & 0  \\
-c & 2c & -c &   \\
  & \ddots & \ddots & \ddots  \\
   & & -c & 2c & -c  \\
0 &  & & -c & c   
\end{bmatrix} \text{ for $c>0$},
\end{equation}
{which represents uniformly weighted line graphs with no self-loops (i.e., all edge weights are equal to a positive constant and vertex weights to zero).} Moreover, in \cite{egilmez:16:gbst,gene:15:GGL}, it has been shown that DST-7 is the GBT derived from a graph Laplacian $\smash{{\lap} = \lap_u + {\SLoop}}$ where $\smash{{\SLoop} =  \diag ([ c\; 0\; \cdots \; 0 ]\tran)}$ including a self-loop at vertex $v_1$ with weight $f_v(v_1)=c$. 
Based on the results in \cite{Strang:1999:DCT,Moura_Pueschel:08:algebraic_signal_proc_space}, various other types of DCTs and DSTs can be characterized using graphs. Table \ref{table:dct_dst_gbt} specifies the line graphs (with $n$ vertices $v_1,v_2,\dotsc,v_\nvert$ having self-loops at $v_1$ and $v_\nvert$) corresponding to different types of DCTs and DSTs, which are GBTs derived from graph Laplacians of the form $\smash{\widetilde{\lap} = \lap_u + \widetilde{\SLoop}}$ where $\smash{\widetilde{\SLoop} =  \diag ([f_v(v_1)\; 0 \; \cdots \; 0 \; f_v(v_\nvert) ]\tran)}$.

\begin{table}[!htb]
\caption{DCTs/DSTs corresponding to $\smash{\widetilde{\lap}}$ with different vertex weights.}
\label{table:dct_dst_gbt}
\centering
\begin{tabular}{|c|c|c|c|}
\hline
Vertex weights & $f_v(v_1)\!=\!0$ & $f_v(v_1)\!=\!{c}$& $f_v(v_1)\!=\!2{c}$ \\ \hline
$f_v(v_\nvert)\!=\!0$ & DCT-2 & DST-7 & DST-4   \\ \hline 
$f_v(v_\nvert)\!=\!{c}$ & DCT-8 & DST-1 &  DST-6   \\ \hline 
$f_v(v_\nvert)\!=\!2{c}$ & DCT-4 & DST-5 & DST-2  \\ \hline
\end{tabular}
\end{table}

{The current state-of-the-art VVC standard \cite{Bross:20:vvc10} uses separable transforms specified by pairs of DCT-2, DST-7 and DCT-8, that can be viewed as GBSTs. The GBST designs proposed in this paper are also based on line graphs whose weights, unlike those of the DCTs/DSTs, are learned from data.}

\section{Graph Learning for Graph-based Transform Design }
\label{sec:graph_learning_video}
\subsection{Generalized Graph Laplacian Estimation}
As justified in Proposition \ref{proposition:ggl_residual}, the residual signal $\vr \in \mathbb{R}^\nvert$ is modeled as an attractive GMRF, $\vr \sim \Normal(\vzeros,\Covar=\lap\inv)$, whose precision matrix is a GGL denoted by $\lap$. Assuming that we have $\nsamp$ residual signals, $\vr_1,\dotsc,\vr_\nsamp$, sampled from $\Normal(\vzeros,\Covar=\lap\inv)$,  the likelihood of a candidate $\lap$ is  
 \begin{equation}
\prod_{i=1}^{\nsamp}  \Prob({\vr_i} | \lap) \! = \!  {{(2\pi)}^{-\frac{kn}{2}} {\mathrm{det}({\lap})}^{\frac{\nsamp}{2}} } \prod_{i=1}^{\nsamp} \mathrm{ exp} \left( {- \frac{1}{2} {{\vr}_{i}}\tran {\mathbf{L}} \vr_{i}} \right).
 \label{eqn:residual_ML}
\end{equation} 
The maximization of the likelihood in (\ref{eqn:residual_ML}) can be equivalently formulated as minimizing the negative log-likelihood, that is
\begin{equation}
\begin{aligned}
\widehat{\lap}_{\textnormal{ML}} = & \argmin_{\lap}  \left\lbrace  \frac{1}{2} \sum_{i=1}^{\nsamp} \mathrm{Tr} \left(  {\vr_i}\tran \lap \vr_i  \right) - \frac{\nsamp}{2} \mathrm{ logdet} (\lap) \right\rbrace  \\
 = & \argmin_{\lap}  \left\lbrace  \mathrm{Tr} \left( \lap \CS \right) -  \mathrm{ logdet} (\lap)    \right\rbrace \\
\end{aligned}
\label{eqn:ML_est_residual}
\end{equation}
where $\CS=\frac{1}{\nsamp} \sum_{i=1}^{\nsamp} \vr_i{\vr_i}\tran$ is the sample covariance, and $\widehat{\lap}_{\textnormal{ML}}$ denotes the maximum likelihood (ML) estimate  of $\lap$. 
To find the best GGL from a set of residual signals $\{ \vr_1,\dotsc,\vr_\nsamp \}$ in a maximum likelihood sense, 
we solve the following GGL estimation problem {with connectivity constraints:} 
\begin{equation}
 \begin{aligned}
& \minimize_{\lap  \succeq 0}
& & 
 \mathrm{Tr} \left( \lap \CS \right) -\mathrm{logdet}  ( \lap )  \\
& \subjto
& &   (\lap)_{ij} \leq 0  \; \text{ if}\ (\Conn)_{ij} = 1  \\
 & & &  (\lap)_{ij} = 0    \; \text{ if}\ (\Conn)_{ij} = 0
\end{aligned}
\label{eqn:ggl_prob_residual}
\end{equation}
where $\CS$ denotes the sample covariance of residual signals, and $\Conn$ is the connectivity matrix representing the graph structure (i.e., the set of graph edges). 
In order to optimally solve (\ref{eqn:ggl_prob_residual}), we use the \emph{GGL estimation algorithm} proposed in our previous work on graph learning \cite{egilmez:2017:gl_from_data_jstsp,GLL_package:v1.0}.

\subsection{Graph-based Transform Design}
\label{subsec:opt_gbt_design_graph_learning}
To design separable and nonseparable GBTs {(GBSTs and GBNTs)}, we solve instances of  (\ref{eqn:ggl_prob_residual}) {denoted as $\mathsf{GGL}(\CS,\Conn)$ with different connectivity constraints represented by $\Conn$.} Then, the optimized GGL matrices are used to derive GBTs.

\noindent {\bf{{Graph learning oriented GBST (GL-GBST)}}.}
For the {GL-GBST} design, we solve two instances of (\ref{eqn:ggl_prob_residual}) to optimize two separate line graphs used to derive $\mathbf{U}_{\text{row}}$ and $\mathbf{U}_{\text{col}}$ in (\ref{eqn:gbst_video}). Since we wish to design a separable transform, each line graph can be optimized independently\footnote{Alternatively, joint optimization of the transforms associated with rows and columns has been  proposed in \cite{egilmez:2016:rct,Pavez:2017:joint_gbst}.}.  
Thus, our basic goal is finding the best line graph pair based on sample covariance matrices ${{\mathbf{S}}}_{\text{row}}$ and ${{\mathbf{S}}}_{\text{col}}$ created from rows and columns of residual block signals. 
For $N \times N$ residual blocks, the proposed {GL-GBST} construction has the following steps:
\begin{enumerate}
\item Create the connectivity matrix ${\Conn_{\text{line}}}$ representing a line graph structure with $\nvert = N$ vertices as in Fig.~\ref{fig:intra_inter_models_1D}.
\item Obtain two $N \!\times\! N$ sample covariances, $\CS_{\text{row}}$ and $\CS_{\text{col}}$, from rows and columns of size $N$, respectively, obtained 
from residual blocks in the dataset.
\item Solve instances of the problem in (\ref{eqn:ggl_prob_residual}), $\mathsf{GGL}(\CS_{\text{row}},{\Conn_{\text{line}}})$ and $\mathsf{GGL}(\CS_{\text{col}},{\Conn_{\text{line}}})$, 
by using {the GGL estimation algorithm \cite{egilmez:2017:gl_from_data_jstsp}}
to learn Laplacians $\lap_{\text{row}}$ and $\lap_{\text{col}}$ representing line graphs, respectively.
\item Perform eigendecomposition on $\lap_{\text{row}}$ and $\lap_{\text{col}}$ to obtain GBTs, $\gbt_{\text{row}}$ and $\gbt_{\text{col}}$, which define the GBST as in (\ref{eqn:gbst_video}).
\end{enumerate}
\noindent {\bf{{Graph learning oriented GBNT (GL-GBNT)}}.} 
Similarly, for $N\times N$ residual block signals, we propose the following steps to design a {GL-GBNT}:
\begin{enumerate}
\item Create the connectivity matrix $\Conn$ based on a desired graph structure {supporting $\nvert = N^2$ vertices}. {In this paper, we use connectivity constraints based on grid graphs\footnote{{According to our experiments across different prediction modes, using grid graph constraints led to a better coding efficiency as compared to using constraints that further allow edges between the nearest diagonal pixels.}}, as illustrated in Fig.~\ref{fig:intra_inter_models_2D}.} 
\item Obtain $N^2 \!\times\! N^2$ sample covariance $\CS$ using residual block signals in the dataset (after vectorizing the block signals).
\item Solve the problem  $\mathsf{GGL}(\CS,\Conn)$ by using {the GGL estimation algorithm \cite{egilmez:2017:gl_from_data_jstsp}}
to estimate a Laplacian $\lap$.
\item Perform eigendecomposition on $\lap$ to obtain the $N^2 \!\times\! N^2$ GBNT, $\gbt$ defined in (\ref{eqn:gbnt_video}).
\end{enumerate}

{\subsection{Computational complexity of GL-GBTs}
\label{subsec:complexity_glgbt}
In practice, GL-GBTs do not introduce an additional computational complexity over other transform types implemented using full-matrix multiplications\footnote{{In VVC \cite{Bross:20:vvc10}, the adopted variant of nonseparable transforms in \cite{xin:2016:nnst} as well as the DST-7 and DCT-8 are implemented using full-matrix multiplications.}}. 
{The recent work in \cite{egilmez:2020:paramteric_gbst_arxiv} also empirically demonstrates that GL-GBSTs do not increase encoder/decoder run-time over VVC \cite{Bross:20:vvc10}, if they are used to replace the separable transforms derived from DST-7 and DCT-8.}
However, it is important to note that nonseparable transforms (such as the ones in VVC) are inherently complex as they require $N^2$ multiplications per pixel for an $N \times N$ block signal, while separable transforms only need $2N$ multiplications per pixel. To alleviate the complexity of nonseparable transforms, the methods providing low-complexity approximations or decompositions in \cite{egilmez:2016:rct,said:2016:hygt,Said:19:taf} can be applied instead of using full-matrix multiplications.}

\subsection{Theoretical Justification for GL-GBTs} 
\label{sec:opt_ggl}
It has been shown that KLT is optimal for transform coding of jointly Gaussian sources in terms of mean-square error (MSE) criterion under high-bitrate assumptions \cite{GershoGray:1991:VQ_book,goyal:2001:TC_Magazine,Mallat:2008:WaveletTour}. Since GMRF models lead to jointly Gaussian distributions, the corresponding KLTs are optimal in theory. However, in practice, a KLT 
is obtained by eigendecomposition of the associated sample covariance, which has to be estimated from a training dataset where 
the number of data samples may not be sufficient to accurately recover the parameters. As a result, the sample covariance may lead a poor estimation of the actual model parameters\cite{johnstone2009consistency,ravikumar:2011:bounds_inverse_cov_estimation}. To improve estimation accuracy and alleviate overfitting, it is often useful to reduce the number of model parameters by introducing model constraints and regularization. From the statistical learning theory perspective \cite{vapnik:1999:slt,loxburg:2011:slt}, 
the advantage of our proposed GL-GBT over KLT is that KLT requires learning $\bigO(n^2)$ model parameters while GL-GBT only needs $\bigO(n)$, given the connectivity constraints in (\ref{eqn:ggl_prob_residual}). Therefore, our graph learning approach provides better \emph{generalization} in learning the signal model by taking into account variance-bias tradeoff. This advantage can also be justified based on the following error bounds characterized in \cite{vershynin:2012:how_close_covariance,ravikumar:2011:bounds_inverse_cov_estimation}. 
Assuming that $\nsamp$ residual blocks are used for calculating the sample covariance $\CS$, under general set of assumptions, the error bound for estimating $\Covar$ with $\CS$ derived in \cite{vershynin:2012:how_close_covariance} is 
\begin{equation}
|| \Covar - \CS ||_F= \bigO \left( \sqrt{\frac{\nvert^2 \mathrm{log}(\nvert)}{\nsamp}} \right),
\end{equation}
while estimating the precision matrix $\Precision$ by using the proposed graph learning approach leads to the following bound shown in \cite{ravikumar:2011:bounds_inverse_cov_estimation},  
\begin{equation}
|| \Precision - \lap ||_F= \bigO \left( \sqrt{\frac{\nvert \mathrm{log}(\nvert)}{\nsamp}} \right),
\end{equation}
where $\lap$ denotes the estimated GGL. Thus, in terms of the worst-case errors (based on Frobenius norm), 
the proposed method provides a better model estimation as compared to the estimation based on the sample covariance. 
Section \ref{sec:results} empirically justifies the advantage of GL-GBT against KLT.

\section{Edge-Adaptive Graph-based Transforms}
\label{sec:eagbt}
The optimality of GL-GBTs relies on the assumption that the residual signal characteristics are the same across different blocks. 
However, in practice, video blocks often exhibit 
complex image edge structures that can degrade the coding performance  
when the transforms are designed from average statistics without any classification based on image edges. 
In order to achieve better compression for video signals with image edges,
we propose edge-adaptive GBTs (EA-GBTs), 
designed on a per-block basis, by constructing a graph 
whose weights are determined based on the salient image edges in each residual block. 

\subsection{EA-GBT Construction}
To design an EA-GBT for a residual block, we first detect image edges based on a threshold $T_{\text{edge}}$ applied on gradient values, 
obtained using the \emph{Prewitt} operator on the block. Then, edge weights of a predefined graph are modified according to the locations of detected image edges, and the resulting graph is used to derive the associated GBT. 
As depicted in Fig.~\ref{fig:example_eagbt_const}, 
to construct a graph, we start with a uniformly weighted grid graph for which all edge weights are equal to a fixed constant $w_c$ (Fig.~\ref{fig:example_eagbt_edgy_init_graph}). 
Then, the detected image edges on a given residual block (Fig.~\ref{fig:example_eagbt_edgy_image}) are used to determine the co-located edges in the graph, and the corresponding weights are reduced as $w_e = w_c/s_{\text{edge}}$ (Fig.~\ref{fig:example_eagbt_edgy_graph}), where $s_{\text{edge}} \geq 1$ is a parameter modeling the sharpness of image edges (i.e., the level of differences between pixel values in presence of an image edge). Thus, a larger $s_{\text{edge}}$ leads to smaller weights on edges connecting pixels (vertices) with an image edge in between.

According to our simulations on residual data, coding gains are observed for $s_{\text{edge}} > 10$, which empirically corresponds to residuals with an intensity difference of at least 12 (between pixels adjacent to an image edge)\footnote{{Residual signals are in 9-bit signed integer precision, since the input video sequences used in our experiments are all in 8-bit unsigned integer precision.}}. In our experiments, a conservative threshold of $T_{\text{edge}} = 10$ is used for image edge detection, and the parameter $s_{\text{edge}}$ is set to 10. 
Although the proposed EA-GBT design can be extended with multiple $s_{\text{edge}}$ parameters, our experiments showed that such extensions do not provide a good rate-distortion trade-off due to the additional signaling overhead. For efficient signaling of detected edges, we employ arithmetic edge coding (AEC) \cite{gene:2014:aec}, a state-of-the-art binary edge-map codec.

From the compression perspective, the EA-GBT construction can also be viewed as a classification procedure, so that each residual block (e.g., in Fig.~\ref{fig:example_eagbt_edgy_image}) is assigned to a class of signals associated with an attractive GMRF, whose corresponding graph (i.e., GGL) is determined by $s_{\text{edge}}$ and image edge detection based on $T_{\text{edge}}$ (e.g., in Fig.~\ref{fig:example_eagbt_edgy_graph}). 
By using attractive GMRFs, 
the following subsection theoretically validates our experimental observation of achieving coding gains for $s_{\text{edge}} > 10$.

{\subsection{Computational complexity of EA-GBTs}
As GL-GBTs, EA-GBTs need to be implemented based on full-matrix multiplications, whose implications on complexity are discussed in Section \ref{subsec:complexity_glgbt}. 
Moreover, block-adaptive coding schemes such as \cite{Shen:10:eat,hu:14:PWS,Fracastoro:2018:apprx_graph} and EA-GBT introduce an additional computational overhead since transforms need to be constructed on a per-block basis. In practice, this additional complexity can be eliminated for EA-GBT by using a predetermined set of graphs (such as graph templates introduced in \cite{egilmez:15:gbt_inter}) capturing a fixed number of image edge patterns in an RDOT scheme, where the associated GBTs are stored and not constructed per-block. In this paper, in order to evaluate the best case performance of EA-GBT we do not impose any restrictions on the image-edge patterns and evaluate the coding efficiency of EA-GBT as a block-adaptive scheme.}

\begin{figure}[!t]
\centering
        \subfloat[Initial graph]{ %
       \includegraphics[trim=60 20 60 20,clip,width=.16\textwidth]{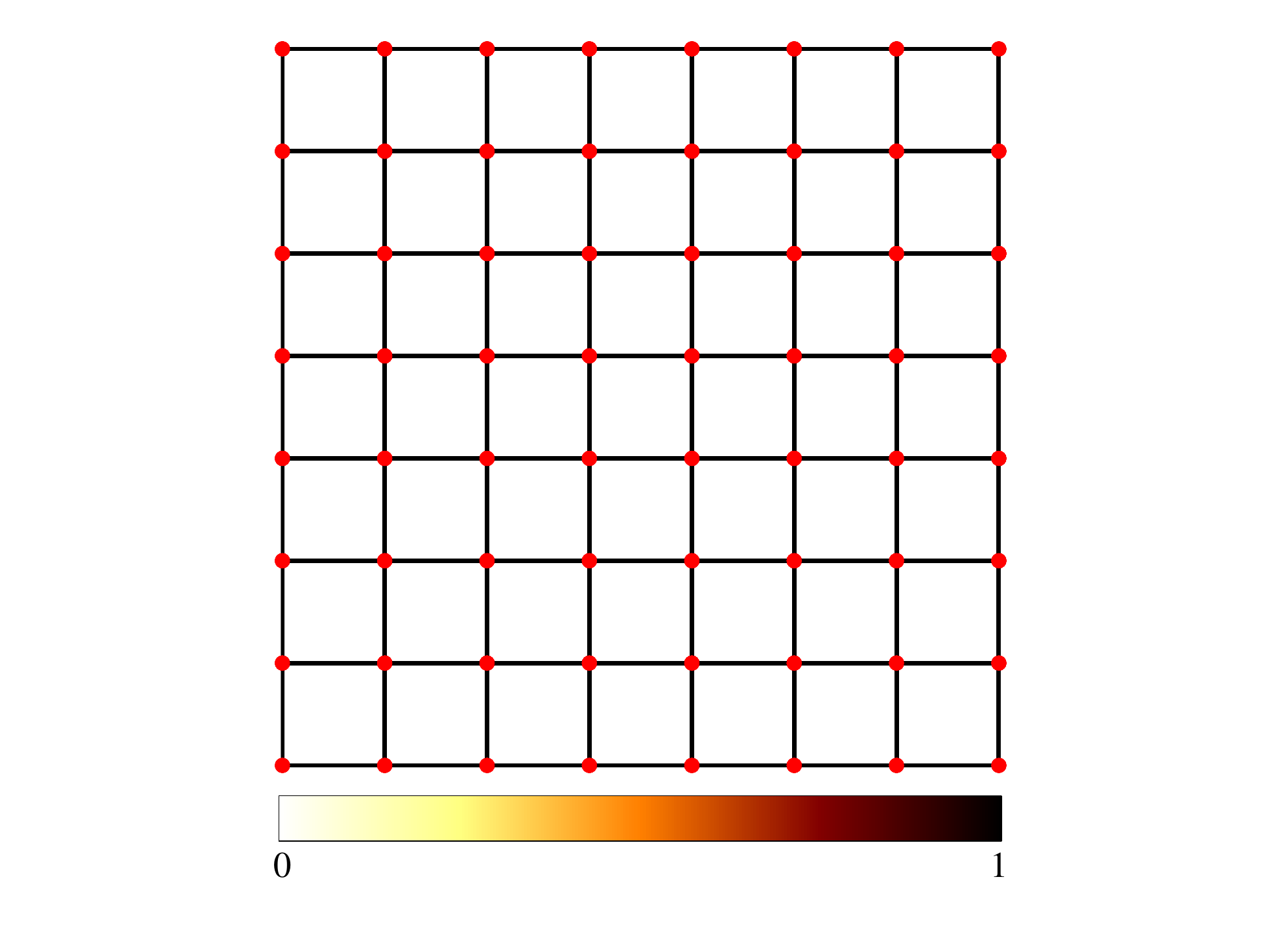}\label{fig:example_eagbt_edgy_init_graph}}
        \subfloat[Residual block signal]{%
       \includegraphics[trim=60 0 60 20,clip,width=.16\textwidth]{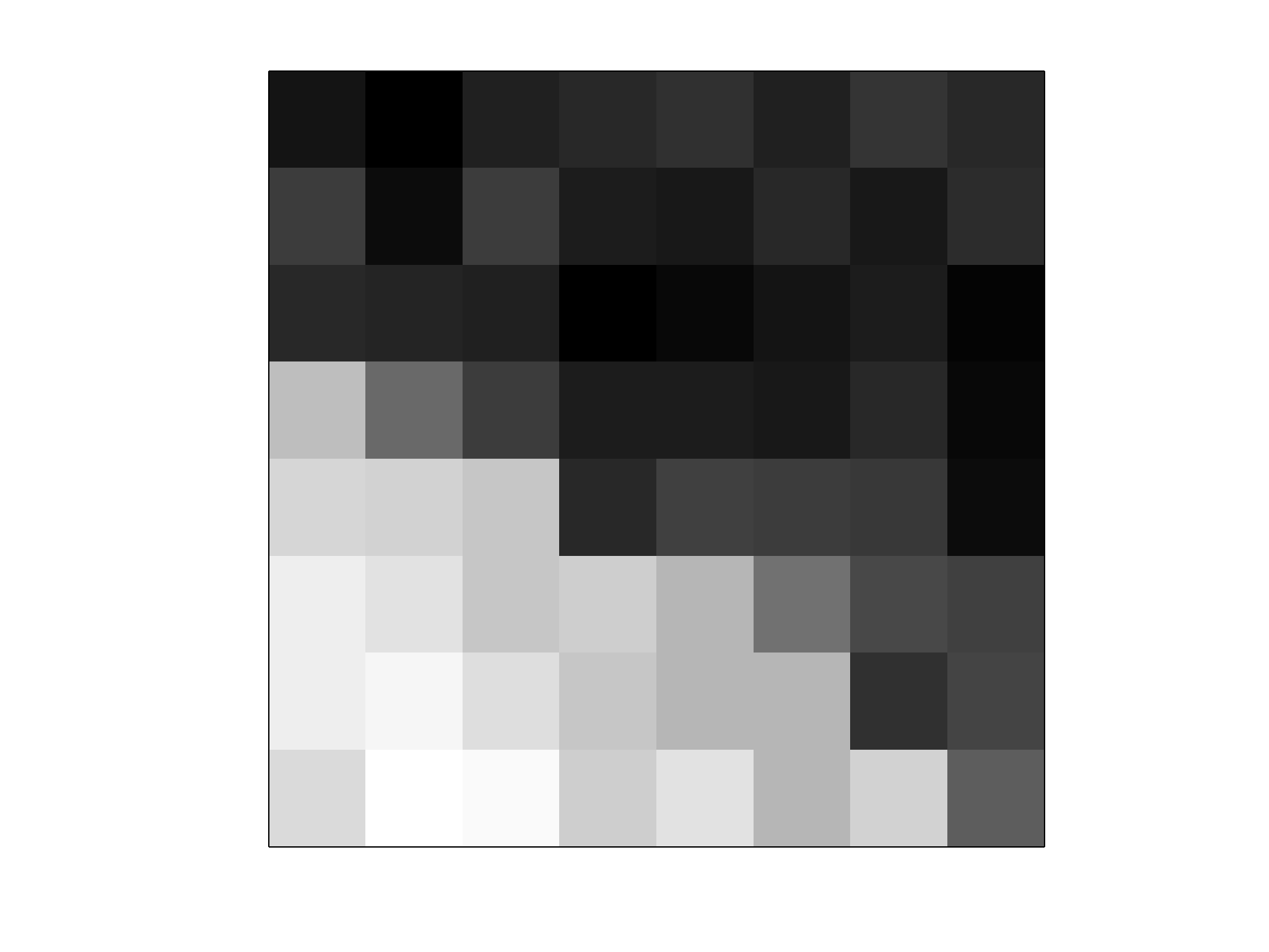} 
       \label{fig:example_eagbt_edgy_image}}
        \subfloat[Constructed graph]{%
       \includegraphics[trim=60 20 60 20,clip,width=.16\textwidth]{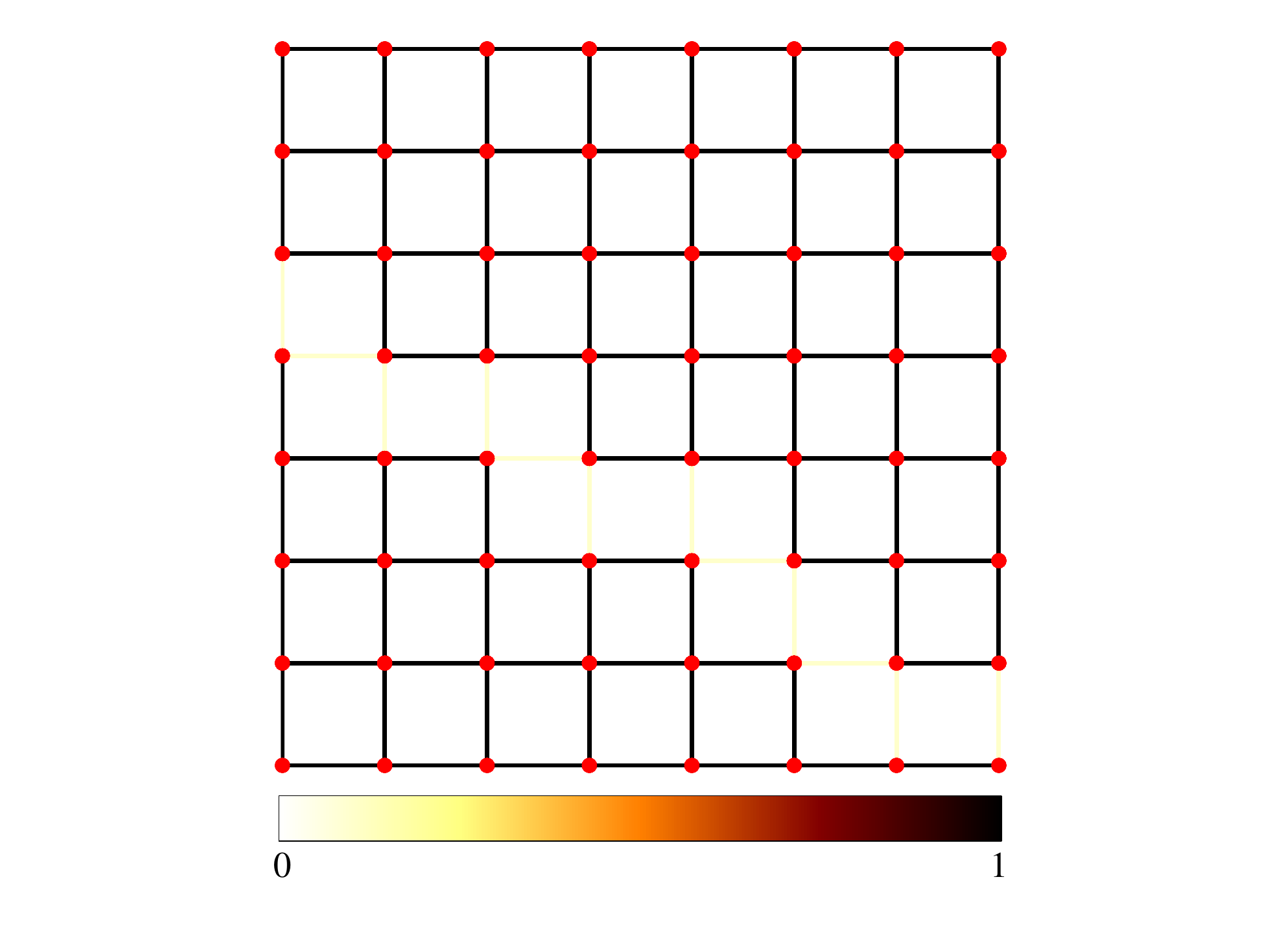}
       \label{fig:example_eagbt_edgy_graph}}
       \caption{An illustration of graph construction for a given $ 8\times 8$ residual block signal where $w_c=1$ and $w_e = w_c/s_{\text{edge}}=0.1$ where $s_{\text{edge}}=10$ .} 
       \label{fig:example_eagbt_const}
\end{figure}

\begin{figure}[!t]
\centering
       \includegraphics[width=.4\textwidth]{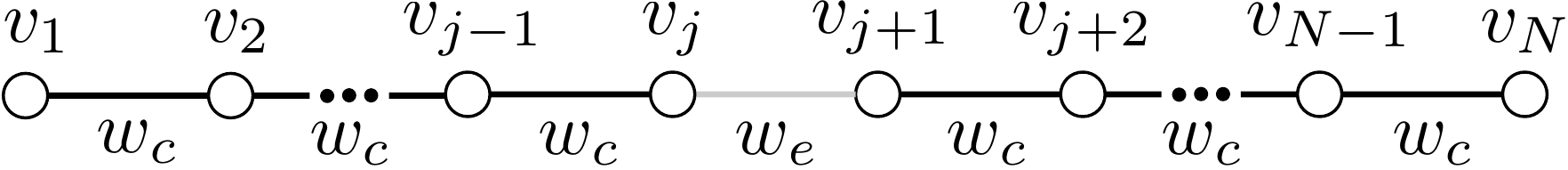}
       \caption{A 1-D graph-based model with an image edge at location $l=j$. All black colored edges have weights equal to $w_c$, and the gray edge between vertices $v_j$ and $v_{j+1}$ is weighted as $w_e = w_c/s_{\text{edge}}$.} 
       \label{fig:edge_justification}
\end{figure}

\subsection{Theoretical Justification for EA-GBTs}
We present a theoretical justification for advantage of EA-GBTs over KLTs. For the sake of simplicity, our analysis is based on 1-D 
models with a single image edge, whose the location 
$l$ is uniformly distributed as 
\begin{equation}
\Probm(l=j) = \begin{cases}  
\frac{1}{N-1} & \text{ for } j=1,\dotsc,N-1 \\
\hspace{0.25cm} 0   & \text{ otherwise } 
\end{cases} 
\label{eqn:uniform_dist_edge}
\end{equation}
where $N$ is the number of pixels (i.e., vertices) on the line graph depicted in Fig.~\ref{fig:edge_justification}. This construction leads to a Gaussian mixture distribution based on $M=N-1$ attractive GMRFs,
\begin{equation}
\label{eqn:mixed_distr}
\Prob(\vx) = \sum_{j=1}^{M} \Probm(l=j) \, \Normal(\mathbf{0},\Covar_j)
\end{equation}
with $\Covar_j$ denoting the covariance of the $j$-th attractive GMRF,
whose corresponding graph has an image edge between pixels $v_j$ and $v_{j+1}$ as illustrated in Fig.~\ref{fig:edge_justification}. 
Since $\vx$ follows a Gaussian mixture distribution, the KLT obtained from the covariance of $\vx$ (which implicitly performs a second-order approximation of the distribution) is suboptimal in MSE sense \cite{effros:2004:suboptimal_klt}. 
Especially, with many possible image edge locations and different orientations, the underlying distribution may contain a large number of mixtures (i.e., a large $M$), which makes learning a model from average statistics inefficient.
On the other hand, the proposed EA-GBT removes the uncertainty due to the random variable $l$ by detecting the location of the image edge in pixel (vertex) domain, and then constructing a GBT based on the detected image edge. Yet, EA-GBT requires allocating additional bits to represent the image edge (side) information, while KLT only allocates bits for coding transform coefficients. 

\begin{figure}[!t]
\centering
 {\resizebox{0.5\textwidth}{!}{\input{cg_sharp2.tex}}}
\caption{Coding gain ($\mathsf{cg}$) versus $s_{\text{edge}}$  
for block sizes with $N=4,8,16,32,64$. 
EA-GBT provides better coding gain (i.e., $\mathsf{cg}$ is negative) when $s_{\text{edge}}$ is larger than $10$ across different block sizes.}
\label{fig:coding_gain_eagbt_high_bitrate}
\centering
 {\resizebox{0.5\textwidth}{!}{\input{cg_waterfill.tex}}}
\caption{Coding gain ($\mathsf{cg}$) versus bits per pixel ($R/N$) for different edge sharpness parameters $s_{\text{edge}}=10,20,40,100,200$. EA-GBT provides better coding gain (i.e., $\mathsf{cg}$ is negative) if $s_{\text{edge}}$ is larger than $10$ for different block sizes.}
\label{fig:coding_gain_eagbt_water_filling}
\end{figure}

To demonstrate the rate-distortion tradeoff between KLT and EA-GBT based coding schemes, we use classical rate-distortion theory results with high-bitrate assumptions\cite{GershoGray:1991:VQ_book,goyal:2001:TC_Magazine,Mallat:2008:WaveletTour}, in which the distortion ($D$) can be written as a function of bitrate ($R$), 
\begin{equation}
D(\bar{R}) = \frac{N}{12} 2^{2\bar{H}_d} 2^{-2\bar{R}}
\label{eqn:D(R)_high_bitrate}
\end{equation}
with 
\begin{equation}
\bar{R} = \frac{R}{N} \quad \text{ and } \quad \bar{H}_d = \frac{1}{N} \sum_{i=1}^{N} {H}_d((\mathbf{c})_i)
\end{equation}
where $R$ denotes the total bitrate allocated to code transform coefficients in $\mathbf{c} \!=\! \gbt\tran \vx$, and $\smash{{H}_d((\mathbf{c})_i)}$ is the differential entropy of transform coefficient $\smash{(\mathbf{c})_i}$. For EA-GBT, $R$ is allocated to code both transform coefficients ($R^{\text{coeff}}_{\text{EA-GBT}}$) and side information ($R^{\text{edge}} $), so we have
\begin{equation}
R = R^{\text{coeff}}_{\text{EA-GBT}} + R^{\text{edge}} =  R^{\text{coeff}}_{\text{EA-GBT}} + \mathrm{log}_2 (M)
\end{equation}
while for KLT, the bitrate is allocated only to code transform coefficients ($R^{\text{coeff}}_{\text{KLT}}$), so that 
$R = R^{\text{coeff}}_{\text{KLT}}$. 
Fig.~\ref{fig:coding_gain_eagbt_high_bitrate} shows the coding gain of EA-GBT over KLT for different sharpness parameters (i.e., $s_{\text{edge}}$) in terms of  the following metric, called coding gain,
\begin{equation}
\mathsf{cg}({D_{\text{EA-GBT}}},{D_{\text{KLT}}}) = 10 \, \mathrm{log}_{10} \left( \frac{D_{\text{EA-GBT}}}{D_{\text{KLT}}} \right)
\label{eqn:coding_gain_eagbt}
\end{equation}
where ${D_{\text{EA-GBT}}}$ and ${D_{\text{KLT}}}$ denote distortion levels measured at high-bitrate regime for EA-GBT and KLT, respectively. EA-GBT provides better compression for negative $\mathsf{cg}$ values in Fig.~\ref{fig:coding_gain_eagbt_high_bitrate} which appear when the sharpness of edges  $s_{\text{edge}}$ is large (e.g., $s_{\text{edge}} > 10$). 

Note that the distortion function in (\ref{eqn:D(R)_high_bitrate}) is derived based on high-bitrate assumptions. 
To characterize rate-distortion tradeoff for different bitrates, we employ the reverse water-filling technique \cite{Cover:1991:EIT,GershoGray:1991:VQ_book} by varying the parameter $\theta$ in (\ref{eqn:distortion_Di}) to obtain rate and distortion measures as follows
\begin{equation}
R(D) = \sum_{i=1}^{N} \frac{1}{2} \mathrm{log}_2\left( \frac{\lambda_i}{D_i}  \right)
\end{equation} 
where $\lambda_i$ is the $i$-th eigenvalue of the signal covariance, and 
\begin{equation}
D_i = \begin{cases} 
       \lambda_i & \text{ if } \lambda_i \geq \theta  \\
       \theta & \text{ if }  \theta < \lambda_i
\end{cases}
\label{eqn:distortion_Di}
\end{equation}
so that $D = \sum_{i=1}^{N} D_i$. {The rate ($R(D)$) and distortion ($D$) for KLT are estimated based on the covariance associated with (\ref{eqn:mixed_distr}). 
For EA-GBT, they are averaged over the metrics obtained for $M$ covariances 
(i.e., $\Covar_j$ for $j=1,\dotsc,M$), as their frequency of occurrence follows the uniform distribution in (\ref{eqn:uniform_dist_edge}).}

Figure~\ref{fig:coding_gain_eagbt_water_filling} illustrates the coding gain formulated in (\ref{eqn:coding_gain_eagbt}) achieved at different bitrates, where each curve correspond to a different $s_{\text{edge}}$ parameter. 
Similar to Fig.~\ref{fig:coding_gain_eagbt_high_bitrate}, EA-GBT 
leads to a better compression if the sharpness of edges, $s_{\text{edge}}$, is large  (e.g., $s_{\text{edge}}>10$ for $R/N >0.6$)\footnote{In practice, $R/N > 0.6$ is typically achieved at quantization parameters \cite{Sullivan:12:hevc} smaller than 32 in video video coding.}. At low-bitrates (e.g., $R/N<0.6$), EA-GBT can perform worse than KLT for $s_{\text{edge}}\!=\!20,40$, yet EA-GBT outperforms as bitrate increases.

\section{Residual Block Signal Characteristics and Graph-based Models}
\label{sec:graph_res_charac}


In this section, we discuss statistical characteristics of intra and inter predicted residual blocks, and empirically justify our theoretical analysis and observations in Section \ref{sec:models}.
Our empirical results are based on residual signals obtained by encoding 5 different video sequences  (\emph{City}, \emph{Crew}, \emph{Harbour}, \emph{Soccer} and \emph{Parkrun}{\cite{Xiph:video_dataset}}) using the HEVC reference software (HM-14) \cite{Sullivan:12:hevc} at 4 different QP parameters ($QP\!=\!\{22,27,32,37\}$). 
Although the HEVC standard does not implement optimal MMSE prediction (which is the main assumption in Section \ref{sec:models}), it includes 35 intra and 8 inter prediction modes, which provide reasonably good prediction for different classes of block signals.

Figs.~\ref{fig:graph_weights_intra_planar}--\ref{fig:graph_weights_inter_rectangular} depict statistical characteristics of $8 \times 8$ residual signals for a few intra and inter prediction modes\footnote{For $4\times 4$ and $16\times 16$ residual blocks, the structure of sample variances and graphs are quite similar to the ones in Figs.~\ref{fig:sample_variance_Planar}--\ref{fig:sample_variance_Nx2N}.}. In these figures, sample variances of residuals and corresponding graph-based models are illustrated. Both grid and line graphs with normalized edge and vertex weights\footnote{{Section \ref{sec:notation_prelim} formally defines edge and vertex weights, denoted by $(\Adj)_{ij} = (\Adj)_{ji} = f_w((v_i,v_j))$ and $(\SLoop)_{ii}=f_v(v_i)$, used for deriving GGLs.}} are estimated from  residual data by solving the GGL estimation problem in (\ref{eqn:ggl_prob_residual}) used for {GL-GBNT} and {GL-GBST} construction.

Naturally, residual blocks have different statistical characteristics depending on the type of prediction and the prediction mode.
The sample variances shown in Figs.~\ref{fig:sample_variance_Planar}--\ref{fig:sample_variance_Nx2N}
for different prediction modes lead us to the following observations:
\begin{itemize}
\item As expected, inter predicted blocks have smaller sample variance (energy) across pixels compared to intra predicted blocks, because inter prediction provides better prediction with larger number of reference pixels as shown in Fig.~\ref{fig:intra_inter_models_2D}.
\item In intra prediction, sample variances are generally larger at the bottom-right part of residual blocks, since reference pixels are located at the top and left of a block where the prediction is relatively better. This holds specifically for planar, DC and diagonal modes using pixels on both top and left as references for prediction. 
\item For some angular modes including intra horizontal/vertical mode, only left/top pixels are used as references. In such cases 
the sample variance gets larger as distance from reference pixels increases. Fig.~\ref{fig:sample_variance_Hori} illustrates sample variances corresponding to the horizontal mode.
\item In inter prediction, sample variances are larger around the boundaries and corners of the residual blocks 
mainly because of occlusions leading to partial mismatches between reference and predicted blocks.
\item In inter prediction, PU partitions lead to larger residual energy around the partition boundaries as shown in Fig.~\ref{fig:sample_variance_Nx2N} corresponding to horizontal PU partitioning ($N\!\times\! 2N$) . 
\end{itemize}
Moreover, inspection of the estimated graphs in  Figs.~\ref{fig:graph_weights_intra_planar}--\ref{fig:graph_weights_inter_rectangular} leads to following observations, which validate our theoretical analysis and justify the interpretation of model parameters in terms of graph weights discussed in Section \ref{sec:models}:
\begin{itemize}
\item  Irrespective of the prediction mode/type, vertex (self-loop) weights tend to be larger for the pixels 
that are connected to reference pixels. Specifically, in intra prediction, graphs have larger vertex weights 
for vertices (pixels) located at the top and/or left boundaries of the block (Figs.~\ref{fig:graph_weights_intra_planar}--\ref{fig:graph_weights_intra_diagonal}), while the vertex weights are approximately uniform across vertices in inter prediction (Figs.~\ref{fig:graph_weights_inter_square} and \ref{fig:graph_weights_inter_rectangular}).

\item In intra prediction, the grid and line graphs associated with planar and DC modes are similar in structure (Figs.~\ref{fig:graph_weights_intra_planar} and \ref{fig:graph_weights_intra_dc}), where their edge weights decrease as the distance of edges to the reference pixels increase. Also, vertex weights are larger for the pixels located at top and left boundaries, 
since planar and DC modes use reference pixels from the both sides (top and left). 
These observations indicate that the prediction performance gradually decreases for pixels increasingly farther away 
 from the reference pixels.

\item For intra prediction with horizontal mode (Fig.~\ref{fig:graph_weights_intra_horizontal}), 
the grid graph has larger vertex weights at the left boundary of the block. 
This is because the prediction only uses reference pixels on the left side of the block.
Therefore, the line graph associated to rows has a large self-loop at the first pixel, 
while the other line graph has no dominant vertex weights.  
However, grid and line graphs for the diagonal mode (Fig.~\ref{fig:graph_weights_intra_diagonal}), 
are more similar to the ones for planar and DC modes, 
since the diagonal mode also uses the references from both top and left sides. 

\item For inter prediction with PU mode $2N\! \times \! 2N$ (do not perform any partitioning), 
the graph weights (both vertex and edge weights) are approximately uniform 
across the different edges and vertices (Fig.~\ref{fig:graph_weights_inter_square}). 
This shows that the prediction performance is similar at different locations (pixels).  
In contrast,  the graphs for the PU mode $N\! \times \! 2N$ (performs horizontal partitioning) leads to smaller edge weights around the PU partitioning (Fig.~\ref{fig:graph_weights_inter_rectangular}). In the grid graph, we observe smaller weights between the partitioned vertices. Among line graphs, only the line graph designed for columns has a small weight in the middle, as expected. 
\end{itemize}

\begin{figure*}[htbp!]
\centering
\subfloat[Variances per pixel\label{fig:sample_variance_Planar}]
{\includegraphics[trim=60 80 35 67,clip,height=.18\textwidth,valign=c]{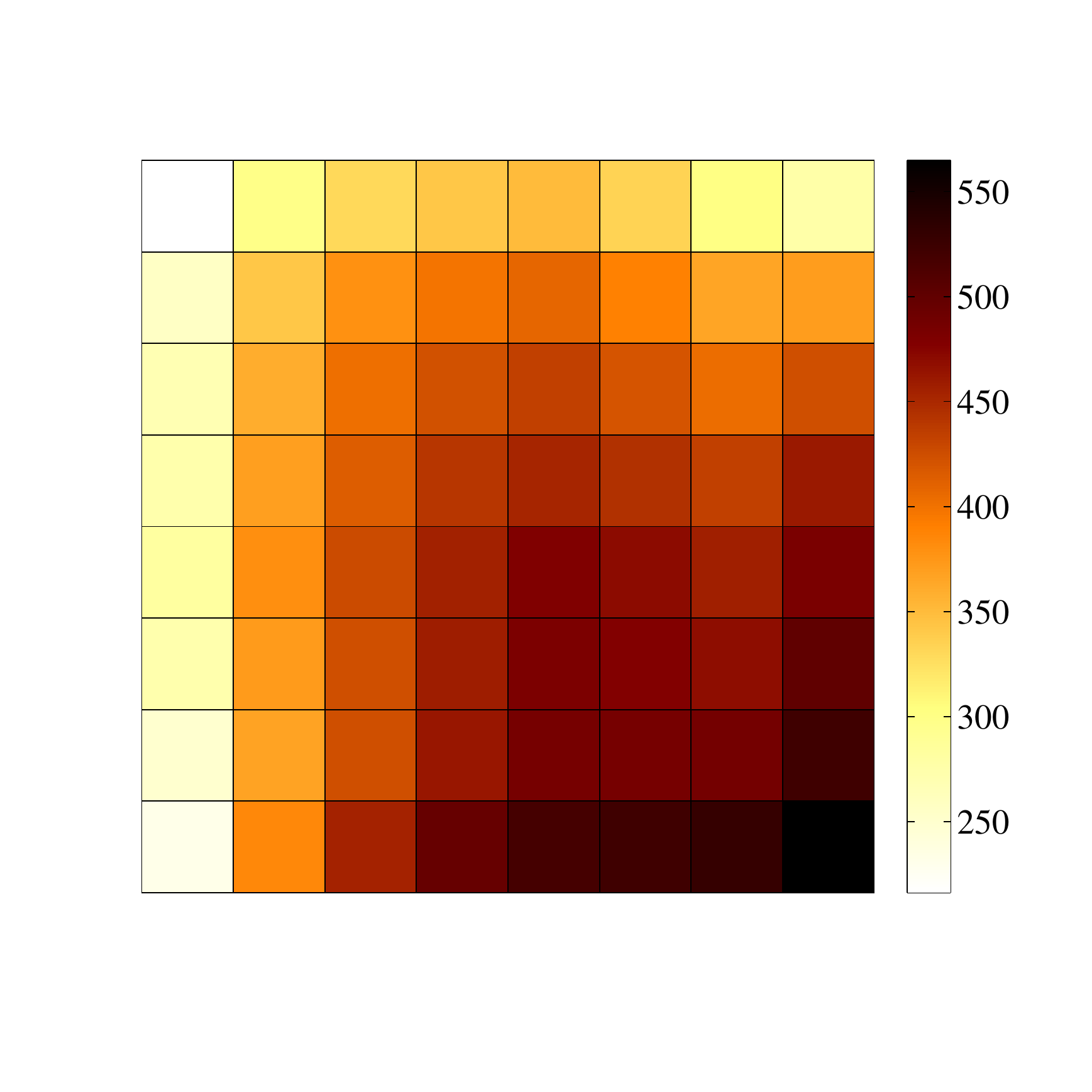}\vphantom{\includegraphics[trim=55 33 50 20,clip,width=.18\textwidth,valign=c]{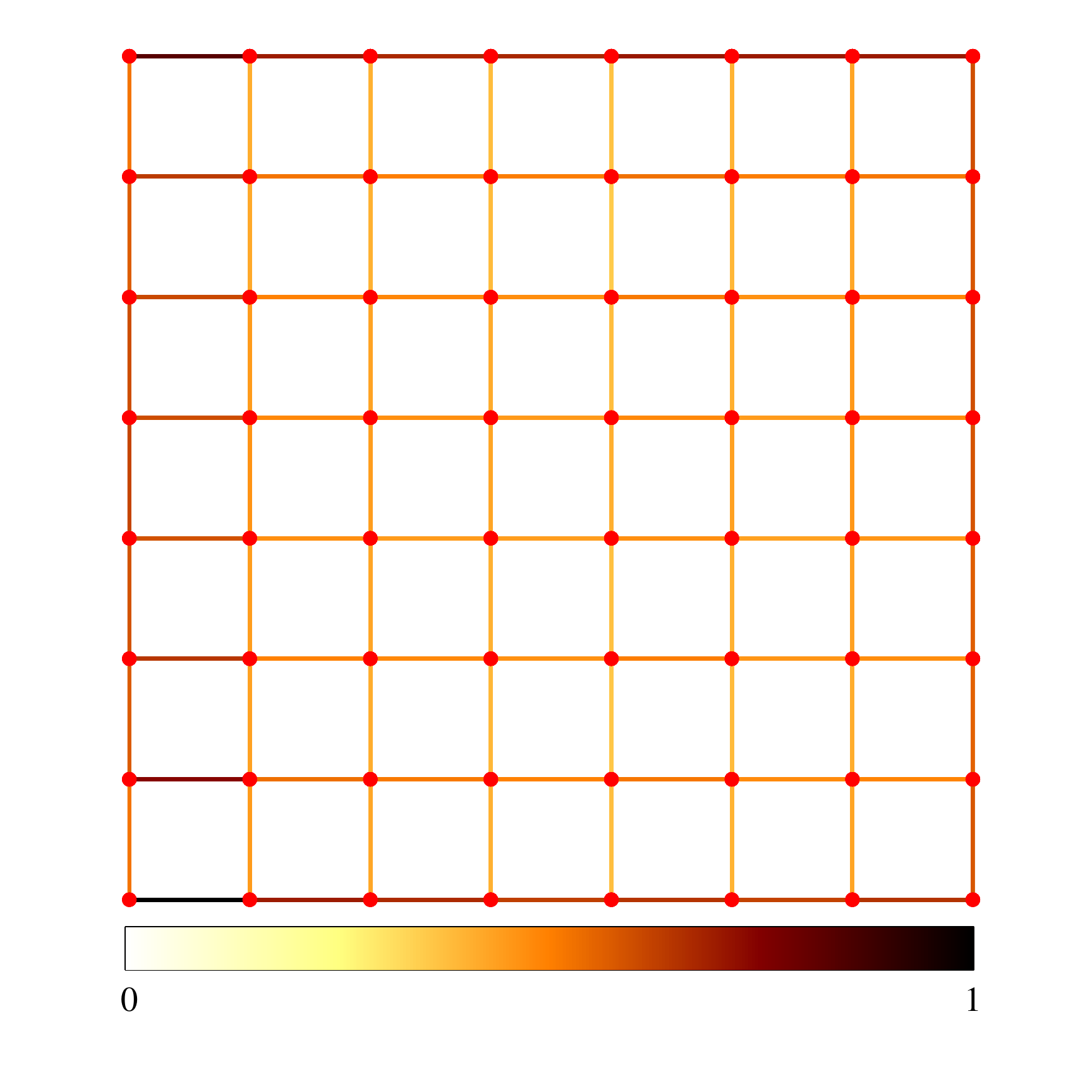}}}\;
    \subfloat[Grid graph weights]{\includegraphics[trim=55 33 50 20,clip,width=.18\textwidth,valign=c]{NonSep8Intra1_Graph-eps-converted-to.pdf}\;
    \includegraphics[trim=75 56 60 35,clip,width=.18\textwidth,valign=c]{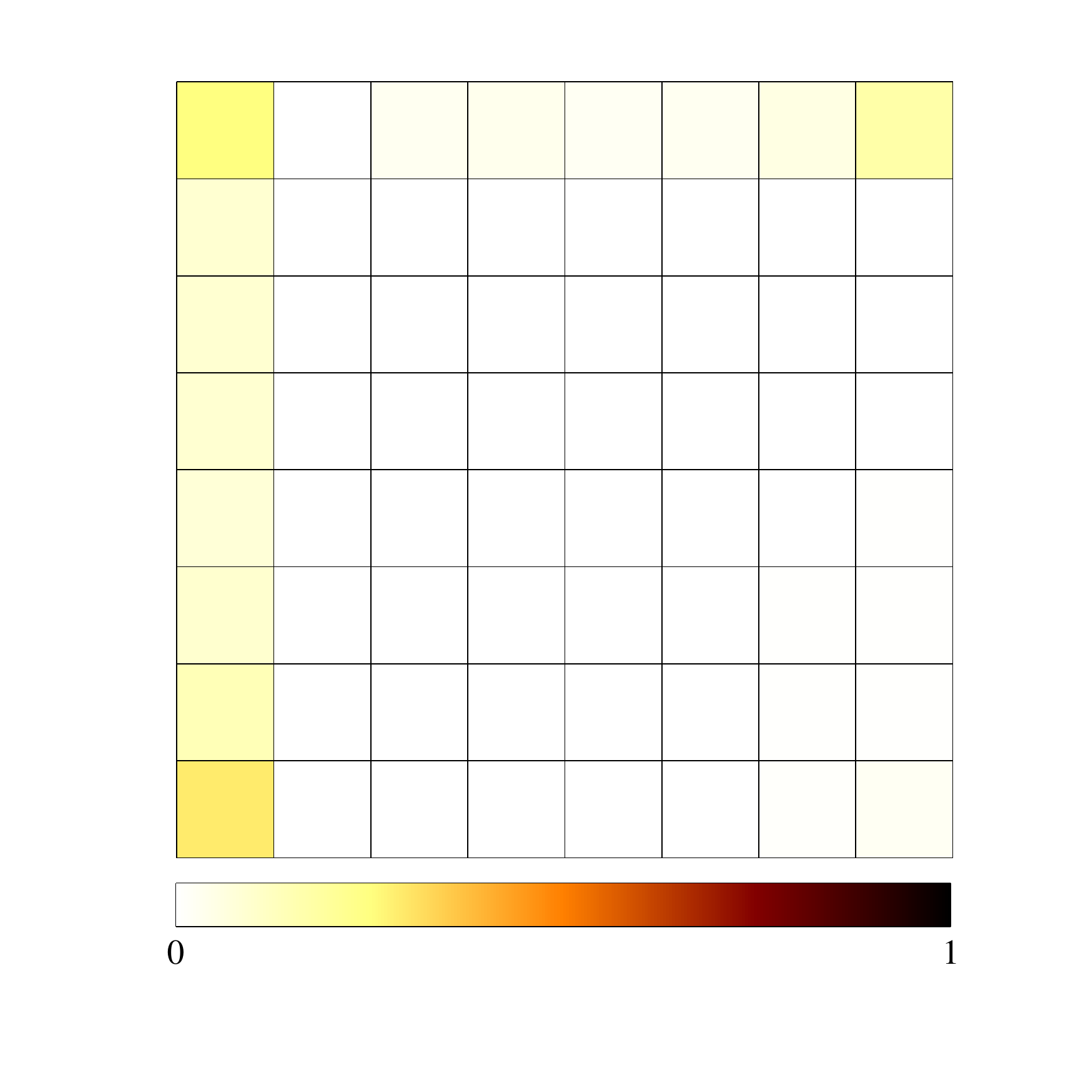}}\;
  \subfloat[Line graph weights for (top) rows  and (bottom) columns]{
  \begin{tabular}{@{}c@{}}
  \includegraphics[trim=55 200 50 185,clip,width=.18\textwidth,valign=c]{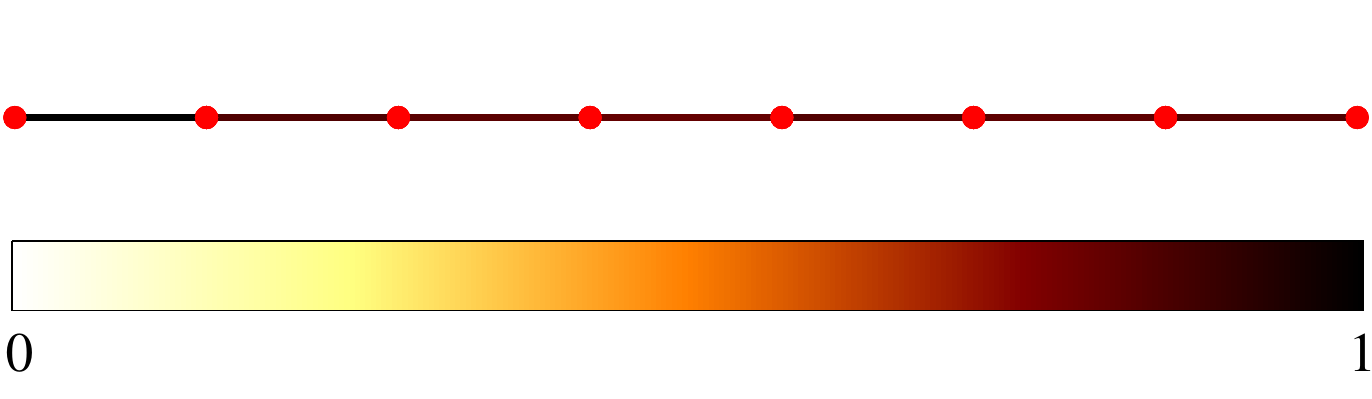} 
   \\  
    \includegraphics[trim=55 200 50 185,clip,width=.18\textwidth,valign=c]{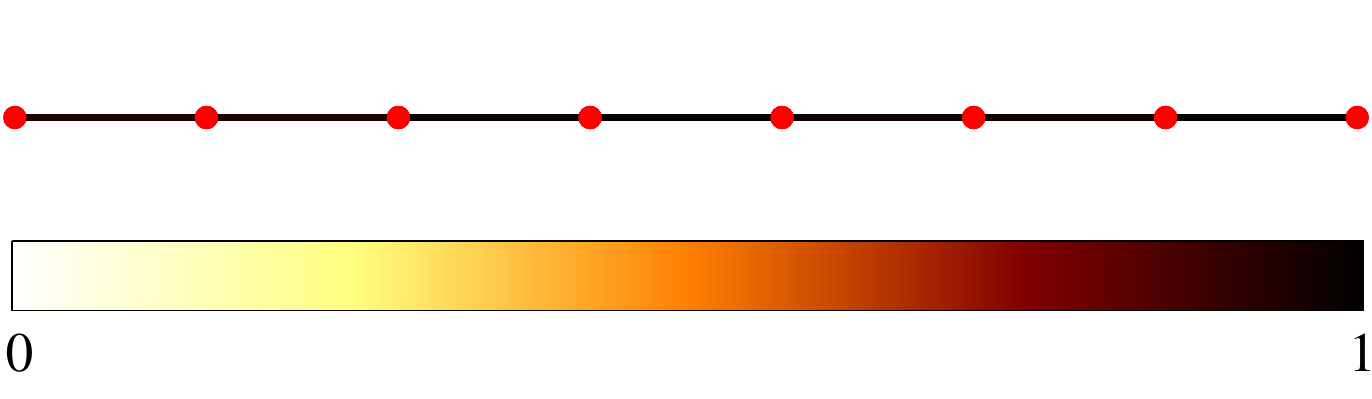} 
    \end{tabular}
  \begin{tabular}{@{}c@{}}
    \includegraphics[trim=55 195 40 185,clip,width=.18\textwidth,valign=c]{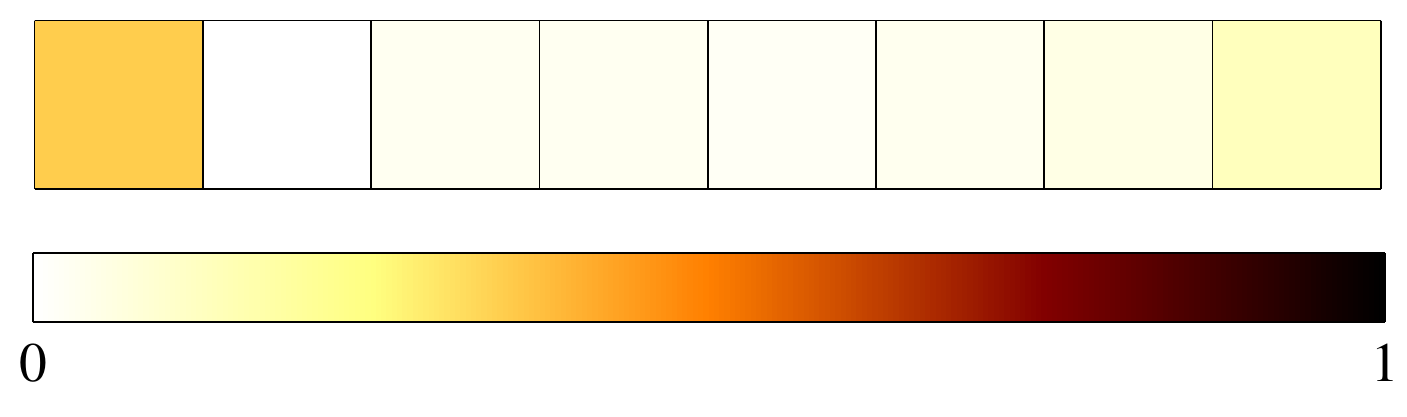}  
	\\
    \includegraphics[trim=55 195 40 185,clip,width=.18\textwidth,valign=c]{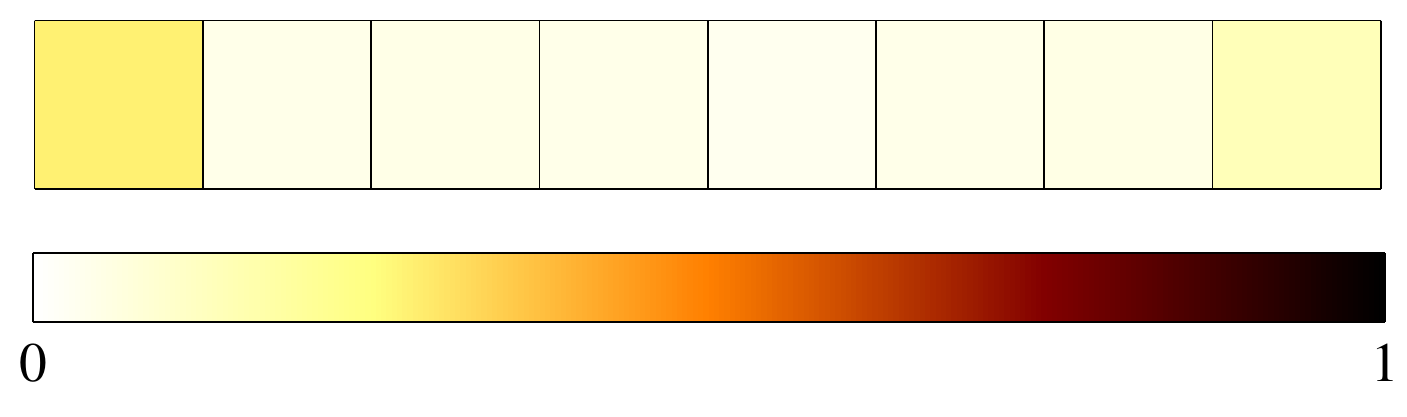}  
  \end{tabular}   
  \vphantom{\includegraphics[trim=55 33 50 20,clip,width=.18\textwidth,valign=c]{NonSep8Intra1_Graph-eps-converted-to.pdf}} 
  }
\caption{For the \emph{planar mode} in \emph{intra prediction} (a) shows the estimated sample variances of $8\times8$ residual signals.  In (b) and (c), edge and vertex weights are shown for grid and line graphs learned from residual data, respectively. Darker colors represent larger values.}
\label{fig:graph_weights_intra_planar}

\subfloat[Variances per pixel\label{fig:sample_variance_DC}]
{\includegraphics[trim=60 80 35 67,clip,height=.18\textwidth,valign=c]{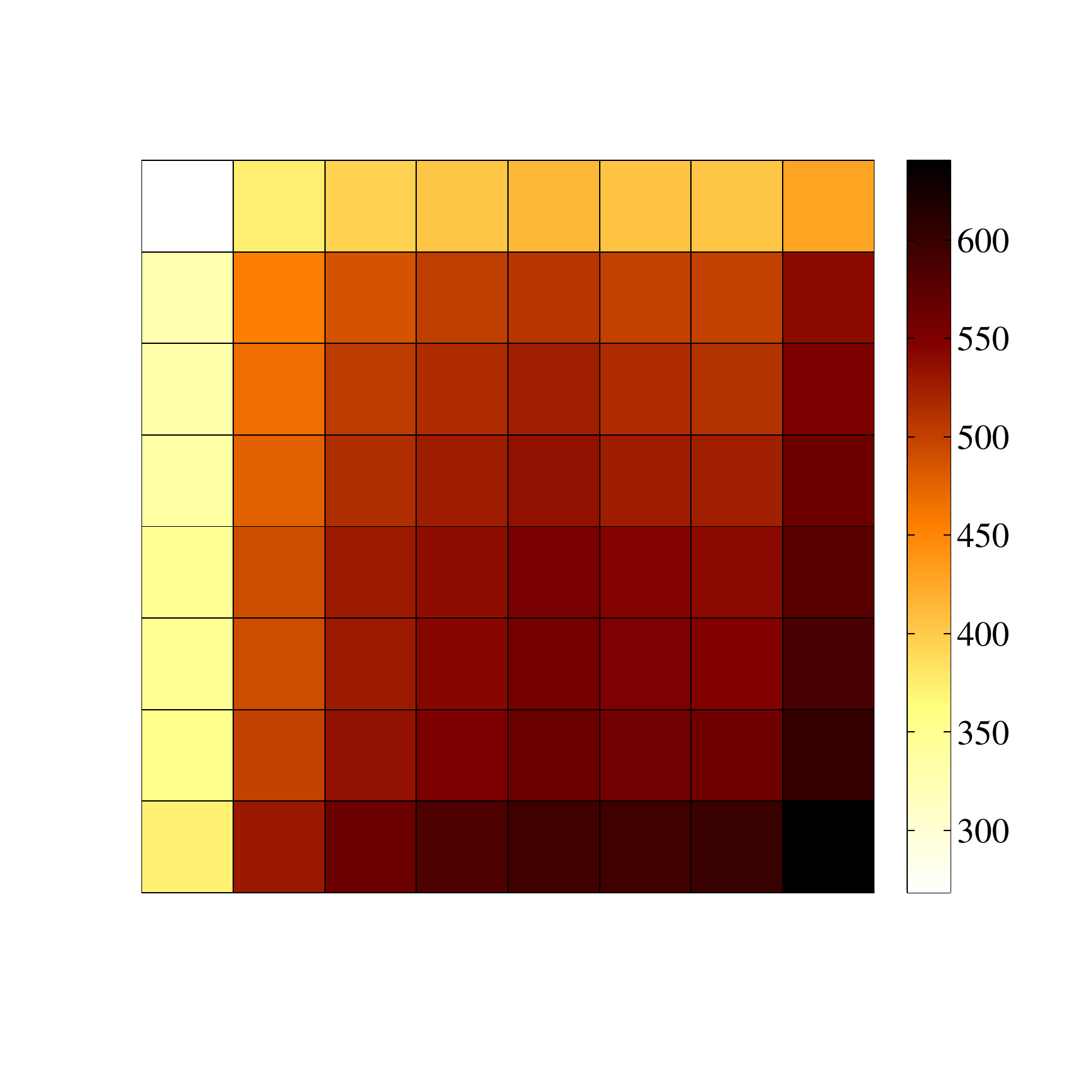}\vphantom{\includegraphics[trim=55 33 50 20,clip,width=.18\textwidth,valign=c]{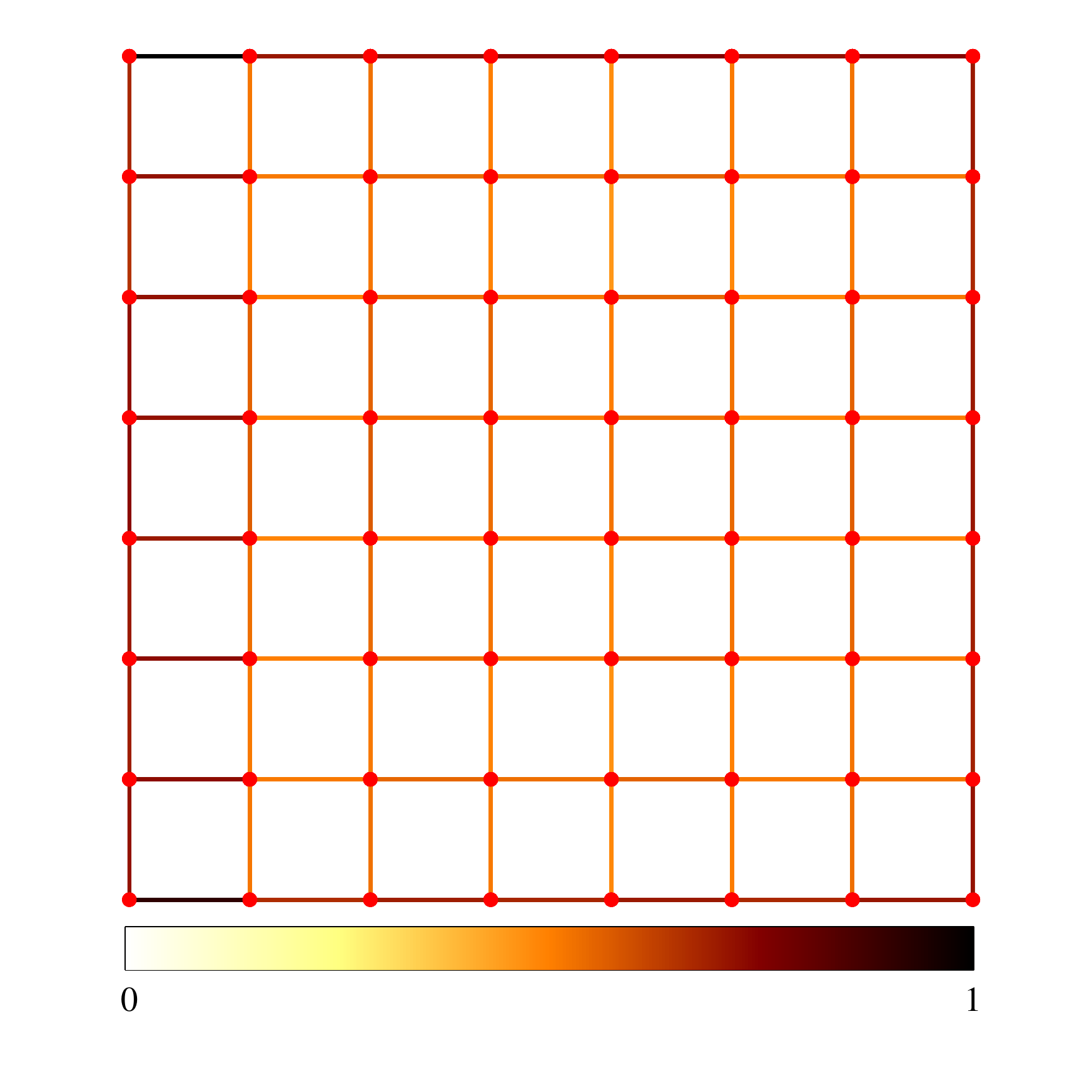}}}\;
    \subfloat[Grid graph weights]{\includegraphics[trim=55 33 50 20,clip,width=.18\textwidth,valign=c]{NonSep8Intra2_Graph-eps-converted-to.pdf}\;
    \includegraphics[trim=75 56 60 35,clip,width=.18\textwidth,valign=c]{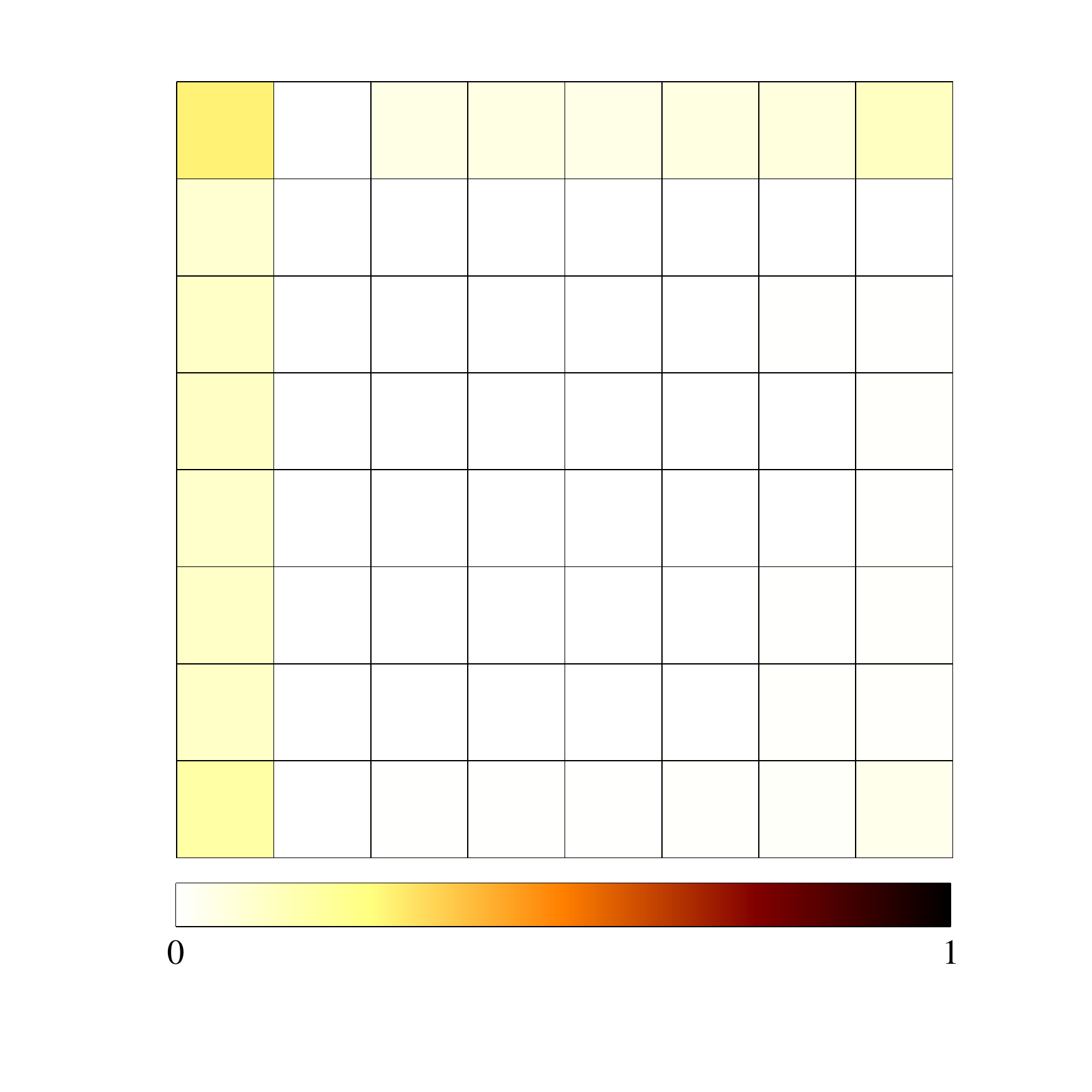}}\;
  \subfloat[Line graph weights for (top) rows  and (bottom) columns]{
  \begin{tabular}{@{}c@{}}
  \includegraphics[trim=55 200 50 185,clip,width=.18\textwidth,valign=c]{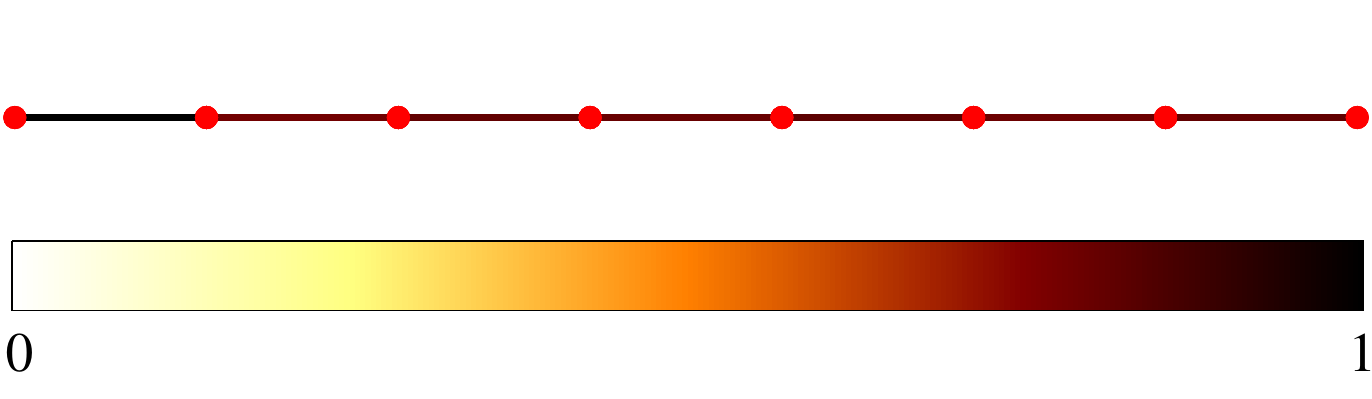} 
   \\  
    \includegraphics[trim=55 200 50 185,clip,width=.18\textwidth,valign=c]{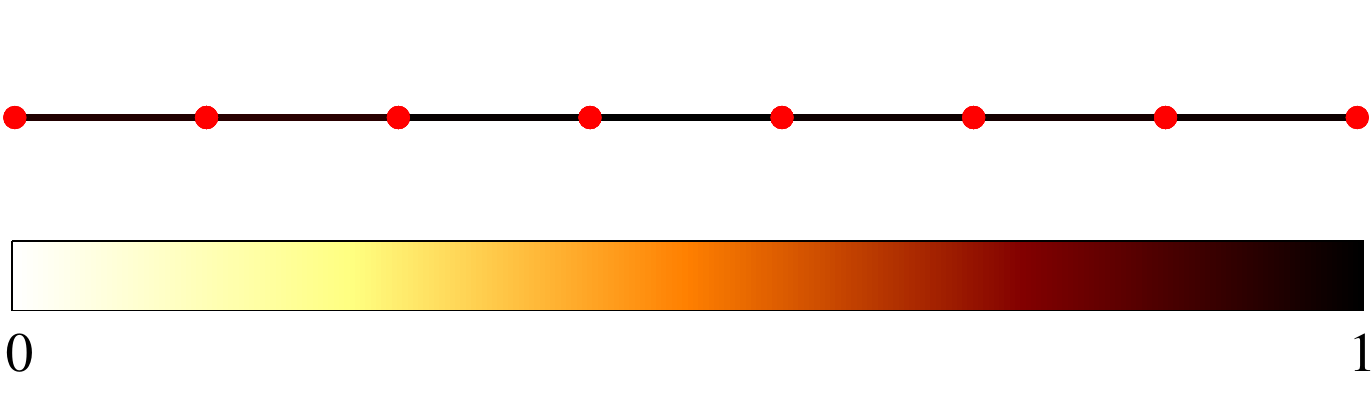} 
    \end{tabular}
  \begin{tabular}{@{}c@{}}
    \includegraphics[trim=55 195 40 185,clip,width=.18\textwidth,valign=c]{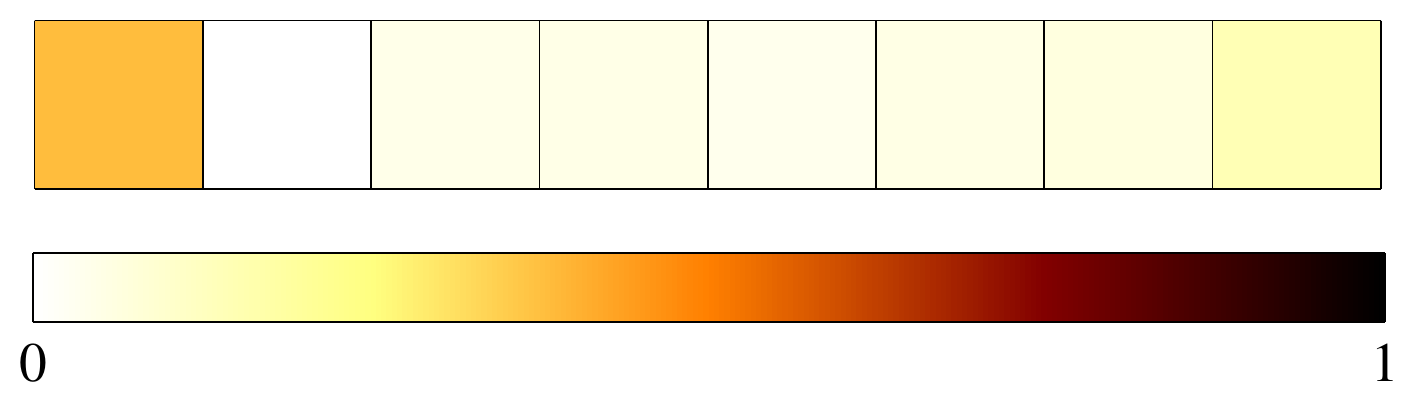}  
	\\
    \includegraphics[trim=55 195 40 185,clip,width=.18\textwidth,valign=c]{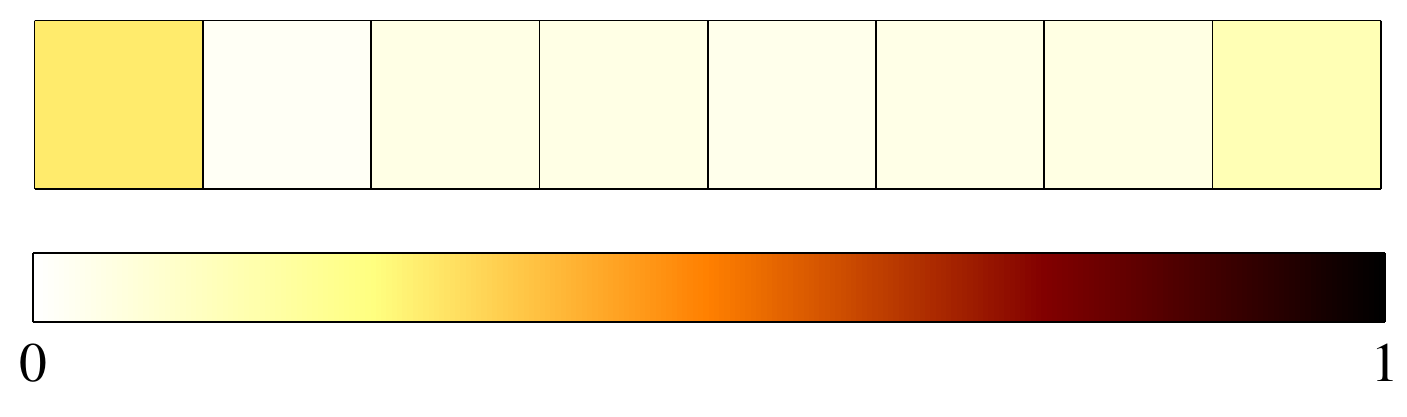}  
  \end{tabular}   
  \vphantom{\includegraphics[trim=55 33 50 20,clip,width=.18\textwidth,valign=c]{NonSep8Intra2_Graph-eps-converted-to.pdf}} 
  }
\caption{For the \emph{DC mode} in \emph{intra prediction} (a) shows the estimated sample variances of $8\times8$ residual signals. In (b) and (c), edge and vertex weights are shown for grid and line graphs learned from residual data, respectively. Darker colors represent larger values.}
\label{fig:graph_weights_intra_dc}

\subfloat[Variances per pixel\label{fig:sample_variance_Hori}]
{\includegraphics[trim=60 80 35 67,clip,height=.18\textwidth,valign=c]{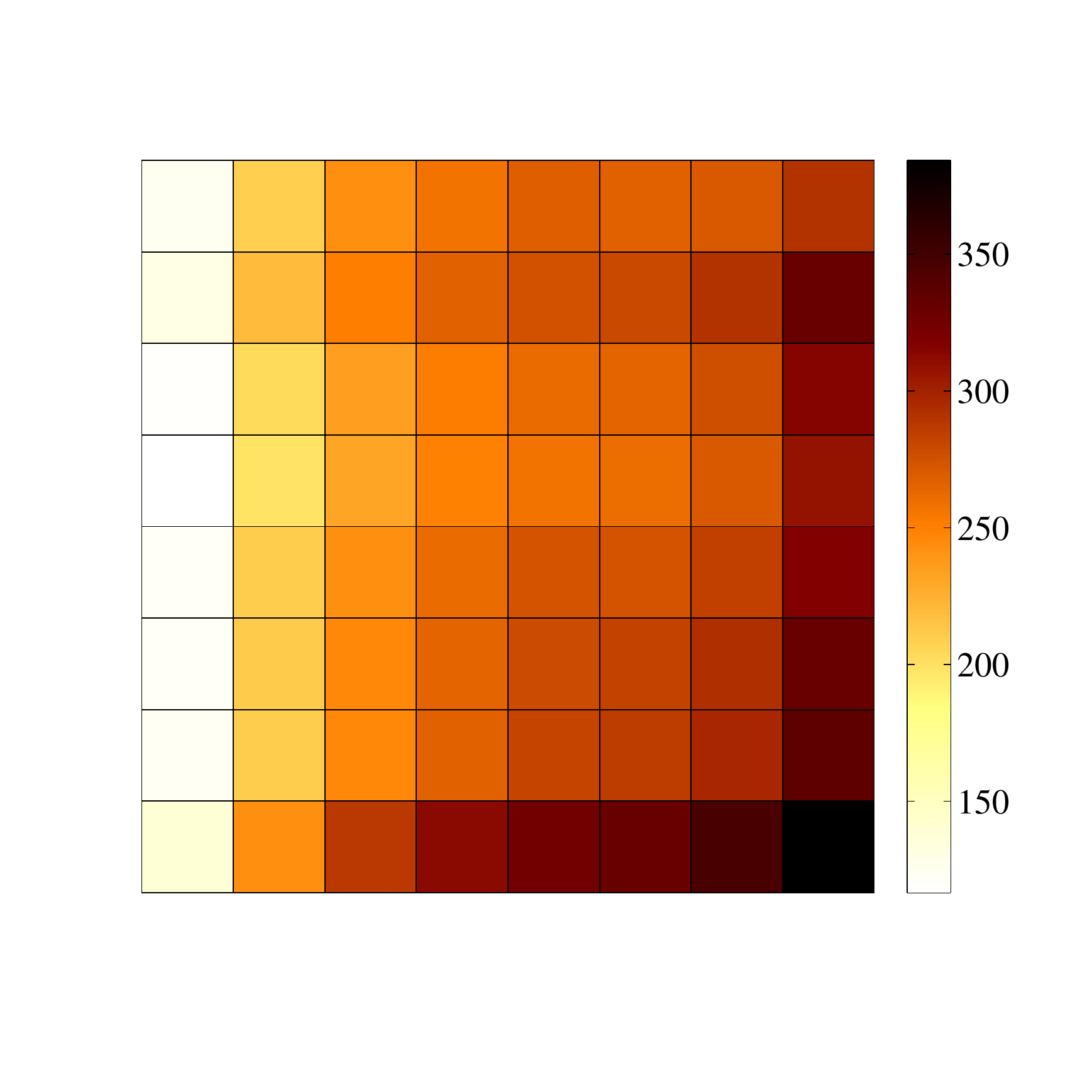}\vphantom{\includegraphics[trim=55 33 50 20,clip,width=.18\textwidth,valign=c]{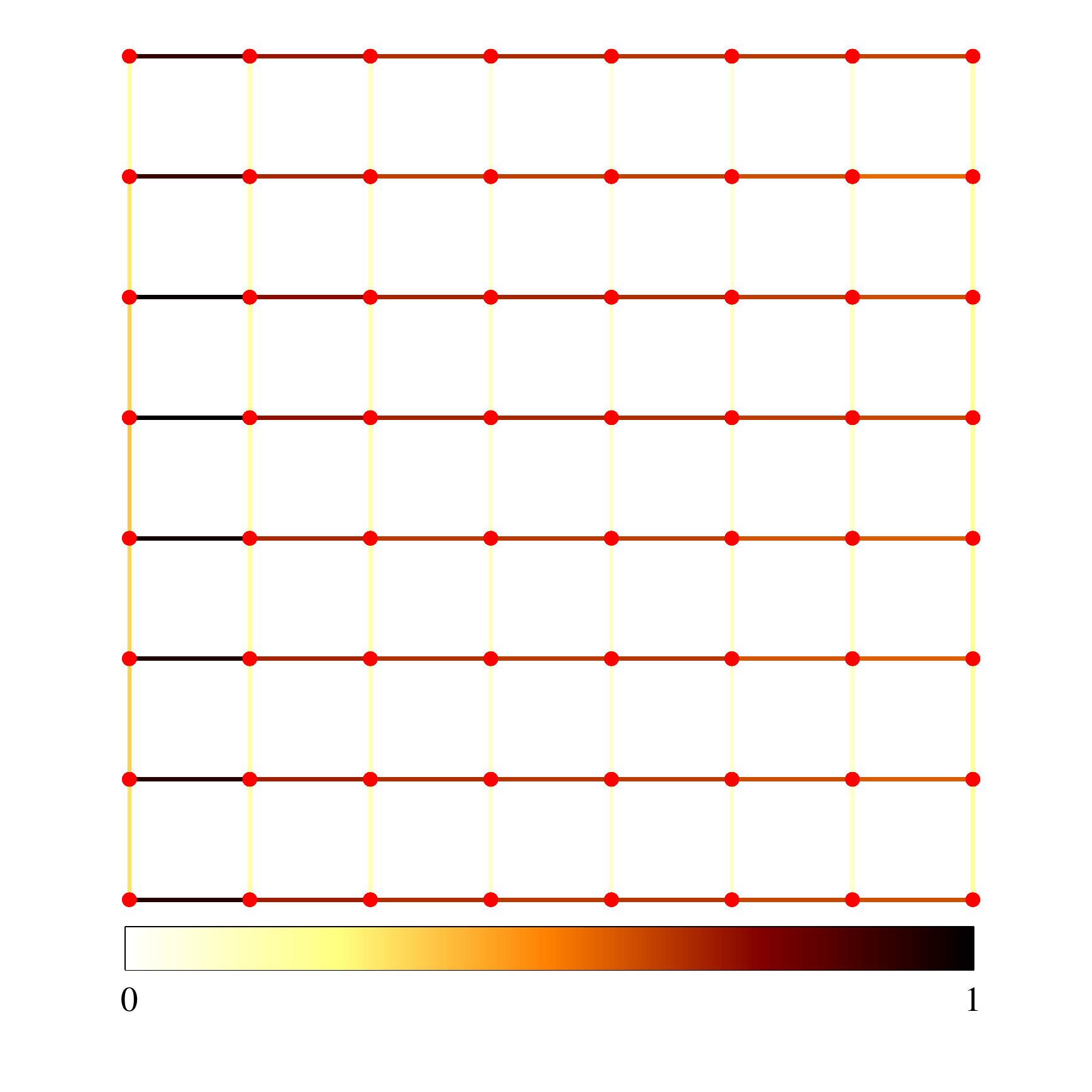}}}\;
    \subfloat[Grid graph weights]{\includegraphics[trim=55 33 50 20,clip,width=.18\textwidth,valign=c]{NonSep8Intra11_Graph-eps-converted-to.pdf}\;
    \includegraphics[trim=75 56 60 35,clip,width=.18\textwidth,valign=c]{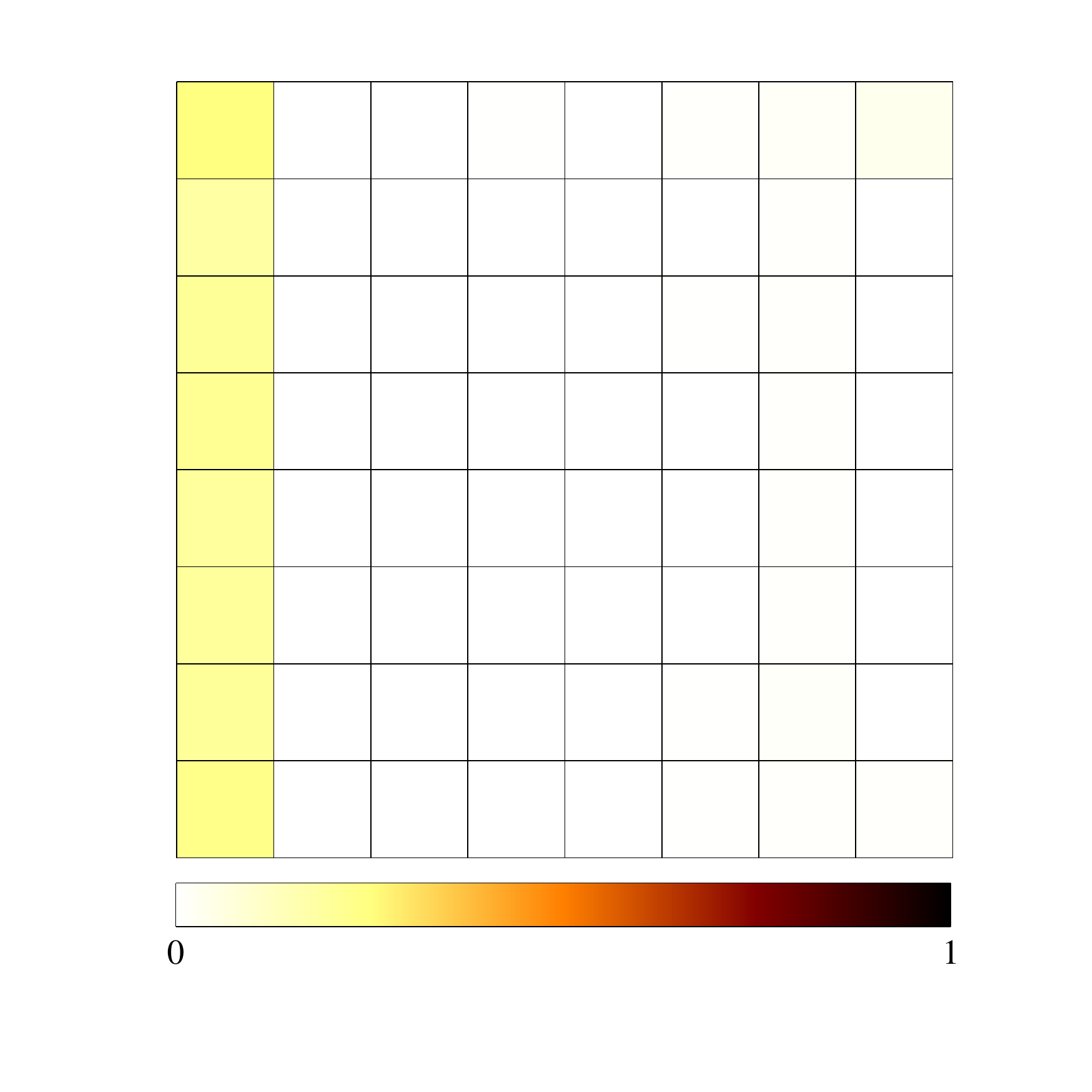}}\;
  \subfloat[Line graph weights for (top) rows  and (bottom) columns]{
  \begin{tabular}{@{}c@{}}
  \includegraphics[trim=55 200 50 185,clip,width=.18\textwidth,valign=c]{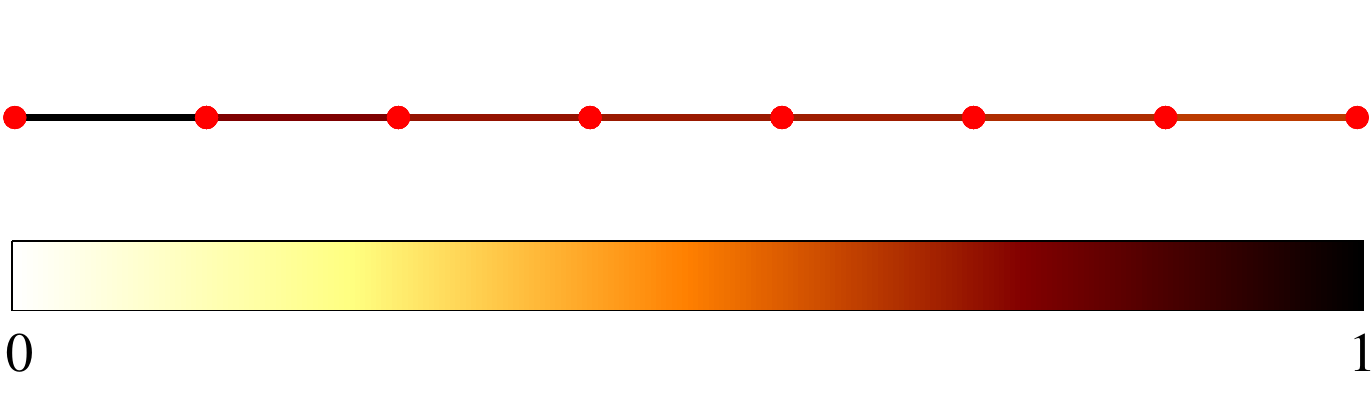} 
   \\  
    \includegraphics[trim=55 200 50 185,clip,width=.18\textwidth,valign=c]{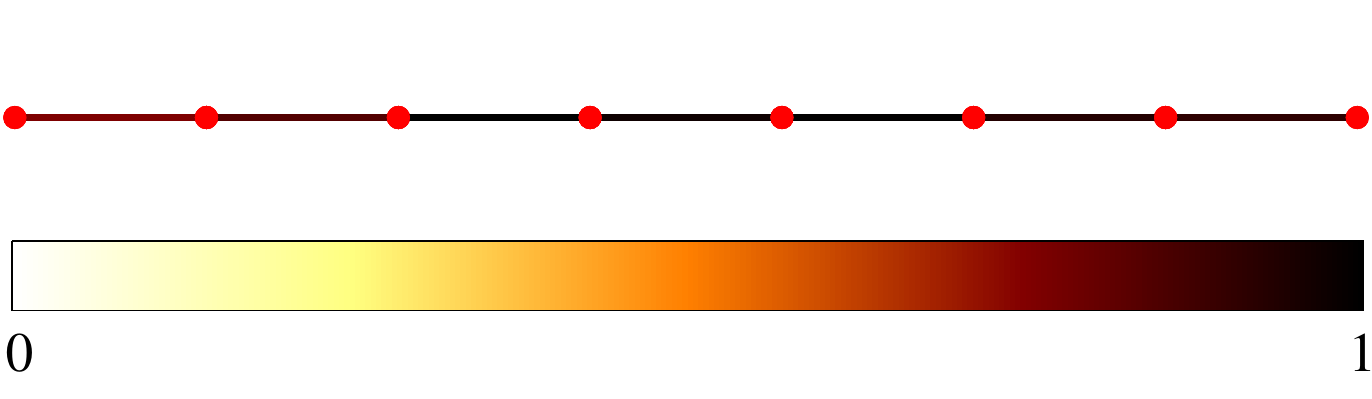} 
    \end{tabular}
  \begin{tabular}{@{}c@{}}
    \includegraphics[trim=55 195 40 185,clip,width=.18\textwidth,valign=c]{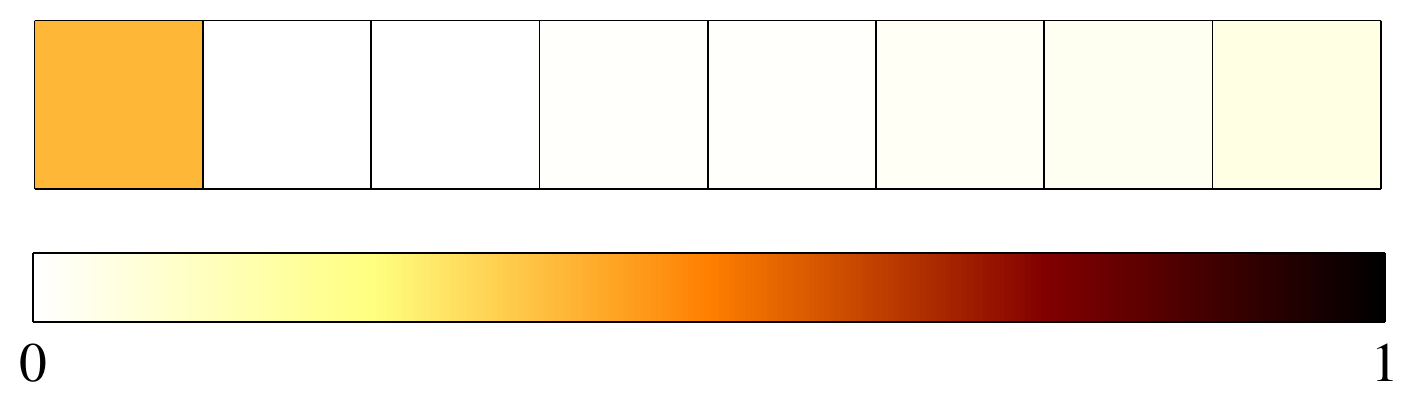}  
	\\
    \includegraphics[trim=55 195 40 185,clip,width=.18\textwidth,valign=c]{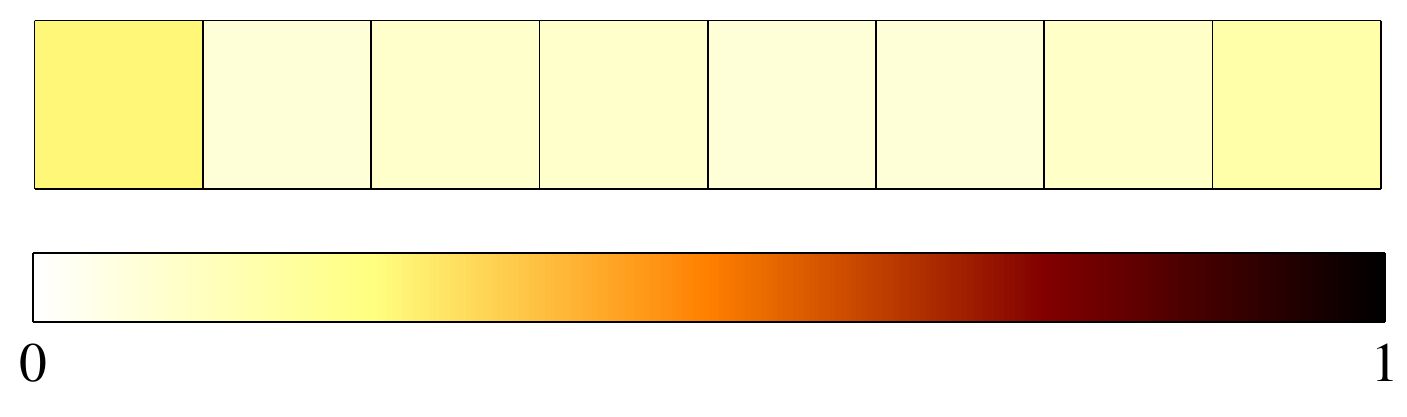}  
  \end{tabular}   
  \vphantom{\includegraphics[trim=55 33 50 20,clip,width=.18\textwidth,valign=c]{NonSep8Intra11_Graph-eps-converted-to.pdf}} 
  }
\caption{For the \emph{horizontal mode} in \emph{intra prediction} (a) shows the estimated sample variances of $8\times8$ residual signals. In (b) and (c), edge and vertex weights are shown for grid and line graphs learned from residual data, respectively. Darker colors represent larger values.}
\label{fig:graph_weights_intra_horizontal}

\subfloat[Variances per pixel\label{fig:sample_variance_Diag}]
{\includegraphics[trim=60 80 35 67,clip,height=.18\textwidth,valign=c]{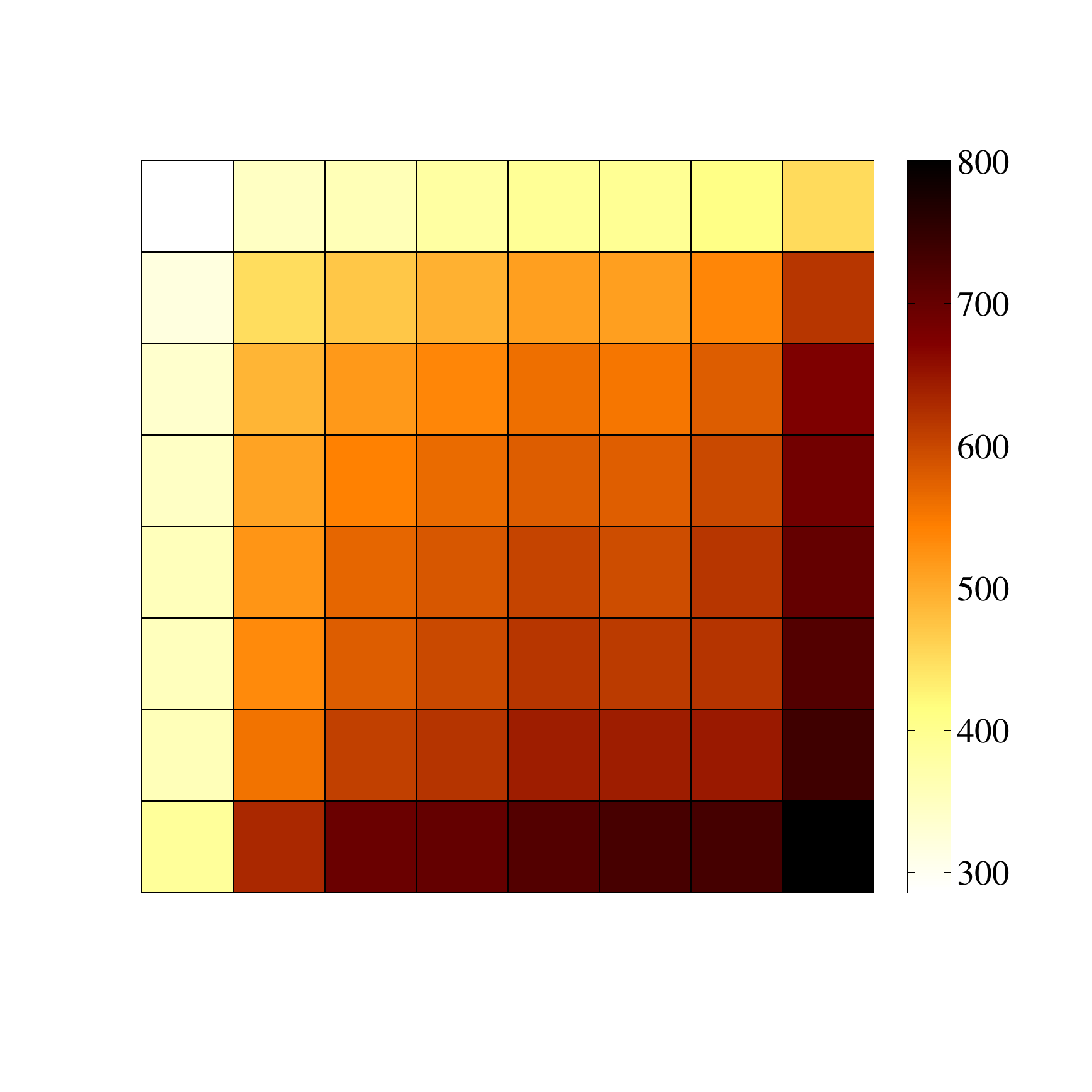}\vphantom{\includegraphics[trim=55 33 50 20,clip,width=.18\textwidth,valign=c]{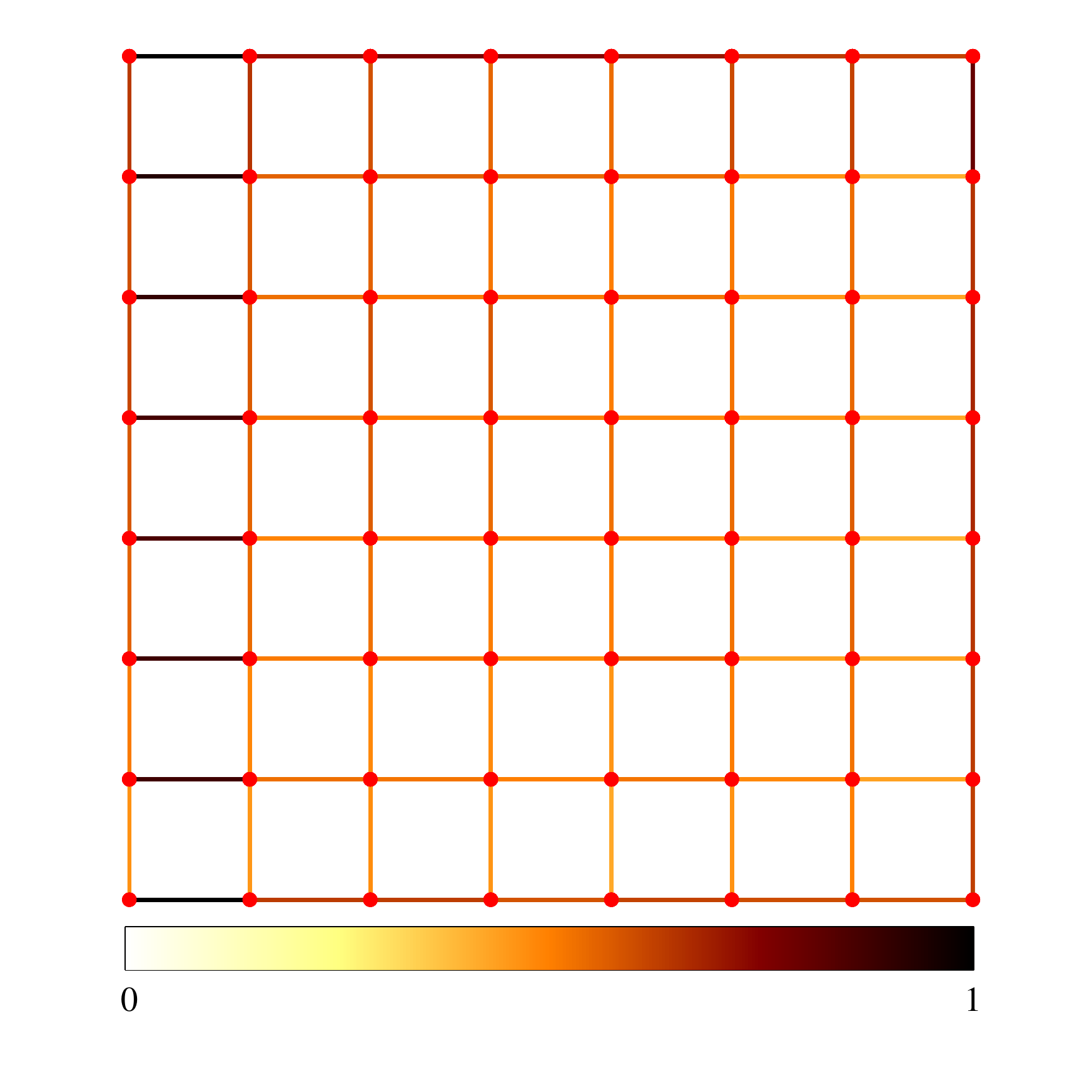}}}\;
    \subfloat[Grid graph weights]{\includegraphics[trim=55 33 50 20,clip,width=.18\textwidth,valign=c]{NonSep8Intra19_Graph-eps-converted-to.pdf}\;
    \includegraphics[trim=75 56 60 35,clip,width=.18\textwidth,valign=c]{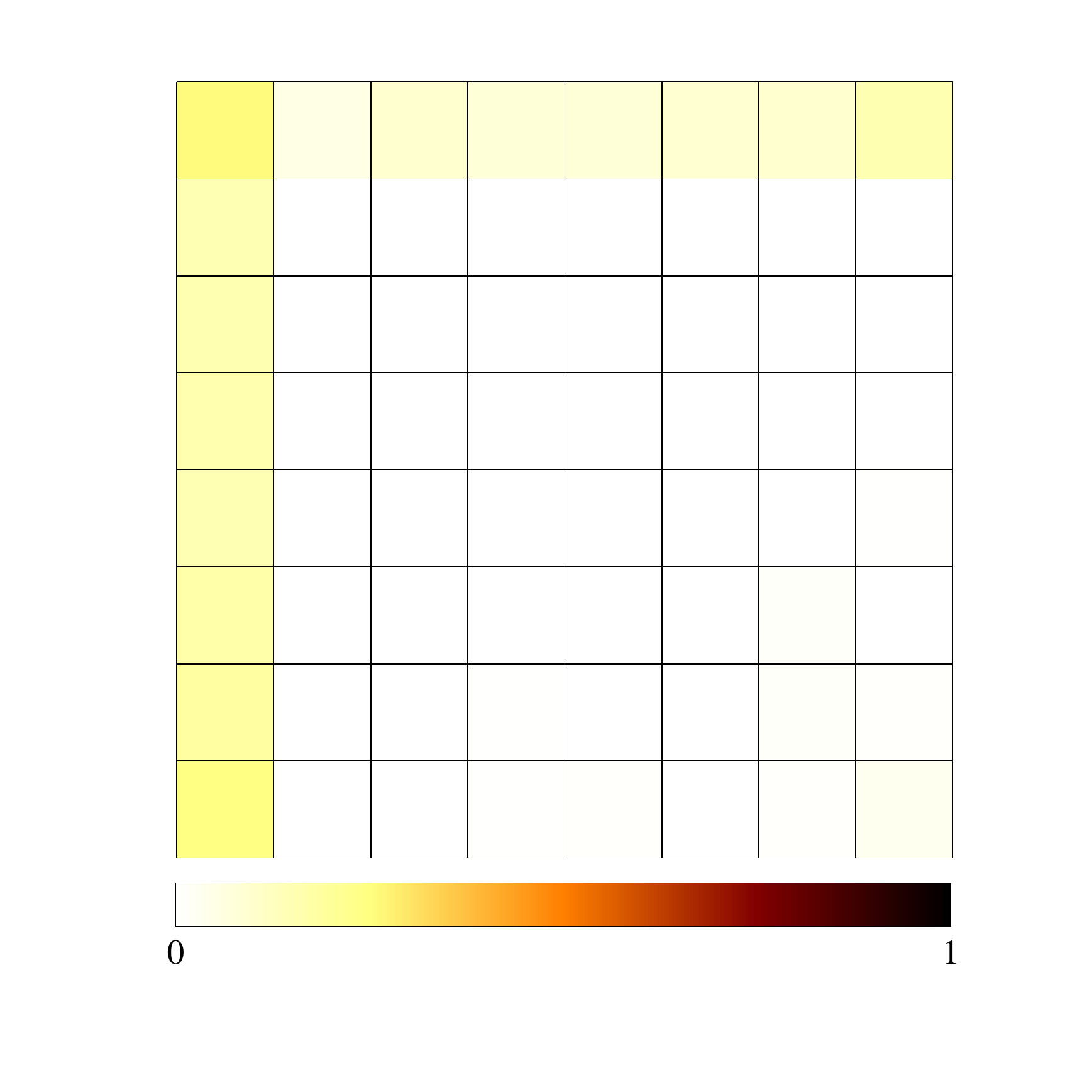}}\;
  \subfloat[Line graph weights for (top) rows  and (bottom) columns]{
  \begin{tabular}{@{}c@{}}
  \includegraphics[trim=55 200 50 185,clip,width=.18\textwidth,valign=c]{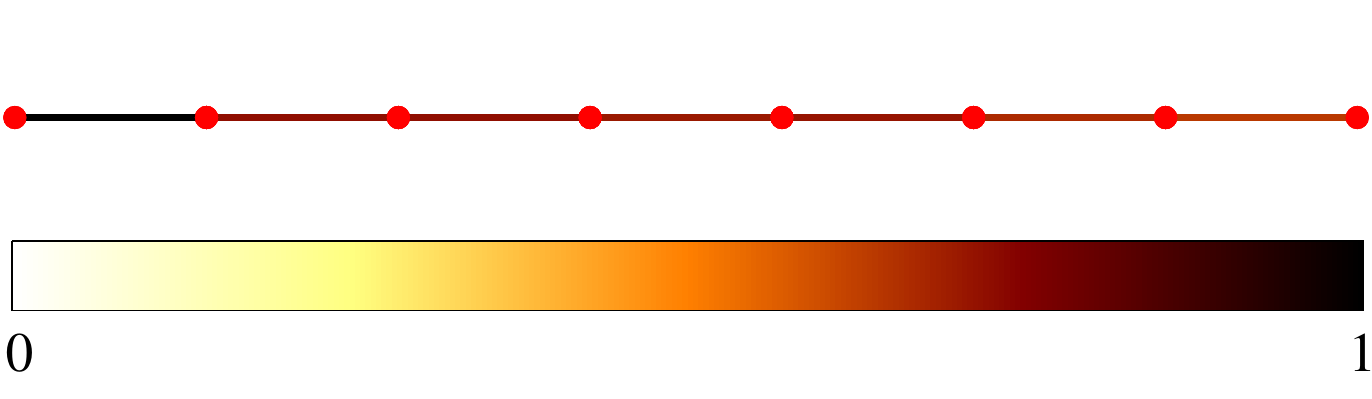} 
   \\  
    \includegraphics[trim=55 200 50 185,clip,width=.18\textwidth,valign=c]{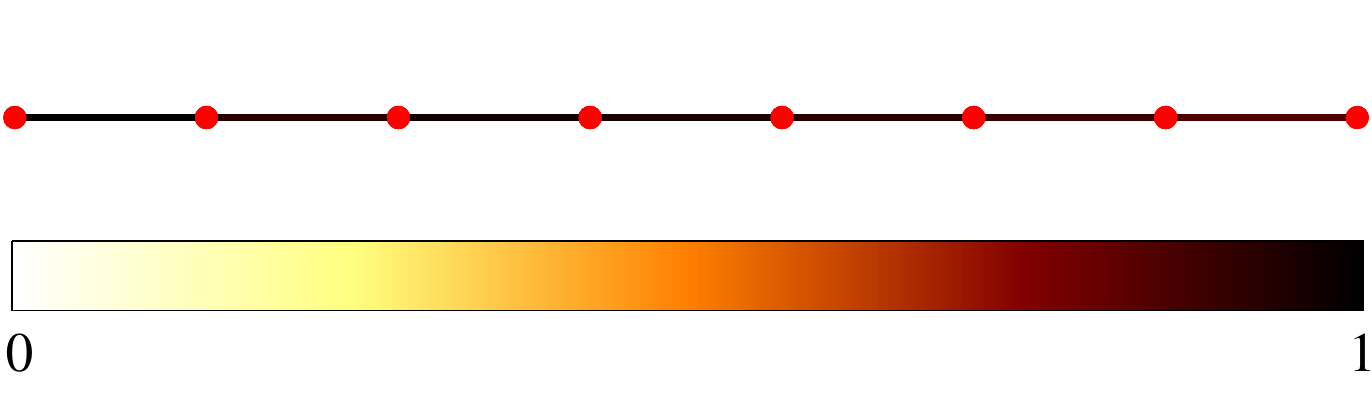} 
    \end{tabular}
  \begin{tabular}{@{}c@{}}
    \includegraphics[trim=55 195 40 185,clip,width=.18\textwidth,valign=c]{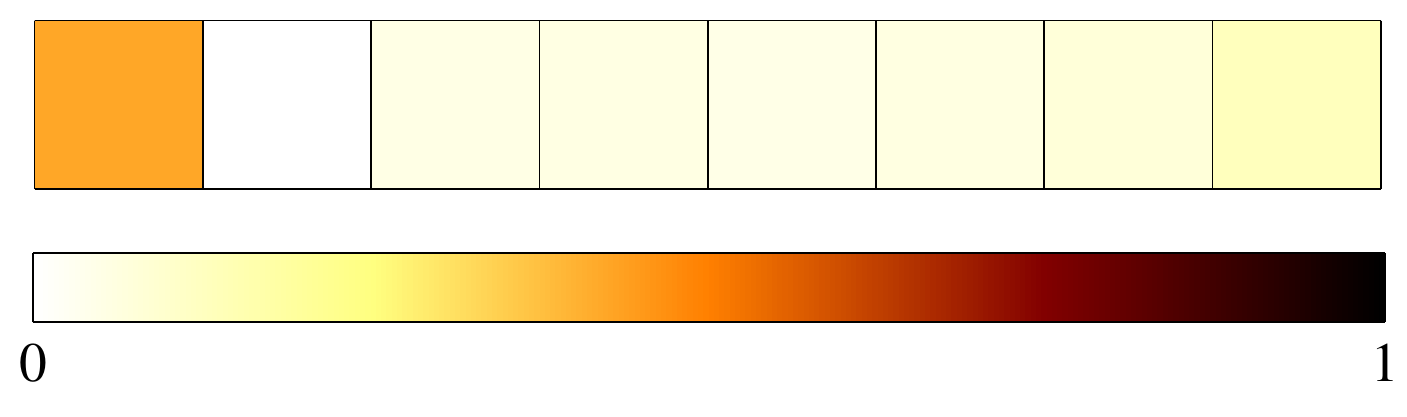}  
	\\
    \includegraphics[trim=55 195 40 185,clip,width=.18\textwidth,valign=c]{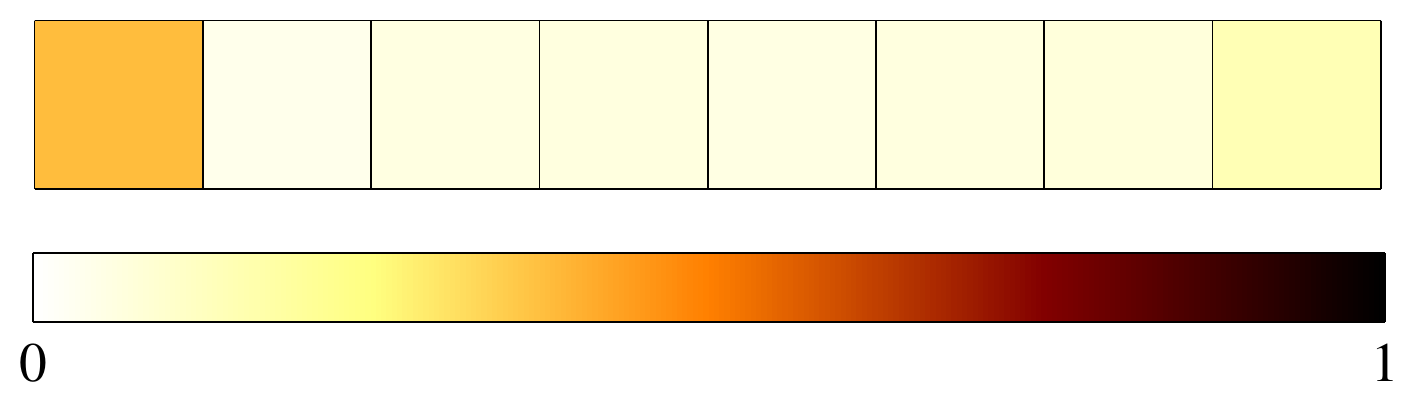}  
  \end{tabular}   
  \vphantom{\includegraphics[trim=55 33 50 20,clip,width=.18\textwidth,valign=c]{NonSep8Intra19_Graph-eps-converted-to.pdf}} 
  }
\caption{For the \emph{diagonal mode} in \emph{intra prediction} (a) shows the estimated sample variances of $8\times8$ residual signals. In (b) and (c), edge and vertex weights are shown for grid and line graphs learned from residual data, respectively. Darker colors represent larger values.}
\label{fig:graph_weights_intra_diagonal}

\end{figure*}

\begin{figure*}[htbp!]
\centering
\subfloat[Variances per pixel\label{fig:sample_variance_2Nx2N}]
{\includegraphics[trim=60 80 35 67,clip,height=.18\textwidth,valign=c]{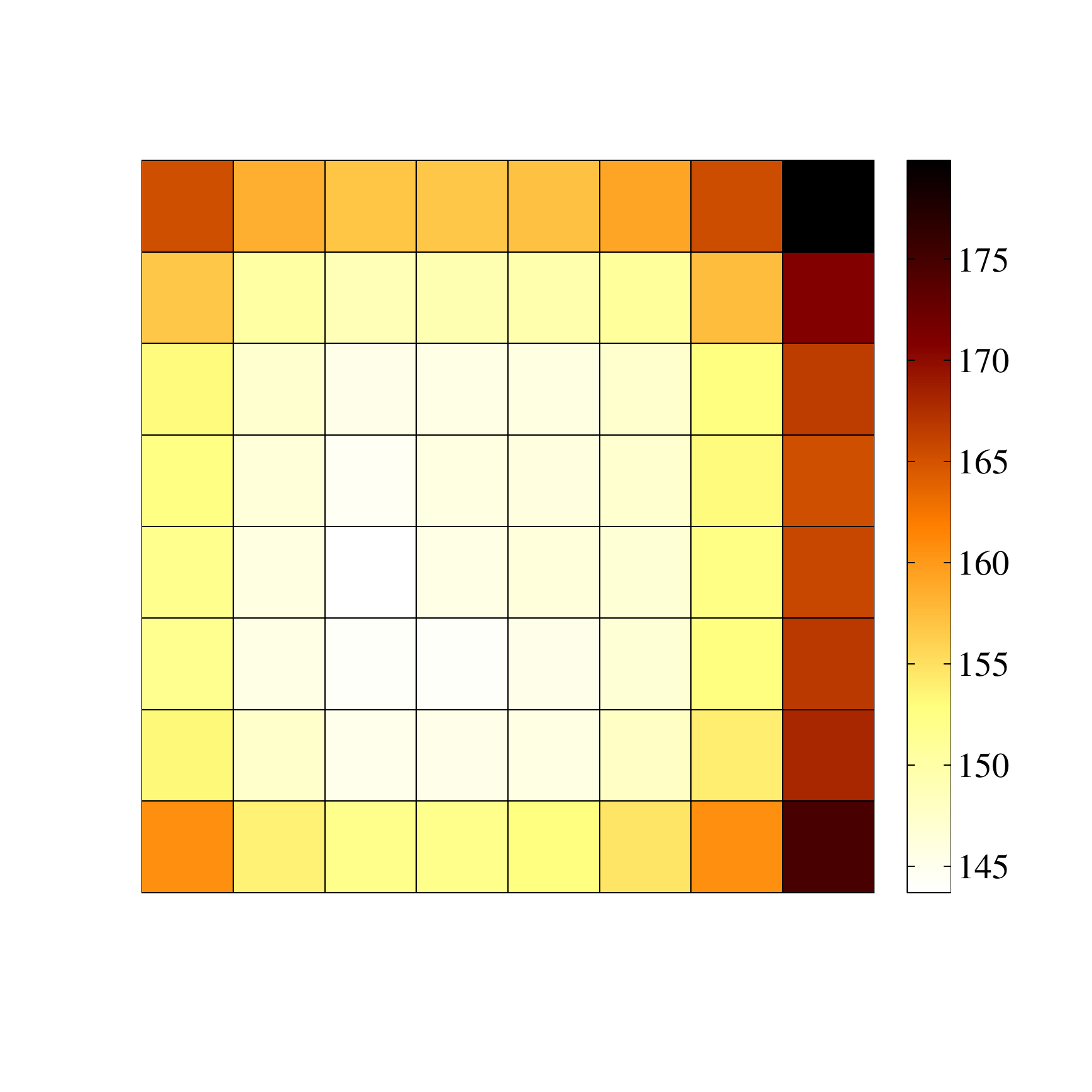}\vphantom{\includegraphics[trim=55 33 50 20,clip,width=.18\textwidth,valign=c]{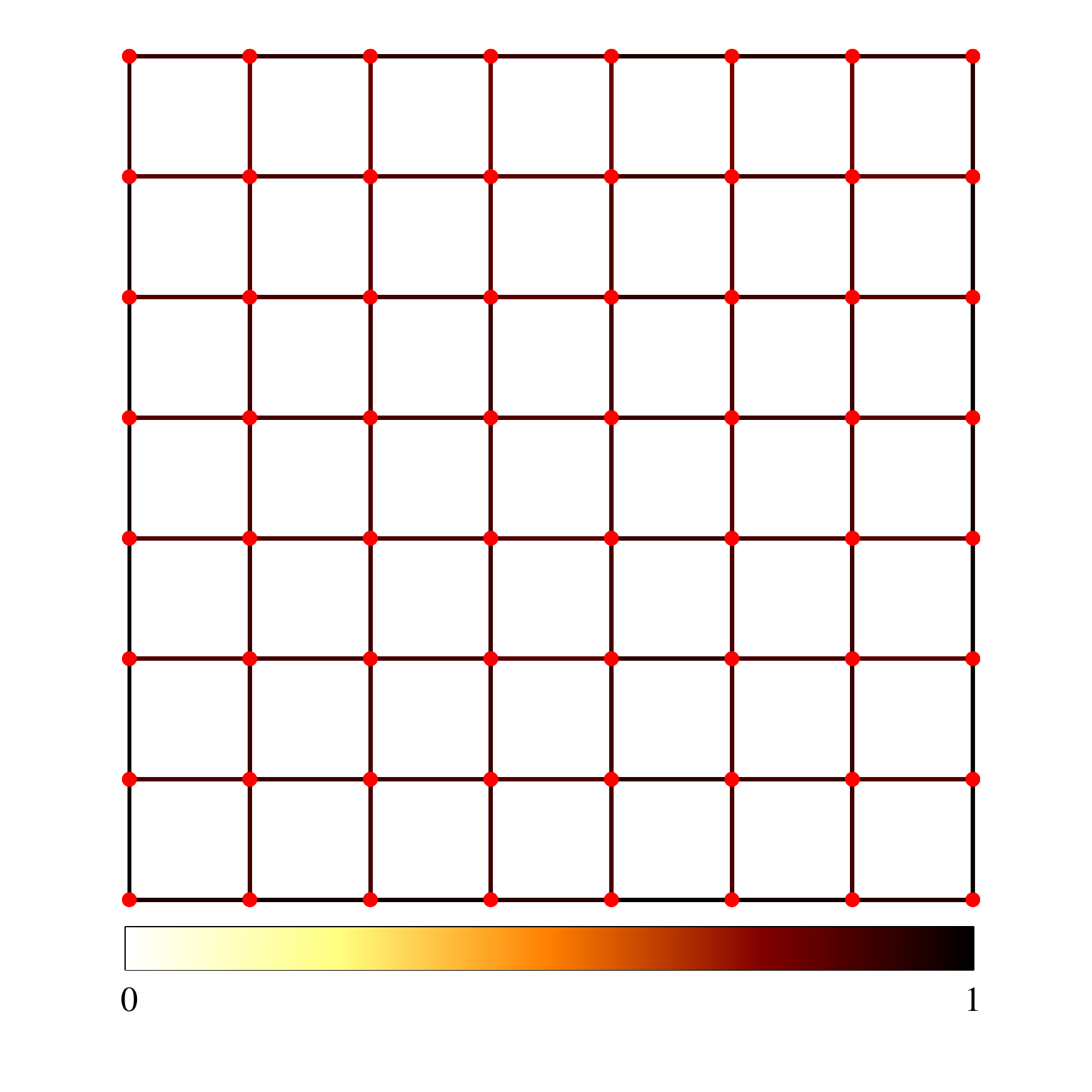}}}\;
    \subfloat[Grid graph weights]{\includegraphics[trim=55 33 50 20,clip,width=.18\textwidth,valign=c]{NonSep8Inter36_Graph-eps-converted-to.pdf}\;
    \includegraphics[trim=75 56 60 35,clip,width=.18\textwidth,valign=c]{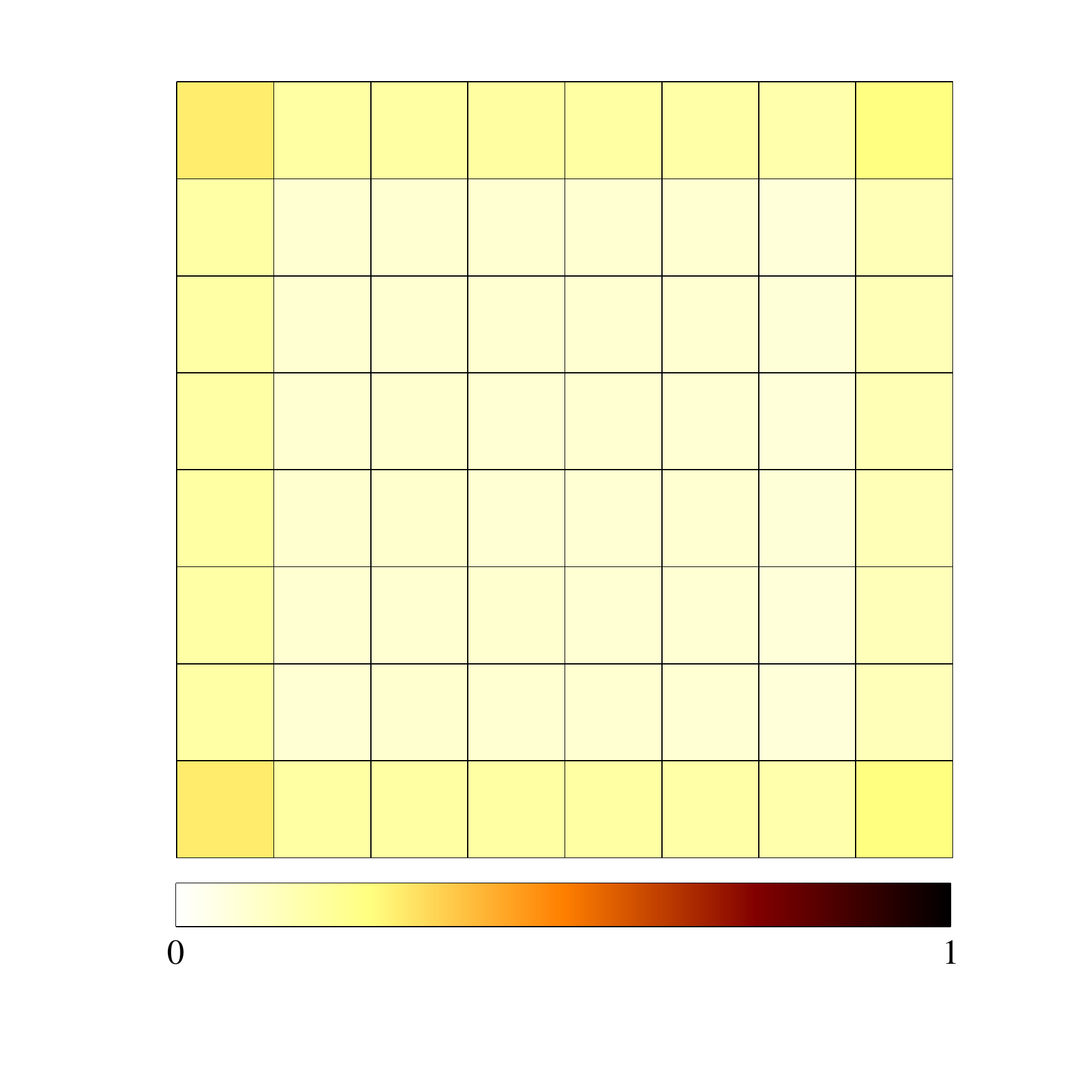}}\;
  \subfloat[Line graph weights for (top) rows  and (bottom) columns]{
  \begin{tabular}{@{}c@{}}
  \includegraphics[trim=55 200 50 185,clip,width=.18\textwidth,valign=c]{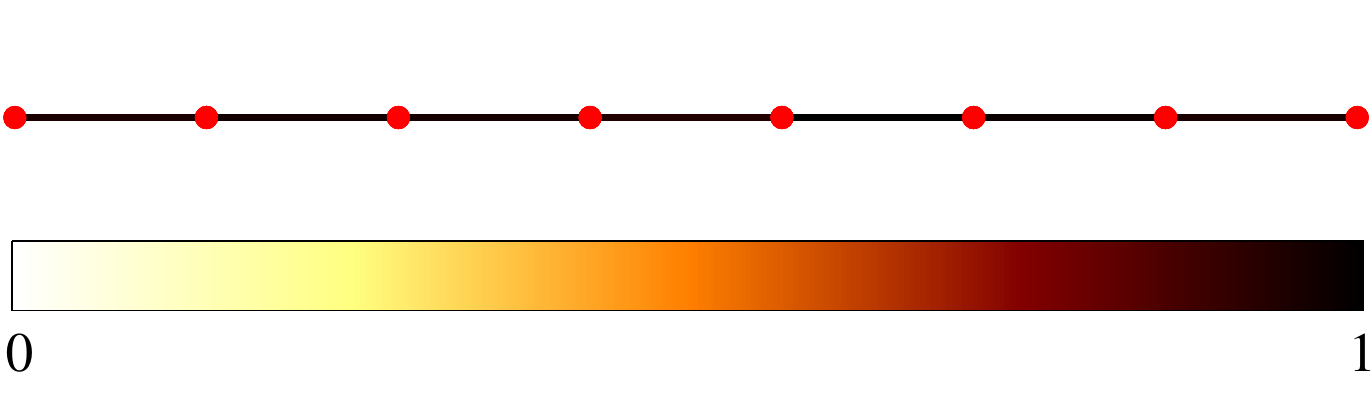} 
   \\  
    \includegraphics[trim=55 200 50 185,clip,width=.18\textwidth,valign=c]{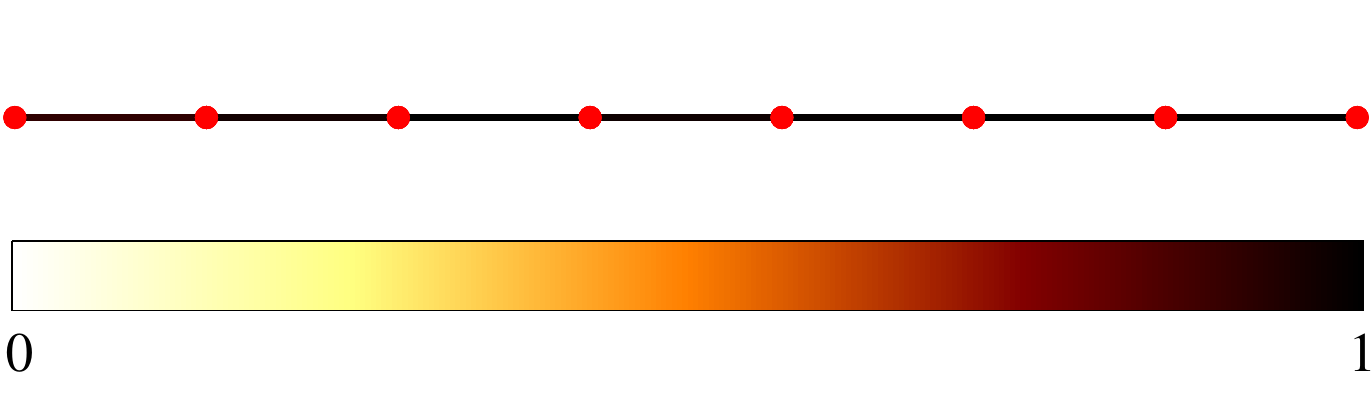} 
    \end{tabular}
  \begin{tabular}{@{}c@{}}
    \includegraphics[trim=55 195 40 185,clip,width=.18\textwidth,valign=c]{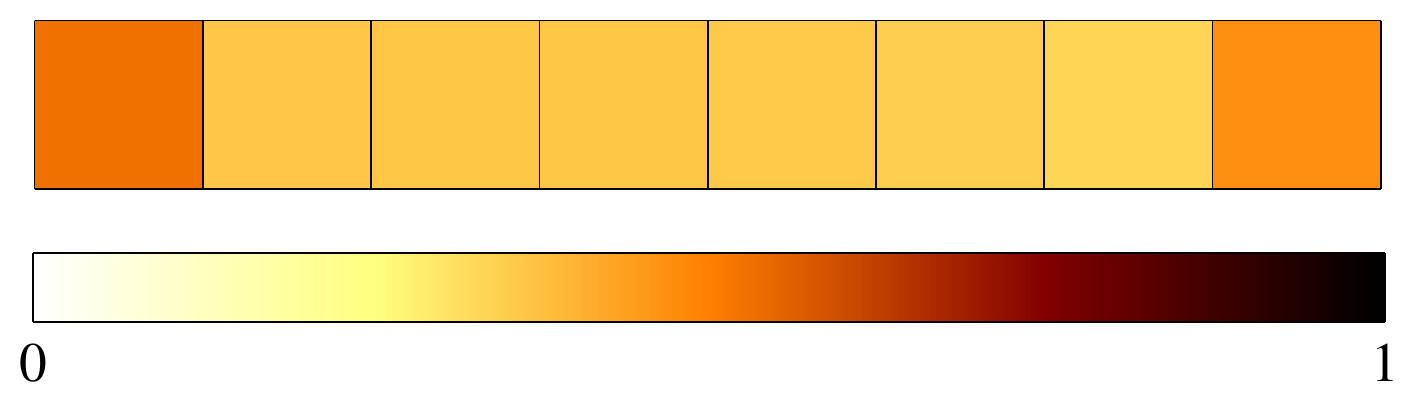}  
	\\
    \includegraphics[trim=55 195 40 185,clip,width=.18\textwidth,valign=c]{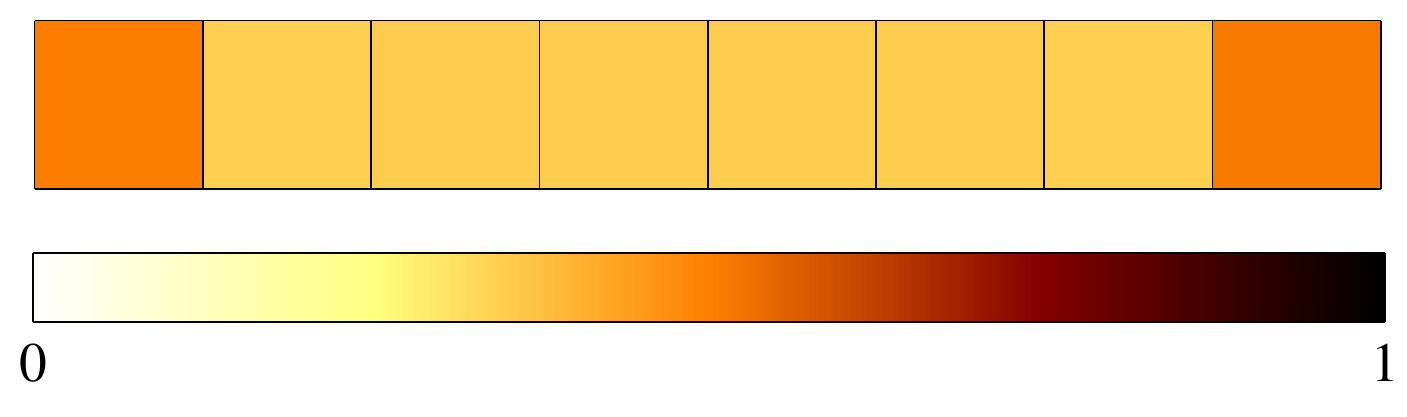}  
  \end{tabular}   
  \vphantom{\includegraphics[trim=55 33 50 20,clip,width=.18\textwidth,valign=c]{NonSep8Inter36_Graph-eps-converted-to.pdf}} 
  }
\caption{For the \emph{PU mode $2N\times 2N$} in \emph{inter prediction} (a) shows the estimated sample variances of $8\times8$ residual signals.  In (b) and (c), edge and vertex weights are shown for grid and line graphs learned from residual data, respectively. Darker colors represent larger values.}
\label{fig:graph_weights_inter_square}

\subfloat[Variances per pixel\label{fig:sample_variance_Nx2N}]
{\includegraphics[trim=60 80 35 67,clip,height=.18\textwidth,valign=c]{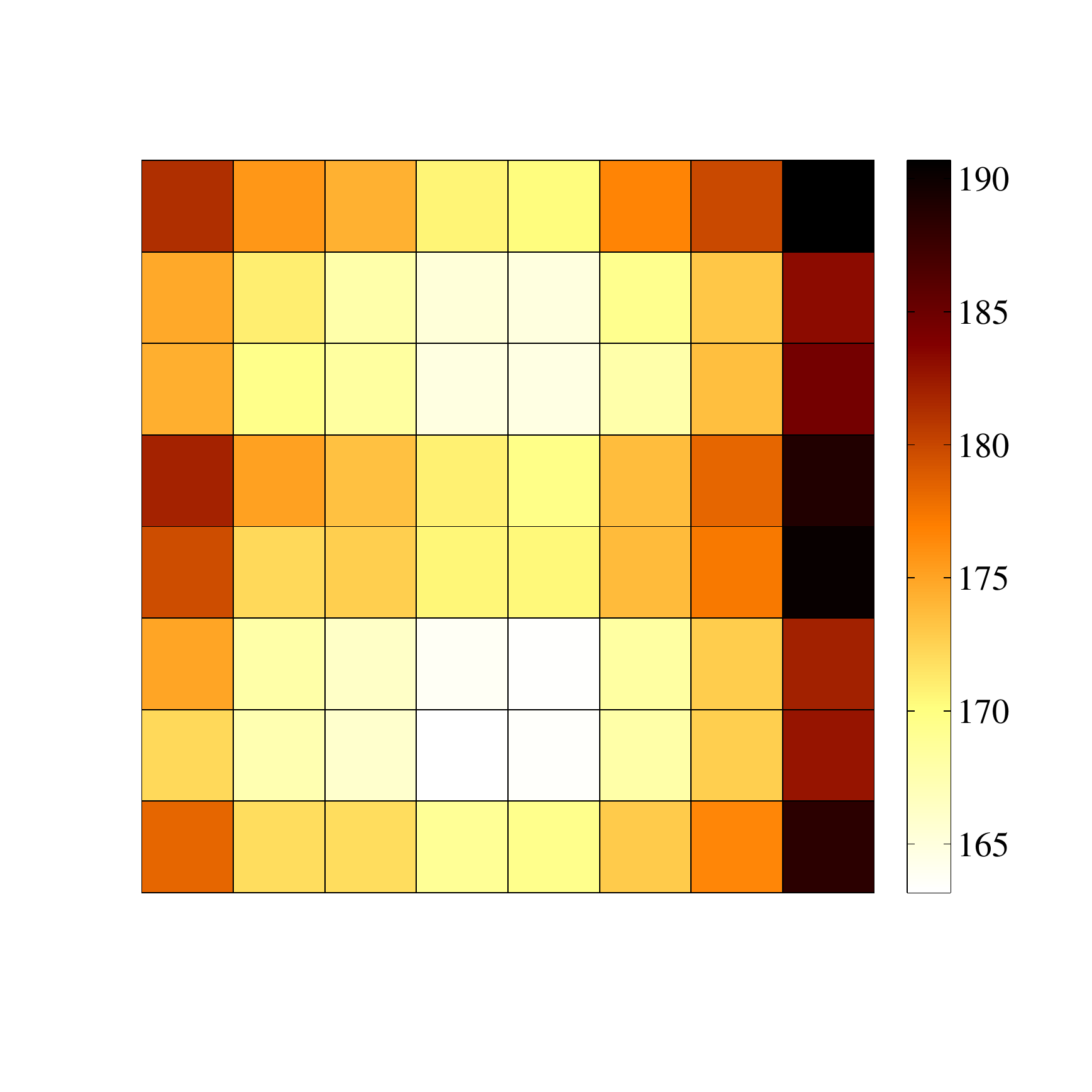}\vphantom{\includegraphics[trim=55 33 50 20,clip,width=.18\textwidth,valign=c]{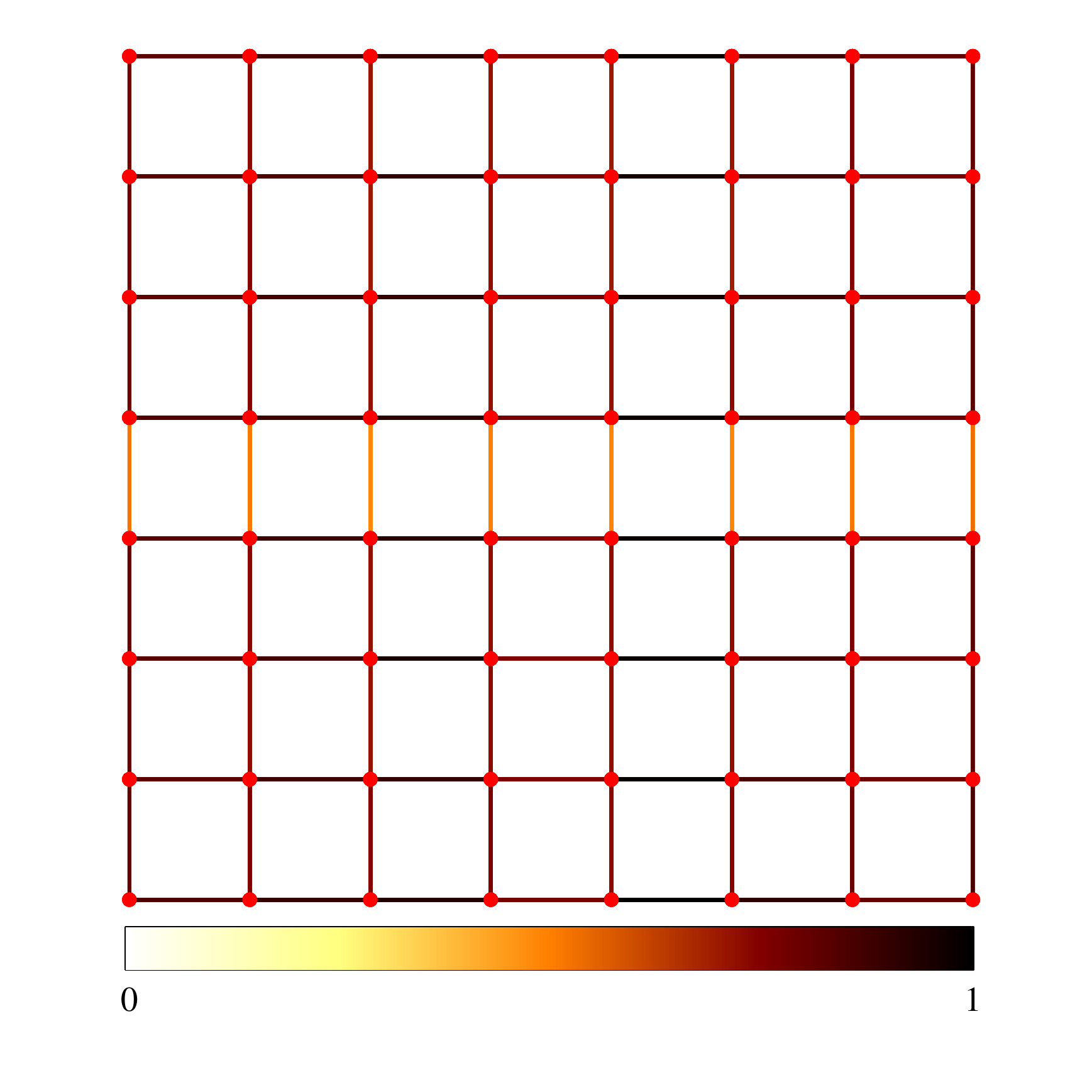}}}\;
    \subfloat[Grid graph weights]{\includegraphics[trim=55 33 50 20,clip,width=.18\textwidth,valign=c]{NonSep8Inter37_Graph-eps-converted-to.pdf}\;
    \includegraphics[trim=75 56 60 35,clip,width=.18\textwidth,valign=c]{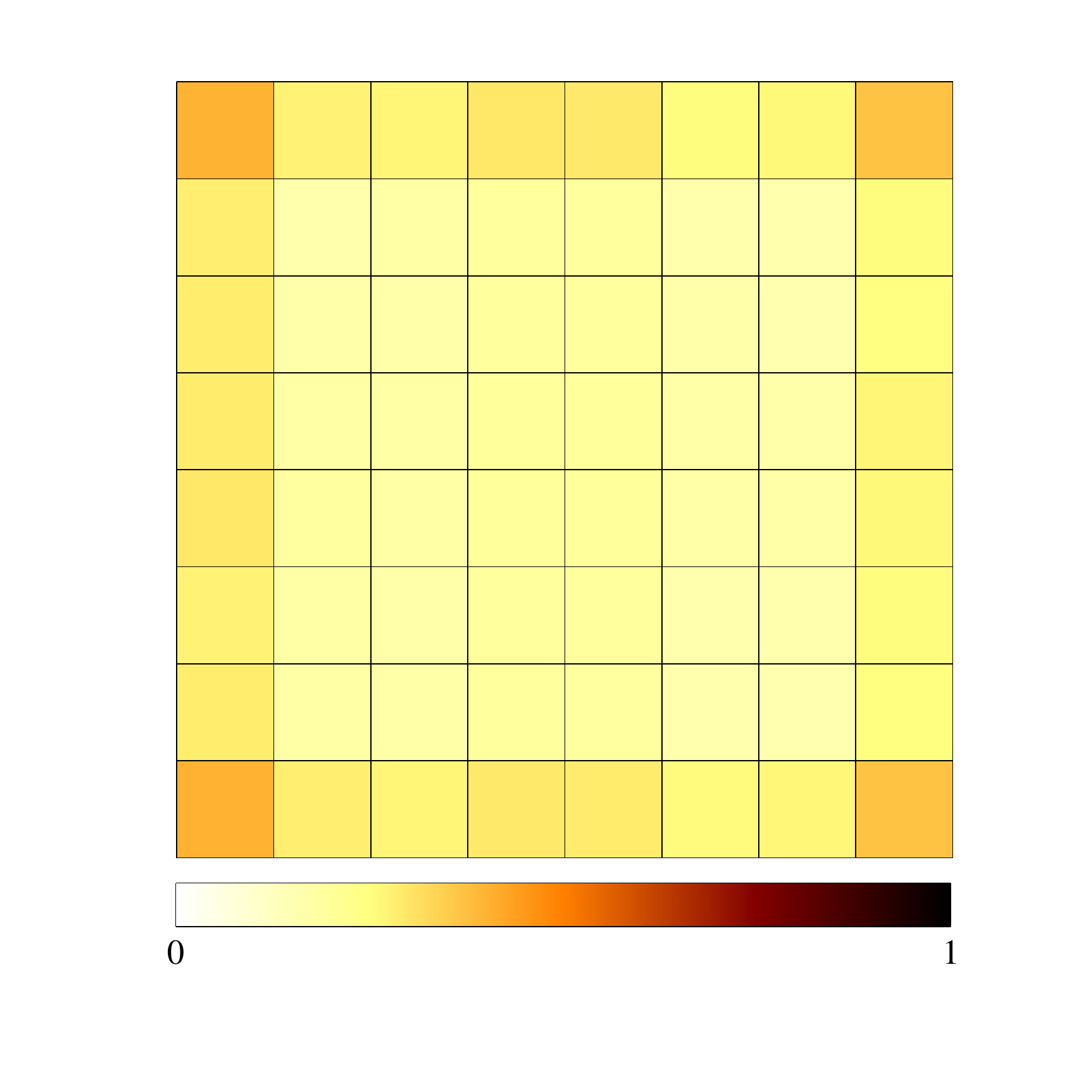}}\;
  \subfloat[Line graph weights for (top) rows  and (bottom) columns]{
  \begin{tabular}{@{}c@{}}
  \includegraphics[trim=55 200 50 185,clip,width=.18\textwidth,valign=c]{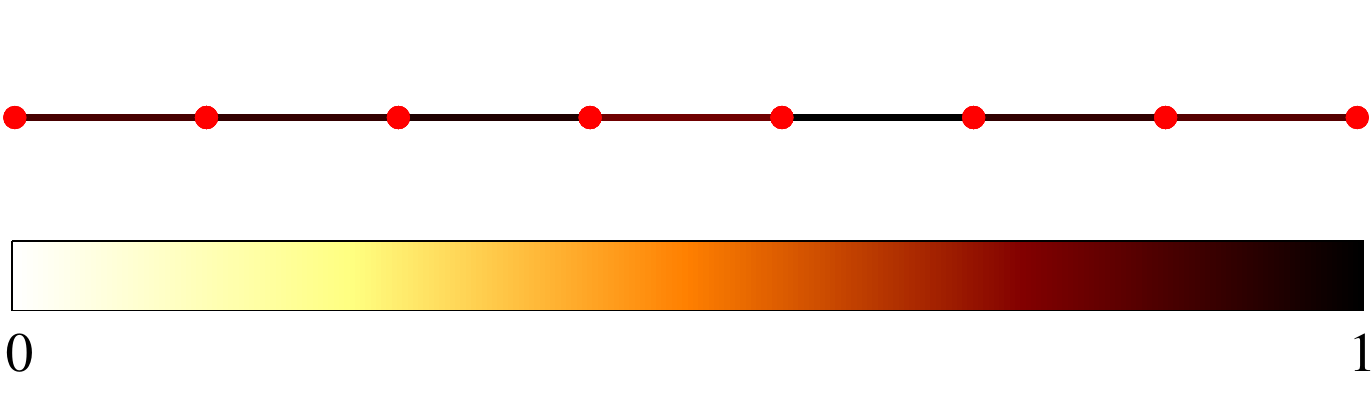} 
   \\  
    \includegraphics[trim=55 200 50 185,clip,width=.18\textwidth,valign=c]{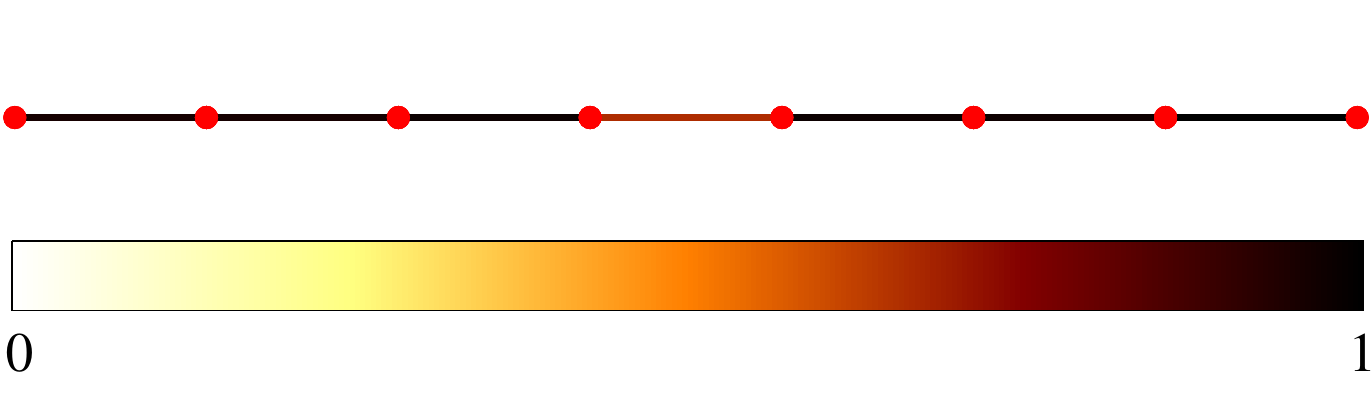} 
    \end{tabular}
  \begin{tabular}{@{}c@{}}
    \includegraphics[trim=55 195 40 185,clip,width=.18\textwidth,valign=c]{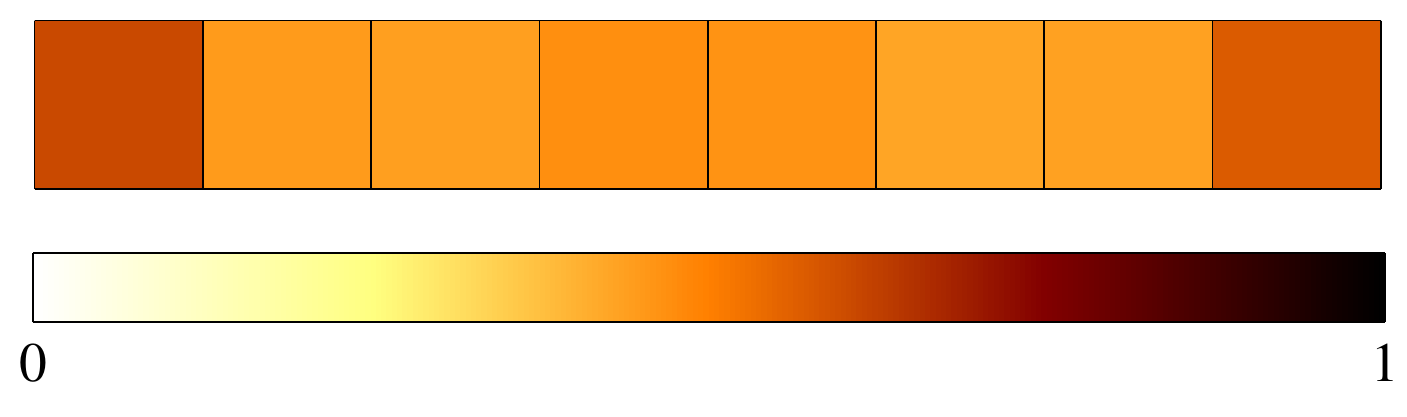}  
	\\
    \includegraphics[trim=55 195 40 185,clip,width=.18\textwidth,valign=c]{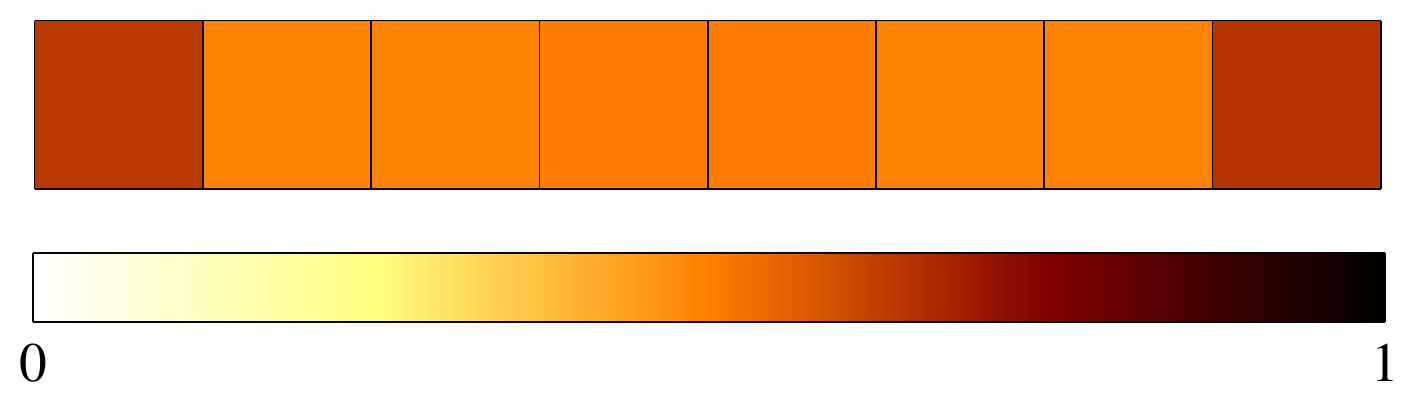}  
  \end{tabular}   
  \vphantom{\includegraphics[trim=55 33 50 20,clip,width=.18\textwidth,valign=c]{NonSep8Inter37_Graph-eps-converted-to.pdf}} 
  }
\caption{For the \emph{PU mode $N\times 2N$} in \emph{inter prediction} (a) shows the estimated sample variances of $8\times8$ residual signals.  In (b) and (c), edge and vertex weights are shown for grid and line graphs learned from residual data, respectively. Darker colors represent larger values.}
\label{fig:graph_weights_inter_rectangular}
\end{figure*}

\section{Experimental Results}
\label{sec:results} 
\subsection{{Experimental Setup}}
In our experiments, we generate two residual block datasets, one for training and the other for testing. The residual blocks {(samples of $\vr$ as shown in Fig.~\ref{fig:encoder_decoder} at the encoder side)} are collected by using the {encoder implemented in} HEVC reference software (HM version 14)\footnote{{HM-14 was the latest version at that time when datasets were generated and also used in our previous work \cite{egilmez:15:gbt_inter,egilmez:16:gbst}. HM-14 implements all normative functionalities of HEVC and the changes between HM-14 and the current latest version (HM-16) are incremental.}}\cite{Sullivan:12:hevc}{, where both datasets are created by encoding the  training and test video sequences} at four different quantization parameters {QPs} from $\{22,27,32,37\}$ {as specified in \cite{bossen:13:hevc_ctc}}.
For the training dataset, residual blocks are obtained by encoding five video sequences, \emph{City}, \emph{Crew}, \emph{Harbour}, \emph{Soccer} and \emph{Parkrun}, and  for the test dataset, we use another five video sequences, \emph{BasketballDrill}, \emph{BQMall}, \emph{Mobcal}, \emph{Shields} and \emph{Cactus}{\cite{Xiph:video_dataset,bossen:13:hevc_ctc}}. 
In both datasets, transform block sizes are restricted to $4 \times 4$, $8 \times 8$ and $16 \times 16$, and {each} residual block {is} classified {(labeled)} using the mode and block size information {determined} 
by the HM encoder {after RD optimization}. Specifically, intra predicted blocks are classified based on 35 intra prediction modes offered in HEVC. Similarly, inter predicted blocks are classified based on 7 different prediction unit (PU) partitions, such that 2 square PU partitions are grouped as one class and other 6 rectangular PU partitions determine other classes. Hence, datasets consist of $35\!+\!7\!=\!42$ classes in total.
For {each combination of 42 classes and block sizes ($4 \times 4$, $8 \times 8$ and $16 \times 16$)}, the optimal {GL-GBSTs}, {GL-GBNTs} and {nonseparable} KLTs are designed {offline} using the residual blocks in the training dataset. {As designing separate sets of transforms per QP 
does not improve the coding performance based on our experiments, a single set of transforms (per class and block size) is designed for all QPs without splitting the training dataset for each QP.} 
On the other hand, EA-GBTs are constructed {on a per-block basis} based on the detected image edges. The details of transform the construction are discussed in Sections \ref{sec:graph_learning_video} and \ref{sec:eagbt}.

{In order to solely evaluate the effect of transforms on coding performance, transform coding
is performed out of the HM encoding loop\footnote{{The simulation setups in  \cite{Fracastoro:2018:apprx_graph,Guleryuz:2015:SOT_TIP,hu:14:PWS,icip17_compression_challange} are a few examples where experiments are conducted out of the encoding loop. Our setup specifically decouples the transform coding part from the rest of the HM encoding loop.}}} 
{while retaining all the other coding decisions made by HM on partitioning, prediction and filtering.} 
{Note that all transforms are compared based on rate-distortion performance on the \textit{same} residual data. That is, we generate residual data first using HM and then compare the performance of DCT, KLT and GBT on those residuals.}
{In our experimental setup, the}
{transform coding first applies designated transforms (e.g., DCT or GBT) on the residual blocks,  {associated with a given QP in the test dataset}, 
then the resulting transform coefficients are uniformly quantized {using the same quantization step size derivation in HM (derived from QP)} and entropy coded using arithmetic coding \cite{Said:2004:arithmetic_tech_report} {to generate bitstreams}. 
{The proposed and baseline transforms are tested on mode-dependent transform (MDT) and rate-distortion optimized transform (RDOT) schemes. Specifically, the MDT scheme assigns a single GBT/KLT trained for each class/mode and block size.
The RDOT scheme selects the best transform from a predefined set of transforms, denoted as $\mathcal{T}$,} 
by minimizing the rate-distortion cost $J(\lambda_{\text{rd}}) = D + \lambda_{\text{rd}} R$  \cite{Ortega:98:rdm} where the multiplier $\lambda_{\text{rd}}=0.85\times 2^{(QP-12)/3}$ \cite{Sullivan:12:hevc} depends on the $QP$ parameter.
{In our RDOT experiments, all transform sets include DCT in addition to a GL-GBT (trained per-class and per-block size) or an EA-GBT (constructed per-block).} 
{Any side information needed to determine transforms (at the decoder side) are included {in the bitstreams} as signaling overhead. In the RDOT scheme, the transform index is signalled using truncated unary codes \cite{Bross:20:vvc10}\footnote{{Note that MDT does not require any signaling for transforms, because a single transform is available per-mode.}}.
If the encoder chooses} EA-GBT, the necessary graph (i.e., image edge) information is further sent by using the arithmetic edge encoder (AEC)  \cite{gene:2014:aec}.} 

The compression performance {across different bitrate operating points} is measured using Bjontegaard-delta rate (BD-rate)\cite{bd_metric} metric {by performing the transform coding experiments at QPs 22, 27, 32 and 37}.

\subsection{Compression Results}
 Table \ref{table:results_mdt_vs_rdot} presents the overall coding gains achieved by using KLTs, {GL-GBST}s and {GL-GBNT}s with MDT and RDOT schemes for intra and inter predicted blocks. According to the results, {GL-GBNT} outperforms KLT irrespective of the prediction type and coding scheme. Fig.~\ref{fig:generalization} further demonstrates the advantage of proposed approach over KLT when fewer number of training samples are available, where the performance difference between {GL-GBNT} and KLT is increased as the number of available training samples are reduced. Specifically, the BD-rate gap between {GL-GBNT} and KLT increases from 0.7\% to 1.5\% when twenty-percent of the training data is used instead of the complete data. This validates our observation that the proposed graph learning method leads to a more robust transform and provides a better generalization than KLT. Table \ref{table:results_mdt_vs_rdot} also shows that {GL-GBNT} performs substantially better than {GL-GBST} for coding intra predicted blocks, while for inter blocks {GL-GBST} performs slightly better than {GL-GBNT}. This is because, inter predicted residuals tend to have a separable structure as shown in Figs.~\ref{fig:graph_weights_inter_square} and \ref{fig:graph_weights_inter_rectangular}, 
 yet intra residuals have more directional structures as shown in Figs.~\ref{fig:graph_weights_intra_planar}, \ref{fig:graph_weights_intra_dc} and \ref{fig:graph_weights_intra_diagonal}, which are better captured by using non-separable transforms. Moreover, RDOT scheme significantly outperforms MDT, since RDOT has multiple transform candidates providing a better adaptation to different block characteristics with RD optimization, while MDT offers a single transform per-mode\footnote{In our RDOT experiments on intra predicted residuals, GL-GBNT is selected 62\% of the time against DCT for the diagonal mode, while GL-GBNT is selected 33\% of the time for the DC mode. 
 This indicates that per-mode block characteristics can significantly differ, where residuals obtained from certain modes (such as DC) can be better captured with DCT.}. {Naturally, this coding gain comes at a cost for encoders as they need to search for the additional transform mode, which increased the encoder run-time about 10\% and has no run-time impact at the decoder in our experiments.}
 
\begin{table}[!t]
\centering
\caption{Comparison of KLT, {GL-GBST} and {GL-GBNT} with MDT and RDOT schemes in terms of BD-rate ($\%$ bitrate reduction) with respect to the DCT. Smaller (negative) BD-rates mean better compression.}
\begin{tabular}{|c|c|c|c|c|}
\hline
\multirow{2}{*}{Transform} & \multicolumn{2}{c|}{Intra Prediction} & \multicolumn{2}{c|}{Inter Prediction}  \\ \cline{2-5}
& MDT & RDOT & MDT & RDOT  \\ \hline
KLT & ${-1.81}$ & ${-6.02}$ & ${-0.09}$  & ${ -3.28}$ \\ \hline
{GL-GBST} & $-1.16$ & ${-4.61}$ & $\mathbf{-0.25}$ & $\mathbf{-3.89}$ \\ \hline
{GL-GBNT} & $\mathbf{-2.04}$ & $\mathbf{-6.70}$ & ${-0.18}$ & ${-3.68}$ \\ \hline
\end{tabular}
\label{table:results_mdt_vs_rdot}
\end{table} 

\begin{figure}[!t]
\centering
 {\resizebox{0.48\textwidth}{!}{\input{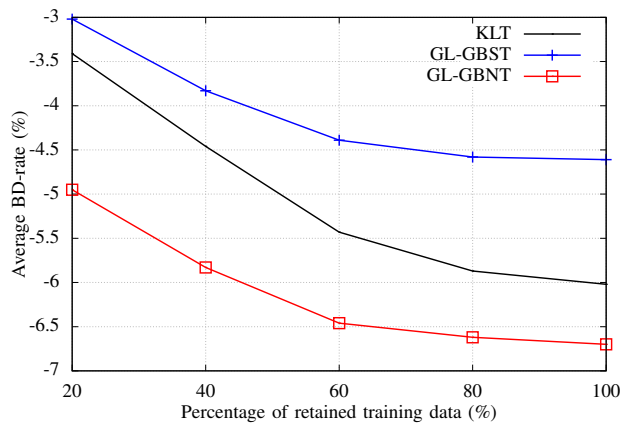}}}
\caption{BD-rates achieved for coding intra predicted blocks with the RDOT scheme based on KLT, {GL-GBST} and {GL-GBNT}, which are trained  on datasets with fewer number of samples. This experiment is conducted by randomly sampling 20\%, 40\%, 60\% and 80\% of the data from the original dataset and repeated 20 times to estimate average BD-rates. The BD-rates at 100\% correspond to the results in Table \ref{table:results_mdt_vs_rdot} where the complete training dataset is used.}
\label{fig:generalization}
\end{figure}

 Table \ref{table:results_gbt_vs_klt_modes} compares the RDOT coding performance of KLTs, {GL-GBST}s and {GL-GBNT}s on residual blocks with different prediction modes. In RDOT scheme the transform sets are $\mathcal{T}_{\text{KLT}}=\{\text{DCT},\text{KLT}\}$, $\mathcal{T}_{\text{{GL-GBST}}}=\{\text{DCT},\text{{GL-GBST}}\}$ and $\mathcal{T}_{\text{{GL-GBNT}}}=\{\text{DCT},\text{{GL-GBNT}}\}$, which consist of DCT and a trained transform for each mode and block size. The results show that {GL-GBNT} consistently outperforms KLT for all prediction modes. Similar to Table \ref{table:results_mdt_vs_rdot}, {GL-GBST} provides slightly better compression compared to KLT and {GL-GBST}. Also for angular modes (e.g., diagonal mode) in intra predicted coding, {GL-GBNT} significantly outperforms {GL-GBST} as expected.

 Table \ref{table:results_effect_of_eagbt} demonstrates the RDOT coding performance of EA-GBTs for different modes {by comparing against GL-GBT (corresponding to using GL-GBNT for intra and GL-GBST for inter predicted blocks in $\mathcal{T}_{\text{GL-GBT}} = \{ \text{DCT},\text{GL-GBT}\}$)}. As shown in the table, the contribution of EA-GBT within the transform set $\mathcal{T}_{\text{GL-GBT+EA-GBT}} = \{ \text{DCT},\text{GL-GBT},\text{EA-GBT}\}$ is limited to $0.3\%$ for intra predicted coding, while it is approximately $0.8\%$ for inter coding. On the other hand, if the transform set is selected as $\mathcal{T}_{\text{EA-GBT}} = \{ \text{DCT},\text{EA-GBT}\}$ the contribution of EA-GBT provides considerable coding gains, which are approximately $0.5\%$ for intra and $1\%$ for inter predicted coding. {Moreover, the percentages of transforms selected using RDOT with the sets  $\mathcal{T}_{\text{GL-GBT}}$,  $\mathcal{T}_{\text{EA-GBT}}$ and $\mathcal{T}_{\text{GL-GBT+EA-GBT}}$ are  presented in Table \ref{table:percentage}. 
 The most frequently used transforms are GL-GBT and  DCT for intra and inter predicted residuals, respectively. Besides, EA-GBT in $\mathcal{T}_{\text{EA-GBT}}$ is used about 10\% of the blocks for both intra and inter predicted residuals, and it is less frequently used within the set $\mathcal{T}_{\text{GL-GBT+EA-GBT}}$. Figure \ref{fig:sample_frames} further illustrates the transform selection from $\mathcal{T}_{\text{GL-GBT+EA-GBT}}$ 
 in two sample frames of the \emph{BasketballDrill} sequence, where EA-GBT is selected in blocks with sharp discontinuities and GL-GBTs are often selected for blocks with more directional structures as compared to DCT.
 }

\section{Conclusions}
 \label{sec:conclusion}
In this work, we discuss the class of transforms, called graph-based transforms (GBTs), with their applications to video compression. 
In particular, separable and nonseparable GBTs
are introduced and two different design strategies are proposed. Firstly, the GBT design problem is posed as a graph learning problem, where we estimate graphs from data and the resulting graphs are used to define GBTs (GL-GBTs).
Secondly, we propose edge-adaptive GBTs (EA-GBTs) 
which can be adapted on a per-block basis using side-information (image edges in a given block). We also give theoretical justifications for these two strategies and show that well-known transforms such as DCTs and DSTs are special cases of GBTs, and graphs can be used to design generalized (e.g., DCT-like or DST-like) separable transforms.  
Our experiments demonstrate that GL-GBTs can provide considerable coding gains with respect to standard transform coding schemes using DCT. In comparison with the Karhunen-Loeve transform (KLT), GL-GBTs are more robust and provide better generalization. 
Although coding gains obtained by including EA-GBTs 
in addition to GL-GBTs in the RDOT scheme are limited, using EA-GBTs only provides considerable coding gains over DCT. 

{Future work includes the development of GBTs on state-of-the-art video codecs. Along this line of research, Egilmez \emph{et.~al.~}\cite{egilmez:2020:paramteric_gbst_arxiv} have recently introduced a practical implementation of GL-GBTs by replacing the DST-7 and DCT-8 in multiple transform selection (MTS) scheme of VVC \cite{Bross:20:vvc10} with separable GBTs\footnote{{The work in \cite{egilmez:2020:paramteric_gbst_arxiv} was submitted and accepted to the IEEE International Conference on Image Processing \cite{egilmez:2020:paramteric_gbst_icip} while the present paper is under review.}}. The experiments conducted on the VVC reference software (VTM) show that considerable coding improvements (about 0.4\% in BD-rate reduction on average) can be achieved without any impact on encoder and decoder run-times. Thus, the findings in \cite{egilmez:2020:paramteric_gbst_arxiv} further validate our empirical and theoretical results in this paper and demonstrate the relevance of GBTs in a state-of-the-art codec such as VVC.}

\begin{table*}[!ht]
\centering
\caption{Comparison of KLT, {GL-GBST} and {GL-GBNT} for coding of different prediction modes in terms of BD-rate with respect to the DCT. Smaller (negative) BD-rates mean better compression.}
\resizebox{\textwidth}{!}{
\begin{tabular}{|c|C{1.7cm}|C{1.7cm}|C{1.7cm}|C{1.7cm}|C{1.7cm}|C{1.7cm}|C{1.9cm}|C{1.7cm}|}
\hline
\multirow{2}{*}{Transform Set} & \multicolumn{5}{c|}{Intra Prediction} & \multicolumn{3}{c|}{Inter Prediction} \\ 
\cline{2-9}
& {Planar} &{DC} & {Diagonal} &{Horizontal} & {All modes} &  {Square} & {Rectangular} & {All modes} \\ 
\hline
$\mathcal{T}_{\text{KLT}}$ & ${-5.79}$ &${-4.57}$ & ${-7.68}$& ${-6.14}$ & ${-6.02}$ &  ${-3.47}$ & ${-2.93}$ & ${-3.35}$ \\ 
\hline
$\mathcal{T}_{\text{{GL-GBST}}}$ & ${-5.45}$ &${-4.12}$ & ${-3.32}$ & ${-6.45}$ & ${-4.61}$ &  $\mathbf{-3.95}$ & $\mathbf{-3.25}$ & $\mathbf{-3.89}$ \\ 
\hline
$\mathcal{T}_{\text{{GL-GBNT}}}$ & $\mathbf{-6.27}$ & $\mathbf{-5.04}$ & $\mathbf{-8.74}$ & $\mathbf{-6.53}$ & $\mathbf{-6.70}$ &  ${-3.84}$ & ${-3.18}$ & ${-3.68}$ \\ 
\hline
\end{tabular}
}
\label{table:results_gbt_vs_klt_modes}
\end{table*}

\begin{table*}[!ht]
\centering
\caption{The contribution of GL-GBTs and EA-GBTs in terms of BD-rate with respect to the DCT. {GL-GBT corresponds to using GL-GBNT for intra blocks and GL-GBST for inter blocks.}}
\resizebox{\textwidth}{!}
{
\begin{tabular}{|c|C{1.7cm}|C{1.7cm}|C{1.7cm}|C{1.7cm}|C{1.7cm}|C{1.7cm}|C{1.9cm}|C{1.7cm}|}
\hline
\multirow{2}{*}{Transform Set} & \multicolumn{5}{c|}{Intra Prediction} & \multicolumn{3}{c|}{Inter Prediction} \\ 
\cline{2-9}
& {Planar} &{DC} & {Diagonal} &{Horizontal} & {All modes} &  {Square} & {Rectangular} & {All modes} \\ 
\hline
$\mathcal{T}_{\text{GL-GBT}}$ & ${-6.27}$ & ${-5.04}$ & ${-8.74}$ & ${-6.53}$ & ${-6.70}$ &  ${-3.95}$ & ${-3.25}$ & ${-3.89}$ \\ 
\hline
$\mathcal{T}_{\text{EA-GBT}}$ & ${-0.51}$ & ${-0.47}$ & ${-0.69}$ & ${-0.66}$ & ${-0.54}$ &  ${-1.01}$ & ${-0.73}$ & ${-0.93}$ \\ 
\hline
$\mathcal{T}_{\text{GL-GBT+EA-GBT}}$ & ${-6.58}$ & ${-5.34}$ & ${-9.07}$ & ${-6.89}$ & ${-7.01}$ &  ${-4.80}$ & ${-3.65}$ & ${-4.73}$ \\ 
\hline
\end{tabular}
}
\label{table:results_effect_of_eagbt}
\end{table*}

\begin{table*}[!ht]
\centering
\caption{{Percentage of transforms selected among transform coded blocks by the RDOT scheme with different transform sets. GL-GBT corresponds to using GL-GBNT for intra blocks and GL-GBST for inter blocks.}}
\resizebox{0.6\textwidth}{!}
{
\begin{tabular}{|c|C{1.2cm}|C{1.2cm}|C{1.2cm}|C{1.2cm}|C{1.2cm}|C{1.2cm}|}
\hline
\multirow{2}{*}{Transform Set} & \multicolumn{3}{c|}{Intra Prediction} & \multicolumn{3}{c|}{Inter Prediction} \\ 
\cline{2-7}
& {DCT} &{GL-GBT} & {EA-GBT} &  {DCT} &{GL-GBT} & {EA-GBT} \\ 
\hline
$\mathcal{T}_{\text{GL-GBT}}$         & $42\%$ & $58\%$ & $-$ &  $73\%$ & $27\%$ & $-$\\ 
\hline
$\mathcal{T}_{\text{EA-GBT}}$         & $89\%$ & $-$ & $11\%$ &  $87\%$ & $-$ & $13\%$ \\ 
\hline
$\mathcal{T}_{\text{GL-GBT+EA-GBT}}$  & $39\%$ & $57\%$ & $4\%$ &  $66\%$ & $24\%$ & $10\%$ \\ 
\hline
\end{tabular}
}
\label{table:percentage}
\end{table*}

\begin{figure*}[htbp!]
\centering
\subfloat[All-intra coded frame (I-frame) \label{fig:sample_frame_intra}]{
{\includegraphics[clip,width=.45\textwidth]{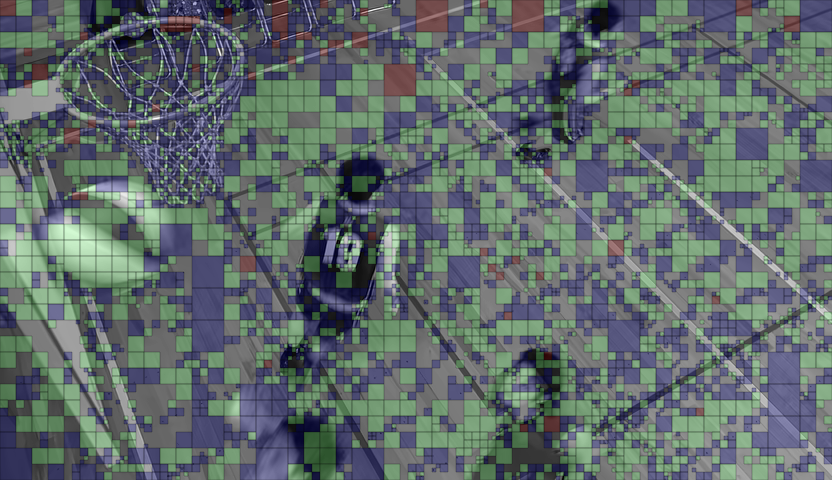}}}
\subfloat[Both intra and inter coded frame (B-frame)\label{fig:sample_frame_inter}]{
{\includegraphics[clip,width=.45\textwidth]{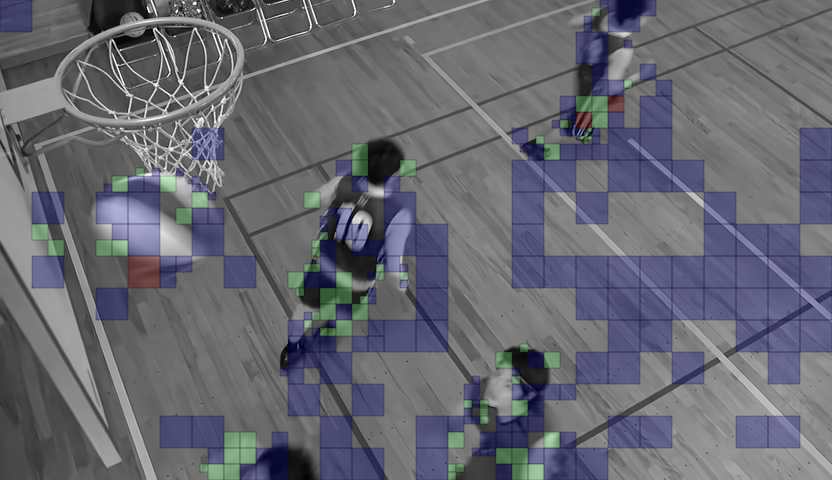}}}
\caption{{Illustration of selected transforms from the set $\mathcal{T}_{\text{GL-GBT+EA-GBT}} = \{ \text{DCT},\text{GL-GBT},\text{EA-GBT}\}$ by the RDOT scheme for each transform coded block. DCT, GL-GBT and EA-GBT are highlighted in purple, green and red, respectively. Most of the blocks in (b) are not transform coded (i.e., transform step is skipped) due to precise inter prediction.}}
\label{fig:sample_frames}
\end{figure*}


\ifCLASSOPTIONcaptionsoff
  \newpage
\fi




%
\bibliographystyle{IEEEtran}
\bibliography{refs.bib}

\begin{thebibliography}{10}
\providecommand{\url}[1]{#1}
\csname url@samestyle\endcsname
\providecommand{\newblock}{\relax}
\providecommand{\bibinfo}[2]{#2}
\providecommand{\BIBentrySTDinterwordspacing}{\spaceskip=0pt\relax}
\providecommand{\BIBentryALTinterwordstretchfactor}{4}
\providecommand{\BIBentryALTinterwordspacing}{\spaceskip=\fontdimen2\font plus
\BIBentryALTinterwordstretchfactor\fontdimen3\font minus
  \fontdimen4\font\relax}
\providecommand{\BIBforeignlanguage}[2]{{%
\expandafter\ifx\csname l@#1\endcsname\relax
\typeout{** WARNING: IEEEtran.bst: No hyphenation pattern has been}%
\typeout{** loaded for the language `#1'. Using the pattern for}%
\typeout{** the default language instead.}%
\else
\language=\csname l@#1\endcsname
\fi
#2}}
\providecommand{\BIBdecl}{\relax}
\BIBdecl

\bibitem{egilmez:15:gbt_inter}
H.~E. Egilmez, A.~Said, Y.-H. Chao, and A.~Ortega, ``Graph-based transforms for
  inter predicted video coding,'' in \emph{IEEE International Conference on
  Image Processing (ICIP)}, Sept 2015, pp. 3992--3996.

\bibitem{egilmez:16:gbst}
H.~E. Egilmez, Y.~H. Chao, A.~Ortega, B.~Lee, and S.~Yea, ``{GBST}: Separable
  transforms based on line graphs for predictive video coding,'' in \emph{2016
  IEEE International Conference on Image Processing (ICIP)}, Sept 2016, pp.
  2375--2379.

\bibitem{hegilmez_phd_thesis}
H.~E. Egilmez, ``Graph-based models and transforms for signal/data processing
  with applications to video coding,'' Ph.D. dissertation, University of
  Southern California, 2017.

\bibitem{jessie_phd_thesis}
Y.-H. Chao, ``Compression of signals on graphs with application to image and
  video coding,'' Ph.D. dissertation, University of Southern California, 2017.

\bibitem{JPEG:1992:Pennebaker}
W.~B. Pennebaker and J.~L. Mitchell, \emph{JPEG Still Image Data Compression
  Standard}, 1st~ed.\hskip 1em plus 0.5em minus 0.4em\relax Norwell, MA, USA:
  Kluwer Academic Publishers, 1992.

\bibitem{Sullivan:12:hevc}
G.~J. Sullivan, J.-R. Ohm, W.-J. Han, and T.~Wiegand, ``Overview of the {High
  Efficiency Video Coding (HEVC) Standard},'' \emph{{IEEE} Trans. Circuits
  Syst. Video Technol.}, vol.~22, no.~12, pp. 1649--1668, Dec. 2012.

\bibitem{Mukherjee:13:vp9}
D.~Mukherjee, J.~Bankoski, R.~S. Bultje, A.~Grange, J.~Han, J.~Koleszar,
  P.~Wilkins, and Y.~Xu, ``The latest open-source video codec {VP9} -- an
  overview and preliminary results,'' in \emph{Proc. 30th Picture Coding
  Symp.}, San Jose, {CA}, Dec. 2013.

\bibitem{av1_standard}
Y.~{Chen}, D.~{Murherjee}, J.~{Han}, A.~{Grange}, Y.~{Xu}, Z.~{Liu},
  S.~{Parker}, C.~{Chen}, H.~{Su}, U.~{Joshi}, C.~{Chiang}, Y.~{Wang},
  P.~{Wilkins}, J.~{Bankoski}, L.~{Trudeau}, N.~{Egge}, J.~{Valin},
  T.~{Davies}, S.~{Midtskogen}, A.~{Norkin}, and P.~{de Rivaz}, ``An overview
  of core coding tools in the {AV1} video codec,'' in \emph{2018 Picture Coding
  Symposium (PCS)}, June 2018, pp. 41--45.

\bibitem{Bross:20:vvc10}
B.~Bross, J.~Chen, S.~Liu, and Y.-K. Wang, ``Versatile video coding (draft
  10),'' Joint Video Exploration Team (JVET) of ITU-T~SG16~WP3 and
  ISO/IEC~JTC~1/SC29/WG11, Output Document {JVET-S2001}, Jul. 2020.

\bibitem{Han:12:hybrid}
J.~Han, A.~Saxena, V.~Melkote, and K.~Rose, ``Jointly optimized spatial
  prediction and block transform for video and image coding,'' \emph{{IEEE}
  Trans. Image Process.}, vol.~21, no.~4, pp. 1874--1884, Apr. 2012.

\bibitem{Karczewicz:08:mddt}
Y.~Ye and M.~Karczewicz, ``Improved {H.264} intra coding based on
  bi-directional intra prediction, directional transform, and adaptive
  coefficient scanning,'' in \emph{IEEE International Conference on Image
  Processing (ICIP)}, Oct 2008, pp. 2116--2119.

\bibitem{Takamura:13:intra_mddt}
S.~Takamura and A.~Shimizu, ``On intra coding using mode dependent {2D-KLT},''
  in \emph{Proc. 30th Picture Coding Symp.}, San Jose, {CA}, Dec. 2013, pp.
  137--140.

\bibitem{Arrufat:2014:intra_mddt}
A.~Arrufat, P.~Philippe, and O.~Deforges, ``Non-separable mode dependent
  transforms for intra coding in {HEVC},'' in \emph{2014 IEEE Visual
  Communications and Image Processing Conference}, Dec 2014, pp. 61--64.

\bibitem{oscar_au:13:rdo_loyd-type}
F.~Zou, O.~Au, C.~Pang, J.~Dai, X.~Zhang, and L.~Fang, ``Rate-distortion
  optimized transforms based on the lloyd-type algorithm for intra block
  coding,'' \emph{IEEE Journal of Selected Topics in Signal Processing},
  vol.~7, no.~6, pp. 1072--1083, Dec 2013.

\bibitem{xin:2016:nnst}
X.~Zhao, J.~Chen, A.~Said, V.~Seregin, H.~E. Egilmez, and M.~Karczewicz,
  ``{NSST}: Non-separable secondary transforms for next generation video
  coding,'' in \emph{2016 Picture Coding Symposium (PCS 2016)}, December 2016.

\bibitem{said:2016:hygt}
A.~{Said}, X.~{Zhao}, M.~{Karczewicz}, H.~E. {Egilmez}, V.~{Seregin}, and
  J.~{Chen}, ``Highly efficient non-separable transforms for next generation
  video coding,'' in \emph{2016 Picture Coding Symposium (PCS)}, 2016.

\bibitem{Zhao:2018:joint_primary_secondary}
X.~Zhao, J.~Chen, M.~Karczewicz, A.~Said, and V.~Seregin, ``Joint separable and
  non-separable transforms for next-generation video coding,'' \emph{IEEE
  Transactions on Image Processing}, vol.~27, no.~5, pp. 2514--2525, May 2018.

\bibitem{Pierrick:2018:transform_inter}
T.~{Biatek}, V.~{Lorcy}, and P.~{Philippe}, ``Transform competition for
  temporal prediction in video coding,'' \emph{IEEE Transactions on Circuits
  and Systems for Video Technology}, vol.~29, no.~3, pp. 815--826, 2019.

\bibitem{chao:16:eagbt_step_ramp}
Y.~H. Chao, H.~E. Egilmez, A.~Ortega, S.~Yea, and B.~Lee, ``Edge adaptive
  graph-based transforms: Comparison of step/ramp edge models for video
  compression,'' in \emph{2016 IEEE International Conference on Image
  Processing (ICIP)}, Sept 2016, pp. 1539--1543.

\bibitem{egilmez:2019:ce6_report}
X.~Zhao and H.~E. Egilmez, ``{CE6}: Summary report on transforms and transform
  signaling,'' Joint Video Exploration Team (JVET) of ITU-T~SG16~WP3 and
  ISO/IEC~JTC~1/SC29/WG11, Geneva, {CH}, Input Document {JVET-N0026}, Mar.
  2019.

\bibitem{xin:2016:emt}
X.~Zhao, J.~Chen, M.~Karczewicz, L.~Zhang, X.~Li, and W.~J. Chien, ``Enhanced
  multiple transform for video coding,'' in \emph{2016 Data Compression
  Conference (DCC)}, March 2016, pp. 73--82.

\bibitem{johnstone2009consistency}
I.~M. Johnstone and A.~Y. Lu, ``On consistency and sparsity for principal
  components analysis in high dimensions,'' \emph{Journal of the American
  Statistical Association}, vol. 104, no. 486, pp. 682--693, 2009.

\bibitem{ravikumar:2011:bounds_inverse_cov_estimation}
P.~Ravikumar, M.~Wainwright, B.~Yu, and G.~Raskutti, ``High dimensional
  covariance estimation by minimizing l1-penalized log-determinant
  divergence,'' \emph{Electronic Journal of Statistics (EJS)}, vol.~5, pp.
  935--980, 2011.

\bibitem{friedman:2007:glasso}
J.~Friedman, T.~Hastie, and R.~Tibshirani, ``{Sparse inverse covariance
  estimation with the graphical lasso},'' \emph{Biostatistics}, vol.~9, no.~3,
  pp. 432--441, Jul. 2008.

\bibitem{egilmez:2017:gl_from_data_jstsp}
H.~E. Egilmez, E.~Pavez, and A.~Ortega, ``Graph learning from data under
  {Laplacian} and structural constraints,'' \emph{IEEE Journal of Selected
  Topics in Signal Processing}, vol.~11, no.~6, pp. 825--841, Sept 2017.

\bibitem{egilmez:2019:gsi_tsipn}
H.~E. {Egilmez}, E.~{Pavez}, and A.~{Ortega}, ``Graph learning from filtered
  signals: Graph system and diffusion kernel identification,'' \emph{IEEE
  Transactions on Signal and Information Processing over Networks}, vol.~5,
  no.~2, pp. 360--374, June 2019.

\bibitem{Florencio:2013:PTC}
C.~Zhang and D.~Florencio, ``Analyzing the optimality of predictive transform
  coding using graph-based models,'' \emph{IEEE Signal Process. Lett.},
  vol.~20, no.~1, pp. 106--109, 2013.

\bibitem{microsoft:15:gsp_prob_framework}
\BIBentryALTinterwordspacing
C.~Zhang, D.~Florencio, and P.~Chou, ``Graph signal processing - a
  probabilistic framework,'' Microsoft Research, Tech. Rep. MSR-TR-2015-31,
  April 2015. [Online]. Available: \url{http://research.
  microsoft.com/apps/pubs/default.aspx?id=243326}
\BIBentrySTDinterwordspacing

\bibitem{pavez:2015:gtt_paper}
E.~Pavez, H.~E. Egilmez, Y.~Wang, and A.~Ortega, ``{GTT}: Graph template
  transforms with applications to image coding,'' in \emph{Picture Coding
  Symposium (PCS), 2015}, May 2015, pp. 199--203.

\bibitem{Fracastoro:2018:apprx_graph}
G.~{Fracastoro}, D.~{Thanou}, and P.~{Frossard}, ``Graph transform optimization
  with application to image compression,'' \emph{IEEE Transactions on Image
  Processing}, vol.~29, pp. 419--432, 2020.

\bibitem{Shen:10:eat}
G.~Shen, W.-S. Kim, S.~Narang, A.~Ortega, J.~Lee, and H.~Wey, ``Edge-adaptive
  transforms for efficient depth map coding,'' in \emph{Picture Coding
  Symposium (PCS), 2010}, Dec 2010, pp. 566--569.

\bibitem{hu:14:PWS}
W.~{Hu}, G.~{Cheung}, A.~{Ortega}, and O.~C. {Au}, ``Multiresolution graph
  fourier transform for compression of piecewise smooth images,'' \emph{IEEE
  Transactions on Image Processing}, vol.~24, no.~1, pp. 419--433, Jan 2015.

\bibitem{Chung:1997:SGT}
F.~R.~K. Chung, \emph{Spectral Graph Theory}.\hskip 1em plus 0.5em minus
  0.4em\relax USA: American Mathematical Society, 1997.

\bibitem{Koller:2009:PGM}
D.~Koller and N.~Friedman, \emph{Probabilistic Graphical Models: Principles and
  Techniques - Adaptive Computation and Machine Learning}.\hskip 1em plus 0.5em
  minus 0.4em\relax Cambridge, MA, USA: The MIT Press, 2009.

\bibitem{rue:2005:GMRF}
H.~Rue and L.~Held, \emph{Gaussian {M}arkov Random Fields: {T}heory and
  Applications}, ser. Monographs on Statistics and Applied Probability.\hskip
  1em plus 0.5em minus 0.4em\relax London: Chapman \& Hall, 2005, vol. 104.

\bibitem{Jain:1989:FDI}
A.~K. Jain, \emph{Fundamentals of Digital Image Processing}.\hskip 1em plus
  0.5em minus 0.4em\relax Upper Saddle River, NJ, USA: Prentice-Hall, Inc.,
  1989.

\bibitem{Tekalp:2015:DVP}
A.~M. Tekalp, \emph{Digital Video Processing}, 2nd~ed.\hskip 1em plus 0.5em
  minus 0.4em\relax Upper Saddle River, NJ, USA: Prentice Hall Press, 2015.

\bibitem{Woodbury:1950:MIL}
M.~A. Woodbury, \emph{{Inverting Modified Matrices}}, ser. Statistical Research
  Group Memorandum Reports.\hskip 1em plus 0.5em minus 0.4em\relax Princeton,
  NJ: Princeton University, 1950, no.~42.

\bibitem{Strang:1999:DCT}
G.~Strang, ``The discrete cosine transform,'' \emph{SIAM Rev.}, vol.~41, no.~1,
  pp. 135--147, Mar. 1999.

\bibitem{gene:15:GGL}
W.~Hu, G.~Cheung, and A.~Ortega, ``Intra-prediction and generalized graph
  {F}ourier transform for image coding,'' \emph{IEEE Signal Processing
  Letters}, vol.~22, no.~11, pp. 1913--1917, Nov 2015.

\bibitem{Moura_Pueschel:08:algebraic_signal_proc_space}
M.~P{\"u}schel and J.~M.~F. Moura, ``Algebraic signal processing theory: {1-D}
  space,'' \emph{IEEE Transactions on Signal Processing}, vol.~56, no.~8, pp.
  3586--3599, 2008.

\bibitem{GLL_package:v1.0}
H.~E. Egilmez, E.~Pavez, and A.~Ortega, ``{GLL}: Graph {Laplacian} learning
  package, version 1.0,'' \url{https://github.com/STAC-USC/Graph\_Learning},
  2017.

\bibitem{egilmez:2016:rct}
H.~E. Egilmez, O.~G. Guleryuz, J.~Ehmann, and S.~Yea, ``Row-column transforms:
  Low-complexity approximation of optimal non-separable transforms,'' in
  \emph{2016 IEEE International Conference on Image Processing (ICIP)}, Sept
  2016, pp. 2385--2389.

\bibitem{Pavez:2017:joint_gbst}
E.~{Pavez}, A.~{Ortega}, and D.~{Mukherjee}, ``Learning separable transforms by
  inverse covariance estimation,'' in \emph{2017 IEEE International Conference
  on Image Processing (ICIP)}, 2017, pp. 285--289.

\bibitem{egilmez:2020:paramteric_gbst_arxiv}
\BIBentryALTinterwordspacing
H.~E. Egilmez, O.~Teke, A.~Said, V.~Seregin, and M.~Karczewicz, ``Parametric
  graph-based separable transforms for video coding,'' \emph{CoRR}, vol.
  abs/arXiv:1911.06981, 2019. [Online]. Available:
  \url{https://arxiv.org/abs/1911.06981}
\BIBentrySTDinterwordspacing

\bibitem{Said:19:taf}
A.~{Said}, H.~E. {Egilmez}, and Y.~H. {Chao}, ``Low-complexity transform
  adjustments for video coding,'' in \emph{2019 IEEE International Conference
  on Image Processing (ICIP)}, Sep. 2019, pp. 1188--1192.

\bibitem{GershoGray:1991:VQ_book}
A.~Gersho and R.~M. Gray, \emph{Vector Quantization and Signal
  Compression}.\hskip 1em plus 0.5em minus 0.4em\relax Norwell, MA, USA: Kluwer
  Academic Publishers, 1991.

\bibitem{goyal:2001:TC_Magazine}
V.~K. Goyal, ``Theoretical foundations of transform coding,'' \emph{IEEE Signal
  Processing Magazine}, vol.~18, no.~5, pp. 9--21, Sep 2001.

\bibitem{Mallat:2008:WaveletTour}
S.~Mallat, \emph{A Wavelet Tour of Signal Processing, Third Edition: The Sparse
  Way}, 3rd~ed.\hskip 1em plus 0.5em minus 0.4em\relax Academic Press, 2008.

\bibitem{vapnik:1999:slt}
V.~N. Vapnik, ``An overview of statistical learning theory,'' \emph{IEEE
  Transactions on Neural Networks}, vol.~10, no.~5, pp. 988--999, Sep 1999.

\bibitem{loxburg:2011:slt}
U.~von Luxburg and B.~Sch{\"o}lkopf, \emph{Statistical Learning Theory: Models,
  Concepts, and Results}.\hskip 1em plus 0.5em minus 0.4em\relax Amsterdam,
  Netherlands: Elsevier North Holland, May 2011, vol.~10, pp. 651--706.

\bibitem{vershynin:2012:how_close_covariance}
R.~Vershynin, ``How close is the sample covariance matrix to the actual
  covariance matrix?'' \emph{Journal of Theoretical Probability}, vol.~25,
  no.~3, pp. 655--686, 2012.

\bibitem{gene:2014:aec}
I.~Daribo, D.~Florencio, and G.~Cheung, ``Arbitrarily shaped motion prediction
  for depth video compression using arithmetic edge coding,'' \emph{IEEE
  Transactions on Image Processing}, vol.~23, no.~11, pp. 4696--4708, Nov 2014.

\bibitem{effros:2004:suboptimal_klt}
M.~Effros, H.~Feng, and K.~Zeger, ``Suboptimality of the {K}arhunen-{L}oeve
  transform for transform coding,'' \emph{IEEE Transactions on Information
  Theory}, vol.~50, no.~8, pp. 1605--1619, Aug 2004.

\bibitem{Cover:1991:EIT}
T.~M. Cover and J.~A. Thomas, \emph{Elements of Information Theory}.\hskip 1em
  plus 0.5em minus 0.4em\relax New York, NY, USA: Wiley-Interscience, 1991.

\bibitem{Xiph:video_dataset}
\BIBentryALTinterwordspacing
{Xiph.Org Foundation}, ``Xiph.org video test media.'' [Online]. Available:
  \url{https://media.xiph.org/video/derf/}
\BIBentrySTDinterwordspacing

\bibitem{bossen:13:hevc_ctc}
F.~Bossen, ``{JVET} common test conditions and software reference
  configurations for {SDR} video,'' Common test conditions and software
  reference configurations, Geneva, {CH}, Output Document {JCTVC-L1100}, Jan.
  2013.

\bibitem{Guleryuz:2015:SOT_TIP}
O.~Sezer, O.~Guleryuz, and Y.~Altunbasak, ``Approximation and compression with
  sparse orthonormal transforms,'' \emph{IEEE Transactions on Image
  Processing}, vol.~24, no.~8, pp. 2328--2343, Aug 2015.

\bibitem{icip17_compression_challange}
\BIBentryALTinterwordspacing
{WebM Project}, ``Grand challenge on the use of image restoration for video
  coding efficiency improvement - {ICIP 2017}.'' [Online]. Available:
  \url{https://webmproject.github.io/icip-2017/}
\BIBentrySTDinterwordspacing

\bibitem{Said:2004:arithmetic_tech_report}
A.~Said, ``{Introduction to Arithmetic Coding: theory and practice},'' HP Labs,
  Tech. Rep. HPL-2004-76, 2004.

\bibitem{Ortega:98:rdm}
A.~Ortega and K.~Ramchandran, ``Rate-distortion methods for image and video
  compression,'' \emph{{IEEE} Signal Process. Mag.}, vol.~15, no.~6, pp.
  23--50, Nov. 1998.

\bibitem{bd_metric}
G.~Bjontegaard, ``Calculation of average {PSNR} differences between
  {RD}-curves,'' \emph{ITU-T Q.6/SG16 VCEG-M33 Contribution}, vol.~48, Apr
  2001.

\bibitem{egilmez:2020:paramteric_gbst_icip}
H.~E. Egilmez, O.~Teke, A.~Said, V.~Seregin, and M.~Karczewicz, ``Parametric
  graph-based separable transforms for video coding,'' in \emph{2020 IEEE
  International Conference on Image Processing (ICIP)}, Oct. 2020.

\end{thebibliography}

%
%
%








\end{document}